\renewcommand{\algorithmiccomment}[1]{\bgroup\hfill\small\textcolor{gray}{//~#1}\egroup}
\renewcommand{\ALG@name}{Mechanism}
\def\showauthnotes{1}
	\newcommand{\authnote}[2]{{ \footnotesize \bf{[#1: #2]~}}}
\newcommand{\authnote}[2]{} }
\newtheorem{theorem}{Theorem}[section]
\newtheorem{lemma}[theorem]{Lemma}
\newtheorem{claim}[theorem]{Claim}
\newtheorem{fact}[theorem]{Fact}
\newtheorem{proposition}[theorem]{Proposition}
\newtheorem*{claim*}{Claim}
\theoremstyle{definition}
\newtheorem{definition}[theorem]{Definition}
\newcommand{\mms}{\ensuremath{\textrm{\MakeUppercase{mms}}}\xspace}
\newcommand{\pmms}{\ensuremath{\textrm{\MakeUppercase{pmms}}}\xspace}
\newcommand{\efo}{\ensuremath{\textrm{\MakeUppercase{ef}1}}\xspace}
\newcommand{\efx}{\ensuremath{\textrm{\MakeUppercase{efx}}}\xspace}
\newcommand{\ef}{\ensuremath{\textrm{\MakeUppercase{ef}}}\xspace}
\DeclareMathOperator*{\argmin}{arg\,min}
\DeclareMathOperator*{\argmax}{arg\,max}
\begin{document}
	
	\title{Round-Robin Beyond Additive Agents:\\ Existence and Fairness of Approximate Equilibria\thanks{\,This work was supported by the ERC Advanced 
			Grant 788893 AMDROMA ``Algorithmic and Mechanism Design Research in 
			Online Markets'', the MIUR PRIN project ALGADIMAR ``Algorithms, Games, 
			and Digital Markets'', and the NWO Veni project No.~VI.Veni.192.153.}
	}

	\author{Georgios Amanatidis}
	\affil[1]{Department of Mathematical Sciences\protect\\ University of Essex; Colchester, UK}
	\affil[ ]{{\textsf{\href{mailto:georgios.amanatidis@essex.ac.uk}{georgios.amanatidis}@essex.ac.uk}}\medskip}
	\author{Georgios Birmpas}
	\author[3]{Philip Lazos}
	\author{\\ Stefano Leonardi}
	\author[4]{Rebecca Reiffenh{\"a}user}
	% \affil[2]{Department of Computer, Control, \& Management Engineering; Sapienza University of Rome}
	\affil[2]{Department of Computer, Control, and Management Engineering \protect\\  Sapienza University of Rome; Rome, Italy}
	\affil[ ]{{\textsf{\{\href{mailto:birbas@diag.uniroma1.it}{birbas},
				\href{mailto:leonardi@diag.uniroma1.it}{leonardi}\}@diag.uniroma1.it}}\medskip}
	\affil[3]{Input Output; London, UK}
	\affil[ ]{{\textsf{\href{mailto:philip.lazos@iohk.io}{philip.lazos@iohk.io}}}\medskip}
	\affil[4]{Institute for Logic, Language and Computation\protect\\ University of Amsterdam; Amsterdam, The Netherlands}
	\affil[ ]{{\textsf{\href{mailto:r.e.m.reiffenhauser@uva.nl}{r.e.m.reiffenhauser}@uva.nl}}}
	\predate{}
	\postdate{}
	\date{}

	\maketitle

	\begin{abstract}
		\noindent Fair allocation of indivisible goods has attracted extensive attention over the last two decades, yielding numerous elegant algorithmic results and producing challenging open questions. The problem becomes much harder in the presence of \emph{strategic} agents. Ideally, one would want to design \emph{truthful} mechanisms that produce allocations with fairness guarantees. 
		However, in the standard setting without monetary transfers, it is generally impossible to have truthful mechanisms that provide non-trivial fairness guarantees.
		Recently, \citet{AmanatidisBFLLR21} suggested the study of mechanisms that produce fair allocations in their equilibria. Specifically, when the agents have additive valuation functions, the simple Round-Robin algorithm always has pure Nash equilibria and the corresponding allocations are \emph{envy-free up to one good} (\efo) with respect to the agents' \emph{true valuation functions}.
		Following this agenda, we show that this outstanding property of the Round-Robin mechanism extends much beyond the above default assumption of additivity. In particular, we prove that for agents with \emph{cancelable} valuation functions 
		(a natural class that 
		contains, e.g., additive and budget-additive functions), this simple mechanism always has equilibria and even its approximate equilibria correspond to approximately \efo allocations with respect to the agents' true valuation functions. 
		Further, we show that the approximate \efo fairness of approximate equilibria surprisingly holds for the important class of \emph{submodular} valuation functions as well, even though exact equilibria fail to exist!
	\end{abstract}

	\newpage
	
	%%%%%%%%%%%%%%%%%%%%%%%%%%%%%%%%%%%%
	\section{Introduction}
	\label{sec:intro}
	%%%%%%%%%%%%%%%%%%%%%%%%%%%%%%%%%%%%
	
	Fair division refers to the problem of dividing a set of resources among a group of agents in a way that every agent feels they have received a ``fair’’ share. The mathematical study of (a continuous version of) the problem dates back to the work of Banach, Knaster, and \citet{Steinhaus49}, who, in a first attempt to formalize fairness, introduced the notion of \emph{proportionality}, i.e., each of the $n$ agents receives at least $1/n$-th of the total value from fer perspective. Since then, different variants of the problem have been studied in mathematics, economics, political science, and computer science, and various fairness notions have been defined. The most prominent fairness notion is \emph{envy-freeness} \citep{GS58,Foley67,Varian74}, where each agent values her set of resources at least as much as the set of any other agent.
	When the available resources are \emph{indivisible} items, i.e., items that cannot be split among agents, notions introduced for infinitely divisible resources, like proportionality and envy-freeness are impossible to satisfy, even approximately. 
	In the last two decades fair allocation of indivisible items has attracted extensive attention, especially within the theoretical computer science community, yielding numerous elegant algorithmic results for various new fairness notions tailored to this discrete version of the problem, such as \emph{envy-freeness up to one good} (EF1) \citep{LMMS04,Budish11}, \emph{envy-freeness up to any good} (EFX) \citep{CaragiannisKMPS19}, and \emph{maximin share fairness} (MMS) \citep{Budish11}. We refer the interested reader to the surveys of \citet{Procaccia16-survey,BCM16-survey,AABFRLMVW22}.
	
	In this work, we study the problem of  fairly allocating indivisible \emph{goods}, i.e., items of non-negative value, to \emph{strategic} agents, i.e., agents who might misreport their private information if they have an incentive to do so. 
	Incentivising strategic agents to truthfully report their valuations is a central goal---and often a notorious challenge---in mechanism design, in general. 
	Specifically in fair division, this seems particularly necessary, since any fairness guarantee on the outcome of a mechanism typically holds with respect to its input, namely the \emph{reported} preferences of the agents rather than their true, private preferences which they may have chosen not to reveal. Without truthfulness, fairness guarantees seem to become  meaningless.
	Unfortunately, when monetary transfers are not allowed, as is the standard assumption in fair division, such \emph{truthful} mechanisms fail to exist for any meaningful notion of fairness, even for simple settings with two agents who have additive valuation functions \citep{ABCM17}.
	
	As an alternative, \citet{AmanatidisBFLLR21} initiated the study of \emph{equilibrium fairness}: when a mechanism always exhibits stable (i.e., pure Nash equilibrium) states, each of which corresponds to a fair allocation with respect to the \emph{true} valuation functions, the need for extracting agents' true preferences is mitigated. 
	Surprisingly, they show that for the standard case of additive valuation functions, the simple \emph{Round-Robin} routine is such a mechanism with respect to \efo fairness. Round-Robin takes as input an ordering of the goods for each agent, and then cycles through the agents and allocates the goods one by one, giving to each agent their most preferred available good. For agents with additive valuation functions, Round-Robin is known to produce \efo allocations (see, e.g., \citep{Markakis17-survey}). Note that, without monetary transfers, what distinguishes a mechanism from an algorithm is that its input is the, possibly misreported, agents' preferences.
	
	To further explore the interplay between incentives and fairness, we take a step back and focus solely on this very simple, yet fundamental, allocation protocol. It should be noted that the Round-Robin algorithm is one of the very few fundamental procedures one can encounter throughout the discrete fair division literature. Its central role is illustrated by various prominent results, besides producing \efo allocations: it can be modified 
	to produce approximate \mms allocations \citep{AMNS17}, as well as \efo allocations for \emph{mixed goods and chores} (i.e., items with negative value) \citep{AzizCIW22}. It produces \emph{envy-free} allocations with high 
	probability when the values are drawn from distributions \citep{ManurangsiS21}, it is used to produce a ``nice'' initial allocation as a subroutine in the state-of-the-art approximation algorithms for \emph{pairwise maximin share fair} (\pmms) allocations \citep{Kurokawa17} and  \efx allocations \citep{ANM2019}, it has the lowest communication complexity of any known fair division algorithm, and, most relevant to this work, it is the \emph{only} algorithm for producing fair allocations for more than two agents that, when viewed as a mechanism, is known to even have equilibria \citep{GW17}.

	We investigate the existence and the \efo guarantees of approximate pure Nash equilibria of the Round-Robin mechanism beyond additive valuation functions, i.e., when the goods already assigned to an agent potentially change how they value the remaining goods. 
	In particular, we are interested in whether anything can be said about classes that largely generalize additive functions, like \emph{cancelable} functions, i.e., functions where the marginal values with respect to any subset maintain the relative ordering of the goods, and  \emph{submodular} functions, i.e., functions capturing the notion of diminishing returns. 
	Although the stability and equilibrium fairness properties of Round-Robin have been visited before \citep{GW17,AmanatidisBFLLR21}, to the best of our knowledge, we are the first to study the problem for non-additive valuation functions and go beyond exact pure Nash equilibria. 
	Cancelable functions also generalize budget-additive, unit-demand, and multiplicative valuation functions \citep{BergerCFF22}, and recently have been of interest in the fair division literature as several results can be extended to this class \citep{BergerCFF22,akrami2022efx,caragiannis2022existence}. For similar reasons, cancelable functions seem to be a good pairing with Round-Robin as well, at least in the algorithmic setting (see, e.g., Proposition \ref{prop:RR_ef1_canc}). 
	
	Nevertheless, non-additive functions seem to be massively harder to analyze in our setting and come with various obstacles. 
	First, it is immediately clear that, even without strategic agents, the input of an ordinal mechanism implemented as a simultaneous-move one-shot game, like the Round-Robin mechanism we study here, can no longer capture the complexity of a submodular function
	(see also the relevant discussion in Our Contributions). 
	As a result, translating this sequential assignment to an estimate on the value of each agent's \emph{bundle} of goods, is not obvious.
	Lastly, and this applies to cancelable functions as well, assuming equilibria do exist and enough can be shown about the value of the assigned bundles to establish fairness, there is no reason to expect that any fairness guarantee will hold with respect to the true valuation functions, as the agents may misreport their preferences in an arbitrary fashion.

	\subsection{Contribution and Technical Considerations}
	We study the well-known Round-Robin mechanism (Mechanism \ref{alg:MRR}) for the
	problem of fairly allocating a set of indivisible goods to a set of strategic agents. 
	We explore the existence of approximate equilibria, along with the fairness guarantees that the corresponding allocations provide with respect to the  agents' true valuation functions. Qualitatively, we generalize the surprising connection between the stable states of this simple mechanism and its fairness properties to all approximate equilibria equilibria and for valuation functions as general as subadditive cancelable and submodular.
	In more detail, our main contributions can be summarized as follows:

	\begin{itemize}
		\item We show that the natural generalization of the \emph{bluff profile} of \citet{GW17} is an exact PNE that always corresponds to an \efo allocation, when agents have \emph{cancelable} valuation functions (Theorem \ref{thm:bluff_PNE_cancelable} along with Proposition \ref{prop:RR_ef1_canc}). Our proof is simple and intuitive and  generalizes  the results of \citet{GW17} and \citet{AmanatidisBFLLR21}. 
		% \smallskip
		\item For agents with submodular valuation functions, we show that there are instances where no $(3/4+\varepsilon)$-approximate PNE exists (Proposition \ref{thm:no_PNE}), thus creating a separation between the cancelable and the submodular cases. Nevertheless, we prove that an appropriate generalization of the bluff profile is a $1/2$-approximate PNE (Theorem \ref{thm:RReq}) that also produces an $1/2$-\efo allocation with respect to the true valuation functions (Theorem \ref{thm:RRguasub}). 
		% \smallskip
		\item We provide a unified proof that connects the factor of an approximate PNE with the fairness approximation factor of the respective allocation. In particular, any $\alpha$-approximate PNE results in a ${\alpha}/{2}$-\efo allocation for subadditive cancelable agents (Theorem \ref{thm:cancelable_ef1_n}), and in a ${\alpha}/{3}$-\efo allocation for submodular agents (Theorem \ref{thm:submod_ef1_n}). We complete the picture by providing lower bounds in both cases (Theorem \ref{thm:additive_fair_eq} and Proposition \ref{thm:oxs_lower_bound}), which demonstrate that our results are almost tight.
	\end{itemize}
	While this is not the first time Round-Robin is considered for non-additive agents, see, e.g., \citep{BouveretL14}, to the best of our knowledge, we are the first to study its fairness guarantees for cancelable and submodular valuation functions, independently of incentives.
	As a minor byproduct of our work, Theorem \ref{thm:RRguasub} and the definition of the bluff profile imply that, given \emph{value oracles} for the submodular functions, we can use Round-Robin as a subroutine to produce ${1}/{2}$-\efo allocations. 
	
	This also raises the question of whether one should allow a more expressive bid, e.g., a value oracle. While, of course, this is a viable direction, we avoid it here as it comes with a number of issues. Allowing the input to be exponential in the number of goods is already problematic, especially when simplicity and low communication complexity are two appealing traits of the original mechanism. Moreover, extracting orderings from value oracles would essentially result in a mechanism equivalent to ours (if the ordering of an agent depended only on \emph{her} function) or to a sequential game (if the orderings depended on all the functions) which is not what we want to explore here.
	Note that less information is not necessarily an advantage towards our goal. While this results in a richer space of equilibria, fairness guarantees are increasingly harder to achieve.

	As a final remark, all the algorithmic procedures we consider run in polynomial time, occasionally assuming access to value oracles, e.g., Algorithms \ref{alg:Grenam}, \ref{mech:Devnam}, \ref{alg:Greedy_for_i}. Although we do not consider computational complexity questions here, like how do agents compute best responses or how do they reach approximate equilibria, we do  consider such questions interesting directions for future work.

	\subsection{Further Related Work}
	
	The problem of fairly allocating indivisible goods to additive agents in the non-strategic setting has been extensively studied; for a recent survey, see \citet{AABFRLMVW22}. 
	Although the additivity of the valuation functions is considered a standard assumption, there are many works that explore richer classes of valuation functions.  Some prominent examples include the computation of \efo allocations for agents with general non-decreasing valuation functions \citep{LMMS04},  \efx allocations (or relaxations of \efx) under agents with cancelable valuation functions \citep{BergerCFF22,akrami2022efx,caragiannis2022existence} and subaditive valuation functions \citep{PR18, ChaudhuryGM21}, respectively, and approximate \mms allocations for submodular, XOS, and subadditive agents \citep{BarmanK20, GhodsiHSSY22}.
	
	Moving to the strategic setting, \citet{CKKK09} and \citet{MarkakisP11}
	were the first to consider the question of whether it is possible to have mechanisms that are truthful and fair at the same time, again assuming additive agents. \citet{ABCM17} resolved this question for two agents, showing there is no truthful mechanism with fairness guarantees under any meaningful fairness notion. As a result, subsequent papers considered truthful mechanism design under restricted valuation function classes \citep{HPPS,BabaioffEF21}.

	The stability of Round-Robin was first studied by \citet{GW17}, who proved that it always has PNE by using a special case of retracted result of \citet{BouveretL14} (this did not affect the former though; see \citep{AzizBLM17}). Finally, besides the work of \citet{AmanatidisBFLLR21} mentioned earlier, the fairness properties of Round-Robin under strategic agents have recently been studied by \citet{PV22}. Therein it is shown that Round-Robin, despite being non-truthful, satisfies a relaxation of truthfulness, as it is \emph{not obviously manipulable}.

	%%%%%%%%%%%%%%%%%%%%%%%%%%%%%%%%%%%%
	\section{Preliminaries}
	\label{sec:prelims}
	%%%%%%%%%%%%%%%%%%%%%%%%%%%%%%%%%%%%

	For $a\in \mathbb{N}$, let $[a]$ denote the set $\{1, 2, \ldots, a\}$. We will use $N = [n]$ to denote the set of agents and $M = \{g_1,\ldots, g_m\}$ to denote the set of goods. 
	Each agent $i\in N$ has a valuation function $v_i:2^M\to \mathbb{R}_{\ge0}$ over the subsets of goods. We assume that all $v_i$ are  \emph{normalized}, i.e., $v_i(\emptyset)=0$. We also adopt the shortcut $v_i(T\,|\,S)$ for 
	the \emph{marginal value} of a set $T$ with respect to a set $S$, i.e., $v_i(T\,|\,S)=v_i(T \cup S) - v(S)$.
	If $T = \{g\}$, we write $v
	_i(g\,|\,S)$ instead of $v(\{g\}\,|\,S)$.
	For each agent $i\in N$, we say that $v_i$ is 
	\begin{itemize}
		\item \emph{non-decreasing} (often referred to as \emph{monotone}), if $v_i(S) \le v_i(T)$ for any $S \subseteq T \subseteq M$.
		\item \emph{submodular}, if $v_i(g \,|\, S) \ge  v_i(g \,|\, T)$ for any $S \subseteq T \subseteq M$ and $g\notin T$.
		\item \emph{cancelable}, if $v_i(S\cup \{g\}) > v_i(T \cup \{g\}) \Rightarrow v_i(S) > v_i(T)$ for any $S, T \subseteq M$ and $g\in M\setminus (S\cup T)$.
		\item \emph{additive}, if $v_i(S\cup T) = v_i(S)+ v_i(T)$ for every $S, T \subseteq M$ with $S\cap T= \emptyset$. 
		\item \emph{subadditive}, if $v_i(S\cup T) \le v_i(S)+ v_i(T)$ for every $S, T \subseteq M$.
	\end{itemize}
	Throughout this work,  we only consider non-decreasing valuation functions, e.g., when we refer to submodular functions, we mean non-decreasing submodular functions. Note that although both submodular and (subadditive) cancelable functions are strict superclasses of additive functions, neither one is a superclass of the other.

	We will occasionally need an alternative characterization of submodular functions due to \citet{NemhauserWF78}.
	
	\begin{theorem}[\citet{NemhauserWF78}]\label{thm:SM}
		A function $v:2^M \rightarrow \mathbb{R}_{\ge0}$ is (non-decreasing) submodular if and only if we have
		% \[
		$v(T) \le v(S) + \sum_{i \in T \setminus S} v(i\,|\,S)$,
		% \]
		for all $S, T \subseteq M$. 
	\end{theorem}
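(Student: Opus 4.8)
The plan is to prove the two implications separately, relying only on the monotonicity and submodularity definitions from the preliminaries together with a telescoping of marginal values. For the forward direction, assume $v$ is non-decreasing and submodular and fix $S, T \subseteq M$. I would enumerate $T \setminus S = \{j_1, \dots, j_k\}$ and rewrite $v(S \cup T) - v(S)$ as the telescoping sum $\sum_{\ell=1}^{k} v\big(j_\ell \,\big|\, S \cup \{j_1,\dots,j_{\ell-1}\}\big)$. Since $S \subseteq S \cup \{j_1,\dots,j_{\ell-1}\}$ and $j_\ell$ lies outside the larger set, submodularity bounds each summand by $v(j_\ell\,|\,S)$, which gives $v(S \cup T) \le v(S) + \sum_{i \in T \setminus S} v(i\,|\,S)$. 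Monotonicity then yields $v(T) \le v(S \cup T)$ because $T \subseteq S \cup T$, and combining the two inequalities proves the stated bound.

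For the converse, assume the inequality holds for all $S,T$. Monotonicity is immediate: for $S \subseteq T$ I would apply the inequality with the roles reversed, i.e.\ with reference set $T$ and target set $S$; since $S \setminus T = \emptyset$ the sum vanishes and we obtain $v(S) \le v(T)$. Recovering submodularity is the delicate part, and the main obstacle is that the characterizing inequality is one-directional---it only upper-bounds $v(T)$---so a single naive application to a general pair $S \subseteq T$ does not close. The trick I would use is to apply it to a two-element difference: for $U \subseteq M$ and distinct $j,k \notin U$, instantiating the inequality with reference set $U$ and target set $U \cup \{j,k\}$ gives $v(U \cup \{j,k\}) \le v(U) + v(j\,|\,U) + v(k\,|\,U)$. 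Expanding the left-hand side through marginals as $v(U) + v(k\,|\,U) + v(j\,|\,U \cup \{k\})$ and cancelling the common terms yields exactly $v(j\,|\,U \cup \{k\}) \le v(j\,|\,U)$, i.e.\ diminishing returns for a single added element.

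It then remains to lift this two-element statement to the general definition. For arbitrary $S \subseteq T$ with $g \notin T$, I would write $T \setminus S = \{k_1, \dots, k_r\}$ and chain the one-element inequality along the sets $S, S \cup \{k_1\}, \dots, T$ to obtain $v(g\,|\,S) \ge v(g\,|\,S \cup \{k_1\}) \ge \dots \ge v(g\,|\,T)$, as required. I expect the two-element cancellation to be the conceptual crux: it is precisely where the otherwise lossy characterizing inequality becomes tight and can thus be turned into the diminishing-returns inequality.
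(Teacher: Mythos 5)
Your proof is correct. Note, however, that the paper offers no proof of this statement at all: it is quoted as a known characterization due to Nemhauser, Wolsey, and Fisher, so there is no in-paper argument to compare against. Your blind reconstruction is the standard one and is complete on both sides: the forward direction via telescoping $v(S \cup T) - v(S)$ over an enumeration of $T \setminus S$ and bounding each marginal by submodularity, then invoking monotonicity for $v(T) \le v(S \cup T)$; and the converse via the two-element instantiation $v(U \cup \{j,k\}) \le v(U) + v(j\,|\,U) + v(k\,|\,U)$, whose expansion cancels to the single-element diminishing-returns inequality $v(j\,|\,U \cup \{k\}) \le v(j\,|\,U)$, which you then correctly chain along $T \setminus S$ to recover the paper's definition of submodularity (and you also correctly extract monotonicity from the empty-difference case, which is needed since the theorem characterizes \emph{non-decreasing} submodular functions). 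You have rightly identified the two-element cancellation as the crux of the converse; all side conditions ($j \ne k$, elements outside the reference set) are respected, so there is no gap.
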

	
	Also, the following lemma summarizes some easy observations about cancelable functions.
	
	\begin{lemma}\label{lem:cancelable}
		If $v:2^M \rightarrow \mathbb{R}_{\ge0}$ is cancelable, then 
		$v_i(S\cup R) > v_i(T \cup R) \Rightarrow v_i(S) > v_i(T) $, implying that 
		$v_i(S) \ge v_i(T) \Rightarrow v_i(S\cup R) \ge v_i(T \cup R)$, for any $S, T, R \subseteq M$, such that $R\subseteq M\setminus S\cup T$. In particular, $v_i(S) = v_i(T) \Rightarrow v_i(S\cup R) = v_i(T \cup R)$. 
	\end{lemma}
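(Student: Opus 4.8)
The plan is to bootstrap the single-good definition of cancelability to arbitrary added sets $R$ via a short induction, and then obtain the remaining two statements by elementary logical manipulation. Throughout, write $v$ for the cancelable function in question (the subscript $i$ plays no role).

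First I would establish the implication $v(S\cup R) > v(T \cup R) \Rightarrow v(S) > v(T)$ by induction on $|R|$, for all $S,T$ and all $R\subseteq M\setminus(S\cup T)$. The case $|R|=0$ is trivial and $|R|=1$ is exactly the defining property of cancelable functions. For the inductive step with $|R|\ge 2$, I fix an arbitrary $g\in R$ and put $R'=R\setminus\{g\}$. Because $R$ is disjoint from $S\cup T$, we have $g\notin S\cup R'$ and $g\notin T\cup R'$, so the definition of cancelability applied to the sets $S\cup R'$, $T\cup R'$ and the good $g$ gives $v(S\cup R)=v((S\cup R')\cup\{g\}) > v((T\cup R')\cup\{g\})=v(T\cup R) \Rightarrow v(S\cup R') > v(T\cup R')$. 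Since $R'\subseteq M\setminus(S\cup T)$ with $|R'|=|R|-1$, the induction hypothesis yields $v(S\cup R') > v(T\cup R') \Rightarrow v(S) > v(T)$, and composing the two implications closes the induction.

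The second implication is then immediate: $v(S)\ge v(T)\Rightarrow v(S\cup R)\ge v(T\cup R)$ is precisely the contrapositive of the first implication after swapping the names of $S$ and $T$, since the contrapositive of $v(T\cup R) > v(S\cup R)\Rightarrow v(T) > v(S)$ is $v(T)\le v(S)\Rightarrow v(T\cup R)\le v(S\cup R)$. Finally, for the equality statement I would apply this second implication in both directions: $v(S)=v(T)$ gives both $v(S)\ge v(T)$ and $v(T)\ge v(S)$, hence $v(S\cup R)\ge v(T\cup R)$ and $v(T\cup R)\ge v(S\cup R)$, so $v(S\cup R)=v(T\cup R)$.

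I do not expect any genuine obstacle here; the only thing to watch is the bookkeeping of the disjointness condition $R\subseteq M\setminus(S\cup T)$ through the induction, which is exactly what guarantees that the single-good definition of cancelability is applicable at each step.
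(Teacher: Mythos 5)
Your proof is correct, and in fact the paper provides no proof of this lemma at all---it is stated as a collection of ``easy observations''---so your argument fills that gap. The one-good-at-a-time induction on $|R|$ (with the disjointness condition $R\subseteq M\setminus(S\cup T)$ guaranteeing applicability of the single-good definition at each step), followed by the swap-and-contrapositive derivation of the weak-inequality version and the two-sided application for equality, is exactly the routine argument the authors evidently had in mind.
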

	\noindent Note that, for $S, T \subseteq M$, Lemma \ref{lem:cancelable} directly implies that $\argmax_{g\in T} v(g) \subseteq \argmax_{g\in T} v(g\,|\,S)$.

	Despite the fact that the agents have valuation functions, the mechanism we study (Mechanism \ref{alg:MRR}) is \emph{ordinal}, i.e., it only takes as input a \emph{preference ranking} from each agent. Formally, the preference ranking $\succ_i$, which agent $i$ reports, defines a total order on $M$, i.e.,  $g \succ_i g'$ implies that good $g$ precedes good $g'$ in agent $i$' declared preference ranking.\footnote{See the discussion after the statement of Mechanism \ref{alg:MRR} about why assuming that the reported preference rankings are total (rather than partial) orders is without loss of generality.}
	We call the vector of the agents' declared preference rankings, $\bm{\succ}\, = (\succ_1, \ldots, \succ_n)$, the \emph{reported profile} for the instance. So, while an instance to our problem is an ordered triple $(N, M, \mathbf{v})$, where  $\mathbf{v} = (v_1,\ldots, v_n)$ is a vector of the agents' valuation functions, the input to Mechanism \ref{alg:MRR} is $(N, M, \bm{\succ})$ instead.
	
	Note that $\succ_i$ may not reflect the actual underlying values, i.e., $g \succ_i g'$ does not necessarily mean that $v_i(g) > v_i(g')$ or, more generally, $v_i(g\,|\,S) > v_i(g'\,|\,S)$ for a given $S\subseteq M$. 
	This might be due to agent $i$ misreporting her preference ranking, or due to the fact that any single preference ranking is not expressive enough to fully capture all the partial orders induced by a submodular function. 
	Nevertheless, a valuation function $v_i$ does induce a \emph{true preference ranking} $\succcurlyeq^*_{i|S}$ for each set $S\subseteq M$, which is a partial order, i.e., $g\succcurlyeq^*_{i|S} g' \Leftrightarrow v_i(g\,|\,S) \ge v_i(g'\,|\,S)$ for all $g,g'\in M$. We use $\succ^*_{i|S}$ if the corresponding preference ranking is \emph{strict}, i.e., when $g\succcurlyeq^*_{i|S} g' \,\wedge\, g'\succcurlyeq^*_{i|S} g \,\Rightarrow\, g=g'$, for all  $g,g'\in M\setminus S$. For additive (and more generally, for cancelable) valuations, we drop $S$ for the notation and simply write $\succcurlyeq^*_i$ or $\succ^*_i$. Finally, for a total order $\succ$ on $M$ and a set $T\subseteq M$, we use $\mathrm{top}(\succ, T)$ to denote the ``largest'' element of $T$ with respect to $\succ$.

	%%%%%%%%%%%%%%%%%%%%%%%%%%%%
	\subsection{Fairness Notions}\label{subsec:fairness}
	%%%%%%%%%%%%%%%%%%%%%%%%%%%%
	A fair division mechanism produces an \emph{allocation} $(A_1,\ldots,A_n)$, where $A_i$ is the \emph{bundle} of agent $i$, which is a partition of $M$.  The latter corresponds to assuming no free disposal, namely all the goods must be allocated. 
	
	There are several different notions which attempt to capture which allocations are ``fair''. The most prominent such notion in the fair division literature has been \emph{envy-freeness} (\ef) \citep{GS58,Foley67,Varian74}, which has been the starting point for other relaxed notions, more appropriate for the indivisible goods setting we study here, as \emph{envy-freeness up to one good} (\efo) \citep{LMMS04,Budish11} and \emph{envy-freeness up to any good} (\efx) \citep{CaragiannisKMPS19}. Here we focus on \efo.
	
	\begin{definition}\label{def:EF-EFX}
		An allocation $(A_1,\ldots,A_n)$ is 
		\begin{itemize}[itemsep=2pt,topsep=6pt]
			\item $\alpha$-\textit{envy-free} ($\alpha$-\ef), if for every $i, j\in N$, $v_i(A_i) \geq \alpha \cdot v_i(A_j)$. \label{def:EF}
			\item $\alpha$-\textit{envy-free up to one good} ($\alpha$-\efo), if for every pair of agents $i, j\in N$, with $A_j\neq\emptyset$, there exists a good $g\in A_j$, such that
			$v_i(A_i) \geq \alpha \cdot v_i(A_j\setminus \{g\})$. \label{def:efo}
		\end{itemize}
	\end{definition}
	
	When for every agent $j\in N$ with $A_j\neq\emptyset$, we have $v_i(A_i) \geq \alpha \cdot v_i(A_j\setminus \{g\})$ for some good $g\in A_j$, we say that $(A_1,\ldots,A_n)$ is $\alpha$-\efo \emph{from agent $i$'s perspective}, even when the allocation is not $\alpha$-\efo!

	%%%%%%%%%%%%%%%%%%%%%%%%%%%%
	\subsection{Mechanisms and Equilibria}\label{subsec:mecanisms}
	%%%%%%%%%%%%%%%%%%%%%%%%%%%%

	We are interested in \emph{mechanisms} that produce allocations with  \efo guarantees. When \textit{no payments} are allowed, like in our setting, an allocation mechanism $\mathcal{M}$ is just an allocation algorithm that takes as input the agents' reported preferences. 
	In particular, Round-Robin, the mechanism of interest here, takes as input the reported   profile $\bm{\succ}$ and produces an allocation of all the goods. This distinction in terminology is necessary as the reported input may not be consistent with the actual valuation functions due to the agents' incentives. When the allocation returned by $\mathcal{M}(\bm{\succ})$ has some fairness guarantee, e.g., it is $0.5$-\efo, we will attribute the same guarantee to the reported profile itself, i.e., we will say that $\bm{\succ}$ is $0.5$-\efo. 
	
	We study  the fairness guarantees of the (approximate) pure Nash equilibria of Round-Robin. Given a preference profile $\bm{\succ} \,= ({\succ}_1, \ldots, {\succ}_n)$, we write
	$\bm{\succ}_{-i}$ to denote $({\succ}_1, \ldots, {\succ}_{i-1},  \allowbreak {\succ}_{i+1}, \ldots, {\succ}_n)$ and given a preference ranking ${\succ}'_{i}$ we use 
	$({\succ}'_i, \bm{\succ}_{-i})$ to denote the profile $({\succ}_1, \ldots, {\succ}_{i-1}, \allowbreak {\succ}'_{i}, \allowbreak {\succ}_{i+1}, \ldots, {\succ}_n)$. For the next definition we  abuse the notation slightly: given an allocation $(A_1,\ldots, \allowbreak A_n)$ produced by $\mathcal{M}(\bm{\succ})$, we write $v_i(\mathcal{M}(\bm{\succ}))$ to denote $v_i({A}_i)$; similarly for $\mathcal{M}({\succ}'_i, \bm{\succ}_{-i})$. 
	
	\begin{definition}
		Let $\mathcal{M}$ be an allocation mechanism and consider a preference profile
		$\bm{\succ} \,= ({\succ}_1, \ldots, \allowbreak {\succ}_n)$. We say that the total order ${\succ}_{i}$ is an \emph{$\alpha$-approximate best response} to $\bm{\succ}_{-i}$ if for every total order, i.e., permutation ${\succ}'_{i}$ of $M$, we have $ \alpha \cdot v_i(\mathcal{M}({\succ}'_i, \bm{\succ}_{-i}))\le v_i(\mathcal{M}(\bm{\succ}))$. 
		The profile $\bm{\succ}$ is an \emph{$\alpha$-approximate pure Nash equilibrium} (PNE) if, for each $i\in N$, ${\succ}_{i}$ is an $\alpha$-approximate best response to $\bm{\succ}_{-i}$.
	\end{definition}
	
	\noindent When $\alpha = 1$, we simply refer to best responses and exact PNE.

	%%%%%%%%%%%%%%%%%%%%%%%%%%%%%%%%%%%%%%%%%%%%%%%%
	\subsection{The Round-Robin Mechanism} \label{subsec:RR}
	%%%%%%%%%%%%%%%%%%%%%%%%%%%%%%%%%%%%%%%%%%%%%%%%
	We state Round-Robin as a mechanism (Mechanism \ref{alg:MRR}) that takes as input a reported profile $({\succ}_1, \ldots, {\succ}_n)$.
	For the sake of presentation, we assume that the agents in each \emph{round} (lines \ref{line:rr3}--\ref{line:rr6}) are always considered according to their ``name'', i.e., agent $1$ is considered first, agent $2$ second, and so on, instead of having a permutation determining the priority of the agents as an extra argument of the input. 
	This is without loss of generality, as it only requires renaming the agents accordingly.
	We often refer to the process of allocating a good to an agent (lines \ref{line:rr4}--\ref{line:rr6}) as a \emph{step} of the mechanism.
	
	\begin{algorithm}[ht]
		\caption{Round-Robin$({\succ}_1, \ldots, {\succ}_n)$ \hfill\small{ // For $i\in N$, ${\succ}_i$ is the reported preference ranking of  agent $i$.}}
		\begin{algorithmic}[1]
			\vspace{1pt}\State $S=M$\textbf{;} $(A_1,\dots,A_n) = (\emptyset,\ldots,\emptyset)$\textbf{;} $k = \lceil m/n\rceil$ 
			\For{$r = 1, \dots, k$} \Comment{Each value of $r$ determines the corresponding \ul{round}.}
			\For{$i = 1, \dots, n$} \Comment{The combination of $r$ and $i$ determines the corresponding \ul{step}.}\label{line:rr3}
			\State $g = \mathrm{top}(\succ_{i}, S)$ \label{line:rr4}
			\State $A_i = A_{i} \cup \{g\}$ \Comment{The current agent receives (what appears to be) her favorite available good.}
			\State $S = S\setminus \{g\}$ \Comment{The good is no longer available.}\label{line:rr6} \vspace{-2pt}
			\EndFor
			\EndFor
			\State 
			\Return {$(A_1,\dots,A_n)$}
		\end{algorithmic}
		\label{alg:MRR}
	\end{algorithm}

	Note that there is no need for a tie-breaking rule here, as the reported preference rankings are assumed to be total orders. Equivalently, one could allow for partial orders (either directly or via cardinal bids as it is done in \citep{AmanatidisBFLLR21}) paired with a deterministic tie-breaking rule, e.g., lexicographic tie-breaking, a priori known to the agents. 
	
	In the rest of the paper, we will assume that $m = k n$ for some $k\in \mathbb N$, for simplicity. Note that this is without loss of generality, as we may introduce at most $n-1$ dummy goods that have marginal value of $0$ with respect to any set for everyone and append them at the end of the reported preference rankings to be allocated during the last steps of the mechanism. 
	
	We have already mentioned that Round-Robin as an algorithm produces \efo allocations for additive agents, where the input is assumed to be any strict variant $\bm{\succ}^* \,= (\succ^*_{1|\emptyset}, \succ^*_{2|\emptyset}, \ldots, \succ^*_{n|\emptyset})$ of the truthful profile $(\succcurlyeq^*_{1|\emptyset}, \succcurlyeq^*_{2|\emptyset}, \ldots, \succcurlyeq^*_{n|\emptyset})$, i.e., the profile where each agent ranks the goods according to their singleton value. This property fully extends to cancelable valuation functions as well. The proof of Proposition \ref{prop:RR_ef1_canc} is rather simple, but not as straightforward as the additive case; note that it requires Lemma \ref{lem:cans_ord} from the next section. 
	
	\begin{proposition}\label{prop:RR_ef1_canc}
		Let be $\bm{\succ}^*$ be as described above. When all agents have cancelable valuation functions, the allocation returned by Round-Robin$(\bm{\succ}^*)$ is \efo.
	\end{proposition}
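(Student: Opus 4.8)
The plan is to adapt the classical Round-Robin \efo argument for additive agents, replacing the ``sum of singleton values'' bookkeeping with a bundle-domination property that holds for cancelable functions. The combinatorial fact I would isolate first (this should be precisely the role of Lemma~\ref{lem:cans_ord}) is the following: if $v_i$ is cancelable and $A=\{a_1,\dots,a_t\}$, $B=\{b_1,\dots,b_t\}$ are \emph{disjoint} sets with $v_i(a_\ell)\ge v_i(b_\ell)$ for every $\ell$, then $v_i(A)\ge v_i(B)$. I would prove this by a one-at-a-time swap: set $B_\ell=\{a_1,\dots,a_\ell,b_{\ell+1},\dots,b_t\}$, so that $B_\ell$ and $B_{\ell-1}$ differ only in that $a_\ell$ replaces $b_\ell$ over a common remainder $R=\{a_1,\dots,a_{\ell-1},b_{\ell+1},\dots,b_t\}$. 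Since the sets are disjoint we have $a_\ell,b_\ell\notin R$, and $v_i(\{a_\ell\})\ge v_i(\{b_\ell\})$, so Lemma~\ref{lem:cancelable} gives $v_i(R\cup\{a_\ell\})\ge v_i(R\cup\{b_\ell\})$, i.e.\ $v_i(B_\ell)\ge v_i(B_{\ell-1})$; chaining from $B_0=B$ up to $B_t=A$ yields the claim.

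With this tool in hand, fix two agents $i,j$ and let $x_1,\dots,x_k$ and $y_1,\dots,y_k$ be the goods they receive in rounds $1,\dots,k$, respectively (recall $m=kn$). The engine of the argument is a simple availability observation: whenever agent $i$ makes a pick and some good $g$ is still unallocated at that moment, the good $x$ she actually takes is the $\succ^*_i$-top of the currently available set, so $v_i(x)\ge v_i(g)$.

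I would then split into two cases according to the fixed priority order. If $i$ precedes $j$, then in every round $r$ the good $y_r$ is still available when $i$ picks $x_r$ (as $j$ takes $y_r$ only later in the same round), hence $v_i(x_r)\ge v_i(y_r)$ for all $r$; applying the domination lemma to $A_i=\{x_1,\dots,x_k\}$ and $A_j=\{y_1,\dots,y_k\}$ gives $v_i(A_i)\ge v_i(A_j)$, so $i$ does not envy $j$ at all. If instead $j$ precedes $i$, I shift the pairing by one round: when $i$ picks $x_r$ in round $r$, the good $y_{r+1}$ (taken by $j$ only in round $r+1$) is still available, so $v_i(x_r)\ge v_i(y_{r+1})$ for $r=1,\dots,k-1$. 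The domination lemma applied to $\{x_1,\dots,x_{k-1}\}$ and $\{y_2,\dots,y_k\}$, together with monotonicity $v_i(A_i)\ge v_i(\{x_1,\dots,x_{k-1}\})$, yields $v_i(A_i)\ge v_i(A_j\setminus\{y_1\})$, which is exactly \efo from $i$ towards $j$ after discarding $j$'s first-round good.

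The main obstacle---and the only place where the additive proof genuinely breaks---is the bundle-domination step: with additive valuations one simply sums the pairwise inequalities $v_i(x_r)\ge v_i(y_r)$, whereas for cancelable functions there is no linear structure, and one must instead argue through the telescoping swap above, which is where Lemma~\ref{lem:cancelable} (and hence cancelability) is essential. The availability bookkeeping and the two-case pairing are otherwise identical in spirit to the additive case, so once the domination lemma is established the remainder is routine. I do not expect to need the $\argmax$ remark following Lemma~\ref{lem:cancelable} here, since the whole argument can be phrased in terms of singleton values.
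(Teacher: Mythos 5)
Your proof is correct, and its outer skeleton is exactly the paper's: pair the goods by round, use availability at agent $i$'s picks to get singleton comparisons, invoke a bundle-domination lemma for cancelable functions, and finish with monotonicity. Where you genuinely diverge is in the domination lemma itself. The paper's Lemma~\ref{lem:cans_ord} makes no disjointness assumption, and its proof has to work for it: it first reorders $X$ non-increasingly and performs a careful reindexing of $Y$ to ensure $y_j\notin\{x_1,\dots,x_{j-1}\}$, then argues by contradiction, peeling off the last element via cancelability until reaching $v(x_1)<v(y_1)$. Your version assumes $A\cap B=\emptyset$ and is proved by a clean forward telescoping swap chain $B_0,\dots,B_t$, each step being a single application of Lemma~\ref{lem:cancelable}; this needs no ordering assumption on either set and is noticeably simpler. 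The disjointness restriction is harmless here, since $A_i$ and $A_j$ are bundles of a partition (and your case~1 compares whole bundles, your case~2 the disjoint sets $\{x_1,\dots,x_{k-1}\}$ and $\{y_2,\dots,y_k\}$), but be aware it is not a free simplification globally: the paper reuses Lemma~\ref{lem:cans_ord} in settings where the two sets can overlap --- e.g., comparing an agent's bluff bundle with her bundle after a deviation in Theorem~\ref{thm:bluff_PNE_cancelable}, and again in Theorem~\ref{thm:cancelable_ef1_n} --- so the general statement is actually needed elsewhere in the paper. Two smaller differences: the paper avoids your case split by using the shifted pairing $v_i(x_r)\ge v_i(y_{r+1})$ uniformly (valid whether $i$ precedes $j$ or not, since $y_{r+1}$ is allocated after $x_r$ in both cases), applying the lemma to $A_i\setminus\{x_k\}$ versus $A_j\setminus\{y_1\}$; your case analysis costs a few lines but buys the slightly stronger observation that $i$ is fully envy-free toward agents picking after her, which the paper's uniform treatment does not record. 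Your closing remark is also accurate: the $\argmax$ observation after Lemma~\ref{lem:cancelable} plays no role in this proposition.
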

	
	\begin{proof}
		Let $(A_1,\dots,A_n)$ be the allocation returned by Round-Robin$(\bm{\succ}^*)$. Fix two agents, $i$ and $j$, and let $A_i = \{x_1, x_2, \ldots, x_k\}$ and $A_j = \{y_1, y_2, \ldots, y_k\}$, where the goods in both sets are indexed according to the round in which they were allocated to $i$ and $j$, respectively. 
		By the way Mechanism \ref{alg:MRR} is defined, we have $x_r \succ^*_{i|\emptyset} y_{r+1}$, for all $r \in [k-1]$. Therefore, $x_r \succcurlyeq^*_{i|\emptyset} y_{r+1}$, or equivalently, $v_i(x_r) \ge v_i(y_{r+1})$, for all $r \in [k-1]$.
		Thus, by Lemma \ref{lem:cans_ord}, we get $v_i(A_i\setminus \{x_k\}) \ge v_i(A_j\setminus \{y_1\})$, and using the fact that $v_i$ is non-decreasing, $v_i(A_i) \ge v_i(A_j\setminus \{y_1\})$.
	\end{proof}

	%%%%%%%%%%%%%%%%%%%%%%%%%%%%%%%%%%%%
	\section{Existence of approximate PNE}
	\label{sec:a}
	%%%%%%%%%%%%%%%%%%%%%%%%%%%%%%%%%%%%
	At first glance, it is not clear why Mechanism \ref{alg:MRR} has any pure Nash equilibria, even approximate ones for a constant approximation factor. For additive valuation functions, however, it is known that for any instance we can construct a simple preference profile, called the \emph{bluff profile}, which is an exact PNE. While the proof of this fact, in its full generality, is fragmented over three papers \citep{GW17,BL16,AmanatidisBFLLR21}, we give here a simple proof that generalizes the existence of exact PNE to cancelable valuation functions. As we shall see later, extending this result to submodular functions is not possible and even defining a generalization of the bluff profile which is a $0.5$-approximate PNE is not straightforward. 
	
	%%%%%%%%%%%%%%%%%%%%%%%%%%%%%%%%%%%%
	\subsection{Cancelable valuations}
	\label{subsec:cancelable}
	%%%%%%%%%%%%%%%%%%%%%%%%%%%%%%%%%%%%
	Defining the bluff profile for cancelable agents, we will start from a strict variant of the truthful profile $(\succcurlyeq^*_{1|\emptyset}, \succcurlyeq^*_{2|\emptyset}, \ldots, \succcurlyeq^*_{n|\emptyset})$, i.e., the profile where each agent ranks the goods according to their value (as singletons) in descending order, as we did for Proposition \ref{prop:RR_ef1_canc}. 
	Assume that any ties are broken deterministically to get the strict version $\bm{\succ}^* \,= (\succ^*_{1|\emptyset}, \succ^*_{2|\emptyset}, \ldots, \succ^*_{n|\emptyset})$. Now, consider $\textrm{Round-Robin}(\bm{\succ}^*)$ and let $h_1, h_2, \ldots, h_m$ be a renaming of the goods according to the order in which they were allocated and $\succ^\mathrm{b}$ be the corresponding total order (i.e., $h_1 \succ^\mathrm{b} h_2 \succ^\mathrm{b} \ldots \succ^\mathrm{b} h_m$). The \emph{bluff profile} is the preference profile $\bm{\succ}^\mathrm{b} \,= (\succ^\mathrm{b}, \succ^\mathrm{b}, \ldots, \succ^\mathrm{b})$, where everyone ranks the goods in the order they were allocated in Round-Robin$(\bm{\succ}^*)$. The following fact follows directly from the definition of the bluff profile and the description of Round-Robin. 
	
	\begin{fact}\label{fact:bluff}
		If $(\bm{\succ}^*)$ is a strict version of the truthful preference profile and $(\bm{\succ}^\mathrm{b})$ is the corresponding bluff profile, 
		then $\mathrm{Round\text{-}Robin}(\bm{\succ}^\mathrm{b})$ and $\mathrm{Round\text{-}Robin}(\bm{\succ}^*)$ both return the same allocation.
	\end{fact}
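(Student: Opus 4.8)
The plan is to show that running Round-Robin on the bluff profile $\bm{\succ}^\mathrm{b}$ reproduces, step by step, the exact same sequence of allocation decisions as running it on the original strict profile $\bm{\succ}^*$. Since both executions allocate goods one at a time in the same agent order (agent $1$, then $2$, and so on, round by round), it suffices to argue that at each step the current agent picks the same good in both runs. I would prove this by induction on the steps of the mechanism, maintaining the invariant that after any given step, the set $S$ of remaining goods is identical in the two executions, and the partial allocation built so far is identical as well.

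For the inductive step, suppose the first $t-1$ steps have produced identical partial allocations and hence an identical remaining set $S$ in both runs. Consider the agent $i$ acting at step $t$. In the $\bm{\succ}^*$-run, $i$ selects $\mathrm{top}(\succ^*_{i|\emptyset}, S)$; in the $\bm{\succ}^\mathrm{b}$-run, $i$ selects $\mathrm{top}(\succ^\mathrm{b}, S)$. The key observation is how $\succ^\mathrm{b}$ is defined: the goods are renamed $h_1, h_2, \ldots, h_m$ precisely in the order they were allocated during $\textrm{Round-Robin}(\bm{\succ}^*)$, and $\succ^\mathrm{b}$ ranks them in exactly that order. So the good allocated at step $t$ of the $\bm{\succ}^*$-run is, by construction, the $\succ^\mathrm{b}$-largest good among all goods allocated at step $t$ or later. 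Since $S$ consists of exactly the goods not yet allocated after step $t-1$ (identical in both runs by the induction hypothesis), the good that $\bm{\succ}^*$ allocates at step $t$ is the top element of $S$ with respect to $\succ^\mathrm{b}$. Therefore $\mathrm{top}(\succ^\mathrm{b}, S)$ equals the good chosen by the $\bm{\succ}^*$-run, and both runs allocate the same good to agent $i$ at step $t$, preserving the invariant.

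The main point requiring a little care is the direction of the argument in the inductive step: one must use the definition of $\succ^\mathrm{b}$ (the ranking by allocation order in the $\bm{\succ}^*$-run) to conclude that the $\bm{\succ}^\mathrm{b}$-run is \emph{forced} to mimic the $\bm{\succ}^*$-run, rather than assuming it. Concretely, I would note that among the goods remaining in $S$ at step $t$, the one allocated next by the $\bm{\succ}^*$-run received the smallest index $h_\ell$, so it is the $\succ^\mathrm{b}$-maximum of $S$, which is exactly what the $\bm{\succ}^\mathrm{b}$-run picks. I do not expect any genuine obstacle here: the statement is almost immediate from the construction, which is why it is phrased as a \textbf{fact} rather than a lemma. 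Once the step-by-step correspondence is established, the equality of the two final allocations follows immediately, completing the argument.
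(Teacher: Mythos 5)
Your proof is correct and matches the paper's intent exactly: the paper states this fact without proof, asserting it ``follows directly from the definition of the bluff profile and the description of Round-Robin,'' and your step-by-step induction (the remaining set at step $t$ is $\{h_t,\ldots,h_m\}$ in both runs, so $\mathrm{top}(\succ^\mathrm{b}, S)=h_t$ is forced and coincides with the good the $\bm{\succ}^*$-run allocates) is precisely the natural formalization of that remark.
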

	
	An interesting observation about this fact is that, combined with Proposition \ref{prop:RR_ef1_canc} and Theorem \ref{thm:bluff_PNE_cancelable}, it implies that there is at least one PNE of Mechanism \ref{alg:MRR} which is \efo! 
	Of course, it is now known that all exact PNE of Round-Robin are \efo for agents with \emph{additive} valuation functions and, 
	as we will see later on, even approximate PNE have (approximate) \efo guarantees for much more general instances, including the case of \emph{subadditive cancelable} valuation functions.
	
	\begin{theorem}\label{thm:bluff_PNE_cancelable}
		When all agents have cancelable valuation functions, the bluff profile is an exact PNE of Mechanism \ref{alg:MRR}.
	\end{theorem}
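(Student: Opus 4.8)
The plan is to fix an agent $i$ and show that, when every other agent reports the bluff order $\succ^\mathrm{b}$, agent $i$ cannot obtain more value than her bluff bundle by deviating to any total order $\succ_i'$. Write $b_r = (r-1)n + i$, so that under $\bm{\succ}^\mathrm{b}$ the mechanism allocates the goods in the order $h_1, h_2, \ldots, h_m$ and agent $i$ receives $B_i = \{h_{b_1}, \ldots, h_{b_k}\}$ (this is \Cref{fact:bluff} read off for agent $i$). The single structural fact I would extract from the construction of the bluff order is that, in $\textrm{Round-Robin}(\bm{\succ}^*)$, agent $i$ picked $h_{b_r}$ as her singleton-favorite among the then-available goods $\{h_{b_r}, h_{b_r+1}, \ldots, h_m\}$; hence $v_i(h_{b_r}) \ge v_i(h_s)$ for every $s \ge b_r$. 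Taking $s = b_{r+1}$ shows in particular that $v_i(h_{b_1}) \ge v_i(h_{b_2}) \ge \cdots \ge v_i(h_{b_k})$, i.e.\ the bluff bundle is already sorted in non-increasing singleton value by round.

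Next I would analyze an arbitrary deviation $\succ_i'$ against $\bm{\succ}^\mathrm{b}_{-i}$ and let $A_i'$ be the resulting bundle. The heart of the argument is a purely combinatorial reachability bound that does not mention the valuations at all: for every $\ell \in [k]$,
\[
\bigl| A_i' \cap \{h_1, \ldots, h_{b_\ell - 1}\} \bigr| \le \ell - 1 ,
\]
i.e.\ agent $i$ can hold at most $\ell-1$ of the first $b_\ell - 1 = (\ell-1)n + (i-1)$ goods. The reason is that every \emph{other} agent always grabs the smallest-indexed available good. If agent $i$ holds no such prefix good the bound is trivial; otherwise let $r^*$ be the last round in which she takes a good $h_s$ with $s \le b_\ell - 1$. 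Since $h_s$ survives until agent $i$'s turn in round $r^*$, it was available at all earlier steps, so each of the $(r^*-1)(n-1)+(i-1)$ prior picks by other agents consumed a good of index $\le s \le b_\ell - 1$. Counting the prefix goods consumed by this turn (these, plus agent $i$'s $q$ prefix goods, total at most $b_\ell - 1$) yields $q \le (\ell-1)n - (r^*-1)(n-1)$; combined with the trivial $q \le r^*$, a short case split on whether $r^* < \ell$, $r^* = \ell$, or $r^* > \ell$ gives $q \le \ell - 1$ in every case (for $r^*<\ell$ use $q\le r^*$; for $r^*\ge\ell$ the first bound is already $\le \ell-1$ since $n\ge 1$).

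With the reachability bound in hand, the pieces fit together. Because $v_i(h_{b_\ell}) \ge v_i(h_s)$ for all $s \ge b_\ell$, any good of $A_i'$ whose singleton value strictly exceeds $v_i(h_{b_\ell})$ must lie in the prefix $\{h_1, \ldots, h_{b_\ell-1}\}$, of which $A_i'$ contains at most $\ell-1$; hence, listing $A_i'$ in non-increasing singleton value as $a_{(1)}, \ldots, a_{(k)}$, we get $v_i(a_{(\ell)}) \le v_i(h_{b_\ell})$ for every $\ell$. This is precisely a pairwise singleton domination of $A_i'$ by $B_i$, so \Cref{lem:cans_ord} (the same cancelable-domination lemma used in \Cref{prop:RR_ef1_canc}) yields $v_i(A_i') \le v_i(B_i)$. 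As $i$ and $\succ_i'$ were arbitrary, $\bm{\succ}^\mathrm{b}$ is an exact PNE of \Cref{alg:MRR}.

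I expect the reachability bound of the second paragraph to be the only genuine obstacle: it is where one must rule out agent $i$ ``banking'' high-value low-indexed goods by sacrificing early rounds, the point being that the other agents deplete every prefix as quickly as possible, so the budget of $\ell-1$ prefix goods before round $\ell$ cannot be beaten. Everything else---the structural monotonicity fact of the first paragraph and the final assembly---is routine once the cancelable-domination lemma is available, and it is worth noting that cancelability is invoked only at that very last step.
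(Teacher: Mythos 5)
Your proof is correct, and it takes a genuinely different route from the paper's. The paper argues by an iterative exchange on the deviation's pick order: it indexes the deviation bundle by allocation round, locates the \emph{latest} round $r$ with $v_i(y_r) > v_i(h_{(r-1)n+i})$, deduces that agent $i$ must have grabbed some good $h_t$ with $t > (r-1)n+i$ in an earlier round, swaps the two picks, verifies that the swapped sequence is still realizable by a modified report yielding the \emph{same} bundle, and iterates until round-wise singleton domination holds, at which point Lemma~\ref{lem:cans_ord} applies. You replace all of this with a single valuation-free counting bound: against $n-1$ copies of the bluff order, any deviation by agent $i$ captures at most $\ell-1$ of the first $(\ell-1)n+i-1$ goods of the bluff sequence; combined with the monotonicity fact $v_i(h_{b_\ell}) \ge v_i(h_s)$ for $s \ge b_\ell$, sorting $A_i'$ by singleton value yields the same pairwise domination, and Lemma~\ref{lem:cans_ord} finishes as before. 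I checked your reachability argument---the count of $(r^*-1)(n-1)+(i-1)$ other-agent picks, each necessarily consuming a good of index below $s$ while $h_s$ survives, the resulting bound $q \le (\ell-1)n - (r^*-1)(n-1)$, and the case split on $r^*$---as well as the boundary cases $\ell=1$ and $r^* > \ell$; all are sound. What your route buys: it avoids the paper's somewhat delicate step of re-realizing each swapped ordering as an actual deviation, and your prefix bound is essentially the same counting technique the paper itself deploys later for the submodular case (the claim inside Lemma~\ref{lem:DecMansub}, and Lemma~\ref{lem:cardinality}), so your argument unifies the cancelable case with that machinery. What the paper's route buys: the exchange argument stays closer to the mechanism's dynamics and produces an explicit reordering of the actual deviation bundle, which serves as a template for the more complicated renaming (Algorithm~\ref{mech:Devnam}) needed in the submodular analysis. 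Both proofs quarantine cancelability into the single final application of Lemma~\ref{lem:cans_ord}, as you correctly observe.
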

	
	We first need to prove the following lemma that generalizes a straightforward property of additive functions for cancelable functions.
	
	\begin{lemma}\label{lem:cans_ord}
		Suppose that $v(\cdot)$ is a cancelable valuation function. Consider sets $X=\{x_1, x_2, \ldots, x_k\}$ and $Y=\{y_1, y_2, \ldots, y_k\}$. 
		If for every $j \in [k]$, we have that  $v(x_j) \geq v(y_j)$, then $v(X) \geq v(Y)$.
	\end{lemma}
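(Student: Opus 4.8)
The plan is to prove the statement by induction on $k$, using the monotone consequence of cancelability recorded in Lemma \ref{lem:cancelable}: namely, that $v(S) \ge v(T)$ implies $v(S \cup R) \ge v(T \cup R)$ whenever $R$ is disjoint from $S \cup T$. The base case $k = 1$ is immediate, since the hypothesis $v(x_1) \ge v(y_1)$ is precisely $v(X) \ge v(Y)$.

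For the inductive step I would ``peel off'' one matched pair $(x_l, y_l)$, apply the inductive hypothesis to $X' = X \setminus \{x_l\}$ and $Y' = Y \setminus \{y_l\}$ (which are still paired coordinatewise by the remaining indices) to get $v(X') \ge v(Y')$, and then reinsert the removed elements via the chain
\[
v(X) = v(X' \cup \{x_l\}) \ge v(Y' \cup \{x_l\}) \ge v(Y' \cup \{y_l\}) = v(Y),
\]
where the first inequality adds $R = \{x_l\}$ to $v(X') \ge v(Y')$ and the second adds $R = Y'$ to $v(x_l) \ge v(y_l)$, both via Lemma \ref{lem:cancelable}.

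The main obstacle is that Lemma \ref{lem:cancelable} only applies when the added set $R$ is disjoint from the two sets being compared; concretely, the reinsertion steps require $x_l \notin Y'$ (the condition $y_l \notin Y'$ is automatic since $Y$ is a set). If $X$ and $Y$ overlap this can fail for a careless choice of pivot, so the key is to choose $l$ wisely. I would argue that one can always find an index $l$ with $x_l \notin Y$ \emph{or} $x_l = y_l$: if no such $l$ existed, then every $x_l$ would equal some $y_{\sigma(l)}$ with $\sigma$ a fixed-point-free permutation of $[k]$, forcing $X = Y$ as sets, in which case $v(X) = v(Y)$ and there is nothing to prove. For any such admissible $l$ one checks that $x_l \notin Y'$ holds in both cases (either $x_l \notin Y \supseteq Y'$, or $x_l = y_l \notin Y \setminus \{y_l\} = Y'$), so both applications of Lemma \ref{lem:cancelable} are justified and the induction goes through. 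Note that when $X$ and $Y$ are disjoint---as in the intended application to Proposition \ref{prop:RR_ef1_canc}---every index is admissible and the argument collapses to a straightforward one-element-at-a-time swap, so the overlap analysis above is only needed to make the lemma robust in full generality.
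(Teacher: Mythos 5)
Your proof is correct, and while it rests on the same engine as the paper---the union-monotonicity consequence of cancelability recorded in Lemma \ref{lem:cancelable}---it organizes the argument differently, and the difference lies precisely in the part that is delicate, namely the disjointness bookkeeping. The paper handles disjointness globally and up front: it sorts $X$ by non-increasing value and repeatedly renames $Y$ (using the observation that a collision $y_j = x_t$ with $t < j$ forces a block of equal values $v(x_t) = \cdots = v(x_j)$) to guarantee $y_j \notin \{x_1,\ldots,x_{j-1}\}$ for every $j$; after this preprocessing it argues by contradiction, peeling the last pair and descending through prefixes until reaching $v(x_1) < v(y_1)$. You instead run a direct induction and choose the peeled pair adaptively: you pick $l$ with $x_l \notin Y$ or $x_l = y_l$, check in both cases that $x_l \notin Y'$ (which is exactly what both invocations of Lemma \ref{lem:cancelable} require, since $y_l \notin Y'$ is automatic), and dispose of the case where no admissible $l$ exists by noting it forces $X = Y$, making the claim trivial; your pivot-existence argument is sound, as the induced injection from $X$ into $Y$ between equal-size sets does force $X = Y$. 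Your two-step chain is the mirror image of the paper's: the paper adjoins $y_k$ to both prefixes and then swaps $y_k$ for $x_k$ inside the union, while you adjoin $x_l$ to both sides and then swap $x_l$ for $y_l$. What each buys: the paper's renaming fixes one canonical alignment once and for all, at the price of the sorted-$X$/equal-value-block argument, which is the most intricate paragraph of its proof; your adaptive pivot avoids that delicacy entirely and localizes all disjointness checking to a short two-case verification, arguably yielding a cleaner write-up. One small correction to your closing remark: the overlap analysis is not merely insurance for ``full generality.'' While the application in Proposition \ref{prop:RR_ef1_canc} indeed has $X \cap Y = \emptyset$, the lemma is also invoked in the proof of Theorem \ref{thm:bluff_PNE_cancelable} with $X = A_i^{\mathrm{b}}$ and $Y$ the bundle an agent obtains by deviating, and these bundles can share goods, so the overlapping case is genuinely exercised by the paper.
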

	
	\begin{proof}
		We begin by arguing that it is without loss of generality to first assume that the elements of $X$ are ordered by non-increasing value with respect to $v$ and then also assume that $y_j\notin \{x_1, x_2, \ldots, x_{j-1}\}$, for any $j \in [k]$. The former is indeed a matter of reindexing, if necessary, the elements of $X$ and consistently reindexing the corresponding elements of $Y$. For the latter, 
		suppose that there exist $j$ such that $y_j = x_t$ for $t \le j-1$ and consider the smallest $t$ for which this happens. We have $v(x_t) \ge v(x_{t+1}) \ge \ldots \ge v(x_j)$ by the assumption on the ordering of the elements of $X$,  $v(x_j) \ge v(y_j)$ by hypothesis, and $v(y_j) = v(x_t)$. Thus, $v(x_t) = v(x_{t+1}) = \ldots =  v(x_j)$. Now we may rename the elements of $Y$ to $\{y'_1,  \ldots, y'_k\}$ by inserting $y_j$ to the $t$-th position, i.e., $y'_t = y_j$, $y'_{s} = y_{s-1}$, for $t+1 \le s \le j$, and $y'_{s} = y_{s}$, for $s < t$ or $s > j$. Since only $y_t, y_{t+1}, \ldots, y_j$ changed indices but $v(x_t) = v(x_{t+1}) = \ldots =  v(x_j)$, we again have that $v(x_j) \geq v(y'_j)$ for every $j \in [k]$. Moreover, now the  smallest $\ell$ for which there exist $j>\ell$ such that $y_j = x_\ell$ is strictly larger than $t$. By repeating this renaming of the elements of $Y$ we end up with a renaming $\{y^*_1,  \ldots, y^*_k\}$ such that for every $j \in [k]$, $v(x_j) \geq v(y^*_j)$ and $y^*_j\notin \{x_1, x_2, \ldots, x_{j-1}\}$.

		So, assuming that the elements of $X$ are ordered in non-increasing value with respect to $v$ and that $y_j\notin \{x_1, x_2, \ldots, x_{j-1}\}$, for any $j \in [k]$, suppose towards a contradiction that $v(X) < v(Y)$. That is,  $v(\{x_1, x_2, \ldots, x_k\}) \allowbreak < v(\{y_1, y_2, \allowbreak \ldots, y_k\})$. Observe that if $v(\{x_1, x_2, \ldots,x_{k-1}\}) \geq v(\{y_1, y_2, \ldots,y_{k-1}\})$, this would imply that $v(\{x_1,  \ldots,x_{k-1}, y_k\}) \geq v(\{y_1,  \ldots,y_{k-1}, y_k\})$, by the definition of cancelable valuations and the fact that $y_k \notin \{x_1, \ldots, x_{k-1}\} \cup \{y_1, \ldots, y_{k-1}\}$.  This leads to 
		\[
		v(\{x_1,  \ldots,x_{k-1}, x_k\}) \geq v(\{x_1,  \ldots,x_{k-1}, y_k\}) \geq v(\{y_1,  \ldots,y_{k-1}, \allowbreak y_k\})\,,
		\]
		where the first inequality follows from  $v(x_k) \geq v(y_k)$ and Fact \ref{lem:cancelable}, contradicting our initial assumption. Therefore, $v(\{x_1,  \ldots,x_{k-1}\})< v(\{y_1, \ldots,y_{k-1}\})$. By repeating the same argument $k-2$ more times, we end up with $v(x_1)<v(y_1)$, a contradiction.
	\end{proof}
	
	\begin{proof}[Proof of Theorem \ref{thm:bluff_PNE_cancelable}]
		Now we show that the bluff profile for cancelable valuations is an exact PNE. 
		Consider the goods named $h_1,\dots, h_m$ as in the bluff profile, i.e., by the order in which they are picked when each agent reports their preference order to be the one induced by all singleton good values.
		Consider agent $i$.
		Her assigned set of goods under the bluff profile is $A_i^\mathrm{b}= \{ h_i, h_{n+i},\dots, h_{(k-1)n+i}\}$, where $k = m/n$. 
		Assume now that she deviates from $\succ^\mathrm{b}$ to $\succ_i$, resulting in some allocated set $A_i=\{y_1, y_2,\dots, y_k\}$, where we assume $y_r$ to be allocated in round $r$. We need to show $v_i(A_i^\mathrm{b})\geq v_i(A_i)$.
		
		To this end, we compare the goods allocated to agent $i$ in both reports, one by one. If $v_i(y_r) \le v_i(h_{(r-1)n+i})$ for every $r \in [k]$, then we are done by applying Lemma \ref{lem:cans_ord} with $A_i^\mathrm{b}$ and $A_i$.
		If some of these inequalities fail, let  $r$ denote the latest round such that $v_i(y_r)>v_i(h_{(r-1)n+i}$.
		Therefore, in the execution of Mechanism \ref{alg:MRR} with the  bluff profile as input, $y_r$ was no longer available in round $r$. However, $y_r$ becomes available in round $r$ once agent $i$ deviates. This can only stem from the fact that at some point before round $r$, a good $h_t$ with $t>(r-1)n+i$ was picked (since the overall number of goods picked per round always stays the same). 
		Clearly, the only agent who could have done so (since she is the only one deviating from the common bluff order) is agent $i$. Therefore, it holds that $h_t=y_j$ for some $j<r$.
		Now, we replace the ordered set $Y = (y_1, y_2, \dots, y_k)$ by $Y' = (y_1, \dots, y_{j-1}, y_r, y_{j+1},\dots, y_{r-1}, y_j, y_{r+1},\dots, y_k)$, i.e., we simply exchange $y_r$ and $y_j$. 
		It will be convenient to rename $y_1, \ldots, y_k$ so that $Y' = (y'_1, y'_2, \dots, y'_k)$
		
		We claim that it if agent $i$ reports a preference ranking $\succ'_i$ that starts with all goods in $Y'$, in that specific order, followed by everything else, in any order, she still gets $A_i$ but the goods are allocated in the order suggested by $Y'$. 
		Indeed, first notice that the first $j-1$ rounds of Round-Robin will be the same as in the run with the original deviation $\succ_i$. Further, $y'_j=y_r$ is allocated earlier under $\succ'_i$ than under $\succ_i$, and thus it surely is available at the time. 
		After that, rounds $j-1$ to $r-1$ will be the same as in the run with the deviation $\succ_i$.
		Now $y'_r = y_j$ is allocated later than before, namely in round $r$, but it is not among the first $(r-1)n+i$ goods in the bluff order, as noted above, which means it is not allocated to any other agent in any round before the $r$-th under $\succ'_i$. Finally, rounds $r+1$ to $k$ will be the same as in the run with $\succ_i$.
		
		Although agent $i$ still is assigned the same set $A_i$ by deviating to $\succ'_i$, we now have $v_i(y'_r)=v_i(y_j) \le v_i(h_{(r-1)n+i}$, where the inequality holds because both goods are available in round $r$ of the bluff run, and agent one prefers $h_{(r-1)n+i}$. Also, all later goods in $Y'$ remain unchanged, i.e., $y'_{s} =y_{s}$ for $s > r$. Therefore, the latest occurrence of some $y'_{\ell} > h_{(\ell-1)n+i}$ now happens at an earlier point in the sequence, if at all.
		Repeating this process until no such occurrence is left yields an ordering $Y^* = (y^*_1, y^*_2, \dots, y^*_k)$ of $A_i$ such that for all $r\in [k]$, $v_i(y^*_r)\leq v_i(h_{(r-1)n+i})$. Now using Lemma \ref{lem:cans_ord} completes the proof.
	\end{proof}

	%%%%%%%%%%%%%%%%%%%%%%%%%%%%%%%%%%%%
	\subsection{Submodular valuations}
	\label{subsec:submod}
	%%%%%%%%%%%%%%%%%%%%%%%%%%%%%%%%%%%%
	We move on to the much more general class of submodular valuations. 
	In order to define the bluff profile in this case, we again would like to start from the truthful profile. However, recall that Round-Robin restricts each agent's report to specifying an ordering on the good set $M$ and these preference rankings are not expressive enough to fully capture submodular valuation functions. In fact, it is not obvious what `truthful' means here without further assumptions on what information is known by the agents. 
	Still, we define a \textit{truthfully greedy} allocation and use this as our starting point.
	
	Imagine that, instead of having a full preference profile from the beginning, we only ask the active agent $i$ (i.e., the agent to which we are about to allocate a new good) for the good with the largest marginal value with respect to her current set of goods $A_i$ and give this to her. Let $h_1, h_2, \ldots, h_m$ be a renaming of the goods according to the order in which they would be allocated in this hypothetical truthfully greedy scenario and $\succ^\mathrm{b}$ be the corresponding total order. Like in the cancelable case, the bluff profile is the preference profile $\bm{\succ}^\mathrm{b} \,= (\succ^\mathrm{b}, \succ^\mathrm{b}, \ldots, \succ^\mathrm{b})$.
	
	Formally, the renaming of the goods is performed as described in Algorithm \ref{alg:Grenam} below. It should be noted that this definition of the bluff profile is consistent with the definition for cancelable functions, assuming that all ties are resolved lexicographically. 
	
	\makeatletter
	\renewcommand{\ALG@name}{Algorithm}
	\makeatother
	
	\begin{algorithm}[ht]
		\caption{Greedy renaming of goods for defining the bluff profile 
			\\ Input: $N$, $M$, value oracles for $v_1(\cdot), \ldots, v_n(\cdot)$
		}\label{alg:Grenam}
		\begin{algorithmic}[1]
			\vspace{1pt}\State $X_i=\emptyset$ for $i \in [n]$
			\For{$j = 1, \dots, m$}
			\State $i = (j-1) \!\pmod n + 1$
			\State $h_j = \displaystyle \argmax_{g \in M\setminus \bigcup_{\ell} X_{\ell}}v_i(g \,|\, X_i)$ \Comment{Ties are broken lexicographically.} \smallskip
			\State $X_i=X_i\cup \{h_j\}$
			\EndFor \vspace{-1pt}
			\State \Return $(h_1, h_2, \ldots, h_m)$
		\end{algorithmic}
	\end{algorithm}

	Also notice that the allocation $\mathrm{Round\text{-}Robin}(\bm{\succ}^\mathrm{b})$ produced under the bluff profile is exactly $(X_1, X_2, \allowbreak\ldots,\allowbreak X_n)$, as described in Algorithm \ref{alg:Grenam}, i.e., $X_i = A_i^\mathrm{b}= \{ h_i, h_{n+i},\dots, h_{(k-1)n+i}\}$, where recall that $k = m/n$.

	The main result of this section is Theorem \ref{thm:RReq} stating that the bluff profile is a $\frac{1}{2}$-approximate PNE when agents have submodular valuation functions. While this sounds weaker than Theorem \ref{thm:bluff_PNE_cancelable}, it should be noted that for submodular agents Mechanism \ref{alg:MRR} does not have PNE in general, even for relatively simple instances, as stated in Proposition \ref{thm:no_PNE}. In fact, even the existence of approximate equilibria can be seen as rather surprising, given the generality of the underlying valuation functions.  
	
	\begin{proposition}\label{thm:no_PNE}
		There exists an instance where all agents have submodular valuation functions such that Mechanism \ref{alg:MRR} has no $(\frac{3}{4} +\varepsilon)$-approximate PNE.
	\end{proposition}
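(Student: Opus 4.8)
The plan is to exhibit a single explicit instance and show that \emph{no} reported profile can be stable up to the claimed factor. I would work with (weighted) \emph{coverage} valuations: each agent $i$ is given a ground set of ``features'' with non-negative weights and a map sending every good to the subset of features it provides, and $v_i(S)$ is the total weight of the features covered by $S$. Such functions are manifestly monotone submodular (they satisfy the inequality of Theorem~\ref{thm:SM}) and, crucially, by letting two distinct goods cover exactly the same feature they become \emph{non-cancelable} --- the equality-preservation property of Lemma~\ref{lem:cancelable} fails --- so the instance genuinely lies outside the scope of Theorem~\ref{thm:bluff_PNE_cancelable}. The instance is kept to a constant number of agents and goods, with the weights arranged so that a \emph{scarce, redundant} resource is contested: every bundle realizing an agent's maximum value must claim one particular feature that only one agent can ultimately secure, while acquiring a second copy of an already-covered feature is wasteful.

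Next I would set up the reduction that turns ``no $(\tfrac34+\varepsilon)$-approximate PNE'' into a purely combinatorial claim. It suffices to prove that for \emph{every} reported profile $\bm{\succ}$ there are an agent $i$ and an alternative total order $\succ_i'$ with $v_i(\mathcal{M}(\bm{\succ})) \le \tfrac34\, v_i(\mathcal{M}(\succ_i',\bm{\succ}_{-i}))$ and $v_i(\mathcal{M}(\succ_i',\bm{\succ}_{-i}))>0$. Indeed, were $\bm{\succ}$ a $(\tfrac34+\varepsilon)$-approximate PNE, the definition would give $(\tfrac34+\varepsilon)\,v_i(\mathcal{M}(\succ_i',\bm{\succ}_{-i})) \le v_i(\mathcal{M}(\bm{\succ}))$; chaining with the above inequality yields $(\tfrac34+\varepsilon)X \le \tfrac34 X$ for $X=v_i(\mathcal{M}(\succ_i',\bm{\succ}_{-i}))>0$, i.e. $\varepsilon X\le 0$, a contradiction. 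Hence it is enough to pin down, for each profile, one agent who can improve her value by a factor of at least $\tfrac43$.

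To discharge that claim I would prune the (finite but large) space of profiles to a handful of essentially distinct cases. Two reductions help: first, $\mathcal{M}$ depends on a profile only through the order in which goods are picked, so a profile can be identified with its resulting allocation together with the per-round pick sequence; second, the symmetry of the construction collapses many of these. For the improving deviation I would use a single robust fact about Mechanism~\ref{alg:MRR}: when it is agent $i$'s turn she may place any still-available good at the front of $\succ_i'$ and thus claim it, so against the \emph{fixed} reports $\bm{\succ}_{-i}$ agent $i$ can secure, round by round, a greedy best-response bundle; submodularity (diminishing marginals) then lets me lower-bound this best-response value and, dually, upper-bound the value of the bundle $i$ receives under $\bm{\succ}$. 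Case by case, the scarce-redundant-resource structure forces the stuck agent either to miss the contested feature or to hold a redundant copy, making the ratio of the two bounds at most $\tfrac34$.

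The hard part is the construction, not the bookkeeping. Because Round-Robin imposes a strict priority order on the agents within every round, the earliest agents can typically grab near-optimal bundles, which tends to \emph{create} equilibria; defeating this requires encoding a cyclic conflict (a generalized matching-pennies over the redundant resource) so that every allocation leaves \emph{someone} able to reclaim the contested feature, while the priority order prevents this reclamation from being simultaneously available to all. At the same time the weights must be tuned so that the unavoidable loss is \emph{exactly} a factor $\tfrac43$ --- no larger (else the bound would beat $\tfrac34$) and no smaller (else some profile would survive as an approximate PNE). Producing a small instance that exhibits this cycle, and then verifying both that the case analysis is genuinely exhaustive and that each claimed deviating bundle is reachable against the fixed opponent reports, is where the real difficulty lies.
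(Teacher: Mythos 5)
Your reduction step is sound and matches the paper's logic: to refute every $(\tfrac34+\varepsilon)$-approximate PNE it suffices to show that in some fixed instance, for \emph{every} profile $\bm{\succ}$ there is an agent who can deviate and improve her value by a factor of $\tfrac43$, and your derivation of the contradiction from the definition is correct. The problem is that this is where your proof stops being a proof. The statement is an existence claim, so its proof essentially \emph{is} the explicit instance together with the exhaustive verification --- and you supply neither. You describe desiderata (coverage valuations, a ``scarce, redundant'' feature, a cyclic matching-pennies conflict, weights tuned so the unavoidable loss is exactly $\tfrac43$) and then concede that ``producing a small instance that exhibits this cycle, and then verifying both that the case analysis is genuinely exhaustive and that each claimed deviating bundle is reachable'' is where the real difficulty lies. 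That concession is accurate: without the construction, nothing has been established, and it is not evident from your outline that the tuning you describe is even achievable (e.g.\ that the priority structure of Round-Robin does not stabilize some profile of your hypothetical instance).

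For comparison, the paper's construction is much smaller and more rigid than your sketch anticipates: two agents and four goods, all singletons worth $2$, all $3$- and $4$-sets worth $4$, and the six pairs split so that agent $1$ values exactly the two perfect-matching pairs $\{g_1,g_4\},\{g_2,g_3\}$ at $4$ (the rest at $3$), while agent $2$'s $4$-valued pairs are complementary. Since with two agents and four goods each agent receives exactly the complement of the other's pair, \emph{every} outcome gives one agent value $3$; a short argument (keyed to which good the opponent ranks first) shows that agent can always deviate to a value-$4$ pair, yielding the ratio $3/4$ directly --- no weight tuning, no feature-coverage machinery, and only a one-line check of submodularity via the marginal values $v_i(g\,|\,S)$ as $|S|$ grows. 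Your plan is a reasonable research program, but as submitted it has a genuine gap: the witness instance, the heart of the proof, is missing.
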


	\begin{proof}
		Consider an instance with 2 agents and 4 goods $M = \{g_1, g_2, g_3, g_4\}$, with the following valuation for all possible 2-sets:
		\begin{center}
			\vspace{-2ex}\begin{minipage}{0.31\textwidth}
				$$v_1(\{g_1, g_2\}) = 3$$
				$$v_1(\{g_1, g_3\}) = 3$$
				$$v_1(\{g_1, g_4\}) = 4$$
				$$v_1(\{g_2, g_3\}) = 4$$
				$$v_1(\{g_2, g_4\}) = 3$$
				$$v_1(\{g_3, g_4\}) = 3$$
			\end{minipage}
			\begin{minipage}{0.31\textwidth}
				$$v_2(\{g_1, g_2\}) = 4$$
				$$v_2(\{g_1, g_3\}) = 4$$
				$$v_2(\{g_1, g_4\}) = 3$$
				$$v_2(\{g_2, g_3\}) = 3$$
				$$v_2(\{g_2, g_4\}) = 4$$
				$$v_2(\{g_3, g_4\}) = 4$$
			\end{minipage}
		\end{center}\smallskip
		In addition, all individual goods have the same value: $v_1(x) = v_2(x) = 2$ for $x \in M$, while all $3$-sets and $4$-sets have value $4$, for both agents.
		
		We begin by establishing that this valuation function is indeed submodular for both agents. Observe for any set $S \subseteq M$ and $i\in[2], j\in[4]$ we have:\
		\begin{align*}
			|S| = 0 &\Rightarrow v_i(g_j \;|\; S) \in \{2\}\\[0.5ex]
			|S| = 1 &\Rightarrow v_i(g_j \;|\; S) \in \{1, 2\}\\[0.5ex]
			|S| = 2 &\Rightarrow v_i(g_j \;|\; S) \in \{0, 1\}\\[0.5ex]
			|S| = 3 &\Rightarrow v_i(g_j \;|\; S) = 0 \,,
		\end{align*}
		which immediately implies that both valuation functions are indeed submodular.
		
		Notice that for any reported preferences ${\succ}_1, {\succ}_2$, one of the two agents will receive goods leading to a value of $3$. If this is the agent $1$, she can easily deviate and get $4$ instead. In particular, if agent $2$ has good $g_2$ or $g_3$ first in their preferences then agent $1$ can get $\{g_1, g_4\}$, and if agent $2$ has good $g_1$ or $g_4$ as first then agent $1$ can get $\{g_2, g_3\}$ instead. On the other hand, if agent $2$ received a value of $3$ they can also always deviate to $4$. Notice that for any $g_a$, agent $2$ always has two sets different sets $\{g_a, g_b\},\{g_a, g_c\}$ with value $4$ and one $\{g_a, g_d\}$ with value 3. Thus, for any preference of agent $1$ with $g_{\hat a} \succ_1 g_{\hat b} \succ_1 g_{\hat c} \succ_1 g_{\hat d}$, agent 2 can deviate and get either $\{g_{\hat b}, g_{\hat d}\}$ or $\{g_{\hat c}, g_{\hat d}\}$, one of which must have value $4$. Therefore, in every outcome there exists an agent that can deviate to improve their value from $3$ to $4$.
	\end{proof}

	Moving towards the proof of Theorem \ref{thm:RReq} for the submodular case, we note that although it is very different from that of Theorem \ref{thm:bluff_PNE_cancelable}, we will still need an analog of the main property therein, i.e., the existence of a good-wise comparison between the goods an agent gets under the bluff profile and the ones she gets by deviating. As expected, the corresponding property here (see Lemma \ref{lem:DecMansub}) is more nuanced 
	and does not immediately imply Theorem \ref{thm:RReq} as we are now missing the analog of Lemma \ref{lem:cans_ord}. 
	
	Throughout this section, we are going to argue about an arbitrary agent $i$. To simplify the notation, let us rename $X_i = A_i^\mathrm{b}= \{ h_i, h_{n+i},\dots, h_{(k-1)n+i}\}$ to simply $X=\{x_1, x_2,\ldots,x_k\}$, where we have kept the order of indices the same, i.e., $x_j = h_{(j-1)n+i}$. This way, the goods in $X$ are ordered according to how they were allocated to agent $i$ in the run of Mechanism \ref{alg:MRR} with the bluff profile as input. 
	
	We also need to define the ordering of the goods agent $i$ gets when she deviates from the bluff bid $\succ^\mathrm{b}$ to another preference ranking $\succ_i$. Let $A_i = Y= \{y_1, y_2, \ldots,y_k\}$ be this set of goods. Instead of renaming the elements of $Y$ in a generic fashion like in the proof of Theorem \ref{thm:bluff_PNE_cancelable}, doing so becomes significantly more complicated, and we need to do it in a more systematic way, see  Algorithm \ref{mech:Devnam}.
	
	\begin{algorithm}[ht]
		\caption{Greedy renaming of goods for the deviating agent $i$ 
			\\ Input: $X=\{x_1, x_2,\ldots,x_k\}$, $Y$, and a value oracle for $v_i(\cdot)$}\label{mech:Devnam}
		\begin{algorithmic}[1]
			\vspace{2pt}\State $Z=Y$
			\For{$j = |Y|, \dots, 1$}
			\State $y'_j = \displaystyle \argmin_{g \in Z}v_i(g \,|\, \{x_1,\ldots, x_{j-1}\})$ \Comment{Ties are broken lexicographically.}  \smallskip
			\State $Z=Z\setminus \{y'_j\}$ \vspace{-1pt}
			\EndFor
			\State \Return $(y'_1, y'_2, \ldots, y'_{|Y|})$
		\end{algorithmic}
	\end{algorithm}
	
	In what follows, we assume that the indexing $y_1, y_2, \ldots,y_k$ is already the result of Algorithm \ref{mech:Devnam}. This renaming is crucial and it will be used repeatedly. In particular, we need this particular ordering in order to prove that $v_i(x_j \,|\, \{x_1,\ldots, x_{j-1}\}) \geq v_i(y_j \,|\, \{x_1,\ldots, x_{j-1}\})$, for all $j\in [k]$, in Lemma \ref{lem:DecMansub} below. 
	Towards that, we need to fix some notation for the sake of readability. For $j\in [k]$, we use $X^j_-$ and $X^j_+$ to denote the sets $\{x_1, x_2, \ldots, x_{j}\}$ and $\{x_j, x_{j+1}, \ldots, x_{k}\}$, respectively.  The sets $Y^j_-$ and $Y^j_+$, for $j\in [k]$, are defined analogously. We also use $X^0_- = Y^0_- = \emptyset$.
	The main high-level idea of the proof is that if $v_i(y_{\ell} \,|\, X^{\ell-1}_-) > v_i(x_{\ell} \,|\, X^{\ell-1}_-)$ for some $\ell$, then it must be the case that during the execution of Round-Robin$(\bm{\succ}^\mathrm{b})$ every good in $Y^{\ell}_- = \{y_1, \ldots, y_{\ell}\}$ is allocated before the turn of agent $i$ in  round $\ell$. Then, using a simple counting argument, we show that agent $i$ cannot receive all the goods in $Y^{\ell}_-$ when deviating, leading to a contradiction.

	\begin{lemma}\label{lem:DecMansub}
		Let $X=\{x_1, x_2,\ldots,x_k\}$ be agent $i$'s bundle in Round-Robin$(\bm{\succ}^\mathrm{b})$, where goods are indexed in the order they were allocated, and $Y = \{y_1, y_2, \ldots, y_k\}$ be $i$'s bundle in Round-Robin$({\succ}_i, \bm{\succ}^\mathrm{b}_{-i})$, where goods are indexed by Algorithm \ref{mech:Devnam}. Then, for every $j \in [k]$, we have  $v_i(x_{j} \,|\, X^{j-1}_-) \ge v_i(y_{j} \,|\, X^{j-1}_-)$.
	\end{lemma}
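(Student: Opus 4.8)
The plan is to argue by contradiction. Suppose the conclusion fails for some index $\ell$, i.e., $v_i(y_\ell \,|\, X^{\ell-1}_-) > v_i(x_\ell \,|\, X^{\ell-1}_-)$. First I would isolate the one property of the renaming in Algorithm \ref{mech:Devnam} that matters: at iteration $j=\ell$ of that algorithm the leftover set is exactly $Z=\{y_1,\ldots,y_\ell\}$ and $y_\ell$ is selected as $\argmin_{g\in Z} v_i(g\,|\,X^{\ell-1}_-)$. Hence $v_i(y_j\,|\,X^{\ell-1}_-)\ge v_i(y_\ell\,|\,X^{\ell-1}_-)$ for every $j\le \ell$, and chaining this with the failure assumption yields the strict inequality $v_i(y_j\,|\,X^{\ell-1}_-) > v_i(x_\ell\,|\,X^{\ell-1}_-)$ for all $j\in[\ell]$. (Notably this step, and in fact the whole lemma, is purely combinatorial; submodularity is only needed afterwards to turn these good-wise marginal inequalities into a bundle comparison.)

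Next I would translate this into the bluff run. Since every agent other than $i$ reports $\succ^\mathrm{b}$, in Round-Robin$(\bm{\succ}^\mathrm{b})$ the goods are handed out exactly in the order $h_1,h_2,\ldots,h_m$, so immediately before agent $i$'s turn in round $\ell$ her current bundle is $X^{\ell-1}_-$ and the goods already gone are precisely $P:=\{h_1,\ldots,h_{(\ell-1)n+i-1}\}$. By the greedy rule of Algorithm \ref{alg:Grenam}, $x_\ell$ maximizes $v_i(\cdot\,|\,X^{\ell-1}_-)$ over the goods still available at that moment (lexicographic tie-breaking). Therefore any good whose marginal value with respect to $X^{\ell-1}_-$ is \emph{strictly} larger than that of $x_\ell$ must already have been allocated, hence lie in $P$. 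Combined with the previous paragraph this gives $Y^\ell_- = \{y_1,\ldots,y_\ell\}\subseteq P$.

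The crux, and the step I expect to be the main obstacle, is a counting argument inside the deviating run Round-Robin$({\succ}_i,\bm{\succ}^\mathrm{b}_{-i})$ showing that agent $i$ cannot actually acquire all $\ell$ goods of $Y^\ell_-$, contradicting $Y^\ell_-\subseteq P\cap A_i$. Consider the first $|P|=(\ell-1)n+(i-1)$ steps of this run, i.e., rounds $1,\ldots,\ell-1$ together with agents $1,\ldots,i-1$ in round $\ell$; these are exactly the steps preceding agent $i$'s round-$\ell$ turn. Of these $|P|$ picks, agent $i$ makes exactly $\ell-1$ and the non-deviating agents make the remaining $|P|-(\ell-1)$. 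The key observation is that none of those non-deviating picks can fall outside $P$: a non-deviating agent, picking greedily in bluff order, abandons $P$ only once $P$ is fully exhausted, which would require all $|P|$ of its goods to be gone within strictly fewer than $|P|$ steps — impossible. Writing $a$ for the number of goods of $P$ that agent $i$ herself took in rounds $1,\ldots,\ell-1$, the number of $P$-goods still available at her round-$\ell$ turn is thus $|P|-a-(|P|-(\ell-1)) = (\ell-1)-a$, and this only decreases thereafter. Hence over the entire run agent $i$ receives at most $a+((\ell-1)-a)=\ell-1$ goods of $P$, contradicting $|Y^\ell_-|=\ell$ and $Y^\ell_-\subseteq P\cap A_i$. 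This contradiction establishes $v_i(x_j\,|\,X^{j-1}_-)\ge v_i(y_j\,|\,X^{j-1}_-)$ for all $j\in[k]$.
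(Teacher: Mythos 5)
Your proposal is correct, and it follows the paper's high-level plan --- assume $v_i(y_\ell\,|\,X^{\ell-1}_-) > v_i(x_\ell\,|\,X^{\ell-1}_-)$ for some $\ell$, use the $\argmin$ rule of Algorithm~\ref{mech:Devnam} to lift the violation to all of $y_1,\ldots,y_\ell$, invoke the greedy maximality of $x_\ell$ in the bluff run to get $Y^\ell_-\subseteq P$ (your $P$ is the paper's $B$), and finish by counting --- but your execution of the counting step is genuinely different and leaner. The paper's proof goes through Claim~\ref{cl:SetA-submod}, which establishes the stronger statement $B=D$: the set of goods allocated before agent $i$'s round-$\ell$ turn is \emph{identical} in the two runs. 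Its proof is itself a contradiction that looks ahead to round $\ell+1$: if agent $i$ took some good outside $B$ early, the non-deviators alone consume $\ell(n-1)+i-1\ge |B|$ goods by her round-$(\ell+1)$ turn, exhausting $B$, so she ends with at most $\ell-1$ goods of $Y^\ell_-$. You bypass $D$ and the equality $B=D$ entirely: your observation that every non-deviating pick among the first $|P|$ steps stays inside $P$ (a bluff-order agent abandons $P$ only once $P$ is exhausted, impossible within fewer than $|P|$ steps) immediately gives that only $(\ell-1)-a$ goods of $P$ survive to agent $i$'s round-$\ell$ turn, whence her total intake from $P$ over the entire run is at most $\ell-1<|Y^\ell_-|$, contradicting $Y^\ell_-\subseteq P\cap A_i$. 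This single count replaces the paper's two-stage argument and also absorbs the case $\ell=1$, which the paper handles separately at the start of its proof (your bound specializes to zero $P$-goods for agent $i$ when $\ell=1$, contradicting $y_1\in P\cap A_i$ directly). What the paper's route buys is the structural invariant $B=D$ itself --- the deviation cannot change which goods are gone before $i$'s turn --- but nothing downstream in the paper uses that claim outside this lemma, so your shorter count loses nothing. You are also right that submodularity plays no role in the lemma; it enters only afterwards, in Theorem~\ref{thm:RReq}, to convert the good-wise marginal inequalities into the bundle comparison.
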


	\begin{proof}
		The way goods in $X$ are indexed, we have that $x_j$ is the good allocated to agent $i$ in round $j$ of Round-Robin$(\bm{\succ}^\mathrm{b})$. Suppose, towards a contradiction, that there is some ${\ell} \in [k]$, for which we have $v_i(y_{\ell} \,|\, X^{\ell-1}_-) > v_i(x_{\ell} \,|\, X^{\ell-1}_-)$. First notice that ${\ell} \neq 1$, as $x_1$ is, by the definition of the bluff profile, a singleton of maximum value for agent $i$ excluding the goods allocated to agents $1$ through $i-1$ in round $1$, regardless of agent $i$'s bid. Thus, ${\ell} \geq 2$.
		
		Let $B \subseteq M$ and $D \subseteq M$ be the sets of goods allocated (to any agent) up to right before a good is allocated to agent $i$ in round $\ell$ in Round-Robin$(\bm{\succ}^\mathrm{b})$ and Round-Robin$({\succ}_i, \bm{\succ}^\mathrm{b}_{-i})$, respectively. Clearly, $|B| = |D| = (\ell -1) n + i -1$. In fact, we claim that in this case the two sets are equal. 
		
		\begin{claim}\label{cl:SetA-submod}
			It holds that $B = D$. Moreover, $\{y_1, \ldots, y_{\ell}\} \subseteq B$.
		\end{claim}

		\begin{proof}[Proof of the claim]
			\renewcommand\qedsymbol{{\small $\boxdot$}}
			We first observe that $v_i(y_{j} \,|\, X^{\ell-1}_-) \ge v_i(y_{\ell} \,|\, X^{\ell-1}_-) > v_i(x_{\ell} \,|\, X^{\ell-1}_-)$, for every $j \in [\ell-1]$, where the first inequality follows from way Algorithm \ref{mech:Devnam} ordered the elements of $Y$. 
			Now consider the execution of Round-Robin$(\bm{\succ}^\mathrm{b})$. Since $x_{\ell}$ was the good allocated to agent $i$ in round $\ell$, $x_{\ell}$ had maximum marginal value for agent $i$ with respect to $X^{\ell-1}_-$ among the available goods. Thus, none of the goods $y_1, \ldots, y_{\ell}$ were available at the time. 
			That is, $y_1, \ldots, y_{\ell}$ were all already allocated to some of the agents (possibly including agent $i$ herself). We conclude that $\{y_1, \ldots, y_{l}\} \subseteq B$.

			Now suppose for a contradiction that $D \neq B$ and consider the execution of Round-Robin$({\succ}_i, \bm{\succ}^\mathrm{b}_{-i})$. Recall that the goods in $B$ are still the $(\ell-1) n+i-1$ most preferable goods for every agent in $N\setminus \{i\}$ according to the profile $({\succ}_i, \bm{\succ}^\mathrm{b}_{-i})$. Therefore, all agents in $N\setminus \{i\}$ will get goods from $B$ allocated to them up to the point when a good is allocated to agent $i$ in round $\ell$, regardless of what ${\succ}_i$ is. If agent $i$ also got only goods from $B$ allocated to her in the first $\ell-1$ rounds of Round-Robin$({\succ}_i, \bm{\succ}^\mathrm{b}_{-i})$, then $D$ would be equal to $B$. Thus, at least one good which is not in $B$ (and thus, not in $\{y_1, \ldots, y_{\ell}\}$) must have been allocated to agent $i$ in the first $\ell-1$ rounds. As a result, at the end of round $\ell-1$, there are at least two goods in $\{y_1, \ldots, y_{\ell}\}$ that have not yet been allocated to $i$. 
			
			However, we claim that up to right before a good is allocated to agent $i$ in round $\ell + 1$, all goods in $B$ (and thus in $\{y_1, \ldots, y_{\ell}\}$ as well) will have been allocated, leaving $i$ with at most $\ell -1$ goods from $\{y_1, \ldots, y_{\ell}\}$ in her final bundle and leading to a contradiction. Indeed, this follows from a simple counting argument. Right before a good is allocated to agent $i$ in round $\ell + 1$, the goods allocated to agents in $N\setminus \{i\}$ are exactly $\ell (n-1) +i-1 \ge (\ell-1) n+i-1 = |B|$. As noted above, agents in $N\setminus \{i\}$ will get goods from $B$ allocated to them as long as they are available. Thus, no goods from $B$, or from $\{y_1, \ldots, y_{\ell}\}$ in particular, remain unallocated right before a good is allocated to agent $i$ in round $\ell + 1$. Therefore, agent $i$ may get at most $\ell-1$ goods from $\{y_1, \ldots, y_{\ell}\}$ (at most $\ell-2$ in the first $\ell-1$ rounds and one in round $\ell$), contradicting the definition of the set $Y$. We conclude that $D = B$.
		\end{proof}
		
		Given the claim, it is now easy to complete the proof. Clearly, in the first $\ell-1$ rounds of Round-Robin$({\succ}_i, \bm{\succ}^\mathrm{b}_{-i})$ at most $\ell-1$ goods from $\{y_1, \ldots, y_{\ell}\}$ have been allocated to agent $i$. However, when it is $i$'s turn in round $\ell$, only goods in $M\setminus D$ are available, by the definition of $D$. By Claim \ref{cl:SetA-submod}, we have $\{y_1, \ldots, y_{l}\} \subseteq D$, and thus there is at least one good $\{y_1, \ldots, y_{\ell}\}$ that is allocated to another agent, which contradicts the definition of $Y$.
	\end{proof}

	We are now ready to state and prove the main result of this section. 
	
	\begin{theorem}\label{thm:RReq}
		When all agents have submodular valuation functions, the bluff profile is a $\frac{1}{2}$-approxi\-mate PNE of Mechanism \ref{alg:MRR}.
		Moreover, this is tight, i.e., for any $\varepsilon >0$, there are instances where the bluff profile is not a $\big(\frac{1}{2} + \varepsilon \big)$-approximate PNE.
	\end{theorem}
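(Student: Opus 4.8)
The plan is to prove, for an arbitrary agent $i$ and an arbitrary unilateral deviation $\succ_i$ away from the bluff profile, that $v_i(X) \ge \tfrac12\,v_i(Y)$, where $X=\{x_1,\ldots,x_k\}$ is $i$'s bundle under $\mathrm{Round\text{-}Robin}(\bm{\succ}^\mathrm{b})$ and $Y=\{y_1,\ldots,y_k\}$ is her bundle under $\mathrm{Round\text{-}Robin}(\succ_i,\bm{\succ}^\mathrm{b}_{-i})$; by the definition of an $\alpha$-approximate best response this is exactly the statement that $\bm{\succ}^\mathrm{b}$ is a $\tfrac12$-approximate PNE. I would index $X$ in allocation order and $Y$ via Algorithm \ref{mech:Devnam}, so that the good-wise domination $v_i(x_j \mid X^{j-1}_-) \ge v_i(y_j \mid X^{j-1}_-)$ furnished by Lemma \ref{lem:DecMansub} is in force for every $j\in[k]$.

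The upper bound is then a short chain combining telescoping, Lemma \ref{lem:DecMansub}, the Nemhauser--Wolsey--Fisher characterization (Theorem \ref{thm:SM}), and submodularity. Telescoping the marginals of $X$ and applying Lemma \ref{lem:DecMansub} termwise gives
\[
v_i(X) = \sum_{j=1}^{k} v_i(x_j \mid X^{j-1}_-) \;\ge\; \sum_{j=1}^{k} v_i(y_j \mid X^{j-1}_-).
\]
Next I would invoke Theorem \ref{thm:SM} with $S=X$ and $T=Y$ to obtain $v_i(Y) \le v_i(X) + \sum_{g\in Y\setminus X} v_i(g\mid X)$; since $v_i(g\mid X)=0$ for $g\in X$, the sum equals $\sum_{j=1}^k v_i(y_j\mid X)$. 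Submodularity applied to $X^{j-1}_-\subseteq X$ yields $v_i(y_j\mid X)\le v_i(y_j\mid X^{j-1}_-)$ for each $j$, so $\sum_j v_i(y_j\mid X) \le \sum_j v_i(y_j\mid X^{j-1}_-)\le v_i(X)$ by the display. Chaining these gives $v_i(Y)\le 2\,v_i(X)$, as desired. The genuinely delicate part of this direction---producing a good-wise comparison between the bluff bundle and the deviation bundle---has already been absorbed into Lemma \ref{lem:DecMansub}, so I expect this assembly to be routine; the only thing to be careful about is the direction of the submodularity inequality (larger conditioning set, smaller marginal).

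For tightness the plan is to exhibit one small instance on which $i$'s bluff bundle has \emph{exactly} half the value of a reachable deviation bundle, so that for every $\varepsilon>0$ the bluff profile fails to be a $(\tfrac12+\varepsilon)$-approximate best response. Take $n=2$ agents and $m=4$ goods $g_1,g_2,g_3,g_4$; let agent $2$ be additive with all goods of equal value, and let agent $1$ have the coverage (hence submodular) valuation $v_1(S)=\mathbbm{1}[S\cap\{g_1,g_3,g_4\}\neq\emptyset]+\mathbbm{1}[g_2\in S]$, so that $g_1,g_3,g_4$ are mutual substitutes (``one color'') while $g_2$ supplies an independent second color. With lexicographic tie-breaking, Algorithm \ref{alg:Grenam} produces the bluff order $g_1\succ^\mathrm{b} g_2\succ^\mathrm{b} g_3\succ^\mathrm{b} g_4$, under which agent $1$ receives $X=\{g_1,g_3\}$ with $v_1(X)=1$; but by bidding $g_2$ first she secures $g_2$ in round $1$ (agent $2$ then takes $g_1$) and a substitute good in round $2$, obtaining $Y=\{g_2,g_3\}$ with $v_1(Y)=2$. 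Hence any $\alpha$ for which $\bm{\succ}^\mathrm{b}$ is an $\alpha$-approximate best response must satisfy $2\alpha\le 1$. The main obstacle on this side is conceptual rather than computational: since greedy never voluntarily picks a redundant good, the \emph{other} agent must be arranged to grab the single complementary good (here $g_2$'s color-partner role), thereby forcing agent $1$'s bluff bundle to collapse onto mutually redundant goods---this is precisely what makes the factor $2$ gap materialize.
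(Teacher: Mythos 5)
Your proof is correct, and it shares the paper's skeleton---both directions hinge on the good-wise domination $v_i(x_j \,|\, X^{j-1}_-) \ge v_i(y_j \,|\, X^{j-1}_-)$ of Lemma \ref{lem:DecMansub} with $Y$ indexed by Algorithm \ref{mech:Devnam}---but each half deviates from the paper in a genuine way. For the upper bound, the paper avoids Theorem \ref{thm:SM} entirely and instead runs a hybrid telescoping through the mixed sets $X_-^{j-1}\cup Y_+^{j+1}$, showing $v_i(X_-^j)-v_i(X_-^{j-1}) \ge v_i(X_-^{j-1}\cup Y_+^j) - v_i(X_-^j\cup Y_+^{j+1})$ and summing to get $v_i(X)\ge v_i(Y)-v_i(X)$; your assembly via Nemhauser--Wolsey--Fisher plus the monotone-marginal inequality $v_i(y_j\,|\,X)\le v_i(y_j\,|\,X^{j-1}_-)$ is an equivalent amount of work and is in fact the exact accounting the paper itself uses later, in Theorems \ref{thm:subeq} and \ref{thm:submod_ef1_n}, so nothing is lost (one pedantic point: for $y_j\in X$ the submodularity step needs $g\notin T$, but there $v_i(y_j\,|\,X)=0$ and the inequality holds by monotonicity). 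For tightness, your construction is genuinely different and arguably cleaner: the paper uses two agents, five goods, and an OXS valuation with perturbations $\varepsilon_1,\varepsilon_2,\varepsilon_3$ chosen per $\varepsilon$, so the bluff value ratio only approaches $\tfrac12$ in the limit; your four-good coverage instance realizes the factor-$2$ gap exactly ($v_1(X)=1$ for $X=\{g_1,g_3\}$ under the bluff order $g_1\succ^\mathrm{b}g_2\succ^\mathrm{b}g_3\succ^\mathrm{b}g_4$, versus $v_1(\{g_2,g_3\})=2$ after bidding $g_2$ first), so a single instance defeats $(\tfrac12+\varepsilon)$ for all $\varepsilon>0$ simultaneously. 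I verified the bluff computation under the lexicographic tie-breaking of Algorithm \ref{alg:Grenam} and the deviation dynamics; both check out.
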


	\begin{proof}
		We are going to use the notation used so far in the section and consider the possible deviation of an arbitrary agent $i$. Like in the statement of Lemma \ref{lem:DecMansub}, $X=\{x_1, \ldots,x_k\}$ is agent $i$'s bundle in Round-Robin$(\bm{\succ}^\mathrm{b})$, with goods  indexed in the order they were allocated, and $Y = \{y_1, y_2, \ldots, y_k\}$ is $i$'s bundle in Round-Robin$({\succ}_i, \bm{\succ}^\mathrm{b}_{-i})$, with goods  indexed by Algorithm \ref{mech:Devnam}.
		Also, recall that $X^j_- = \{x_1, \ldots, x_{j}\}$ and $X^j_+ = \{x_j,  \ldots, x_{k}\}$ (and similarly for $Y^j_-$ and $Y^j_+$). We also use the convention that $Y_+^{k+1} = \emptyset$. For any $j\in [k]$, we have
		\begin{align*}
			v_i(X_-^j) - v_i(X_-^{j-1})  &= v_i (x_j \,|\, X_-^{j-1}) \\[0.5ex] 
			&\geq v_i (y_j \,|\, X_-^{j-1}) \\[0.5ex]
			&\geq v_i (y_j \,|\, X_-^{j-1} \cup Y_+^{j+1}) \\[0.5ex]
			&= v_i (X_-^{j-1} \cup Y_+^{j+1} \cup \{y_j\})- v_i (X_-^{j-1} \cup Y_+^{j+1})\\[0.5ex]
			&= v_i (X_-^{j-1} \cup Y_+^{j})- v_i (X_-^{j-1} \cup Y_+^{j+1})\\[0.5ex]
			&\geq v_i(X_-^{j-1} \cup Y_+^{j})- v_i (X_-^{j} \cup Y_+^{j+1}) \,.
		\end{align*}
		% \allowdisplaybreaks[1]
		The first inequality holds because Lemma \ref{lem:DecMansub} applies on $X$ and $Y$, whereas the second inequality holds because of submodularity. Finally, the last inequality holds since $X_-^{j-1} \subseteq X_-^{j}$ and $v_i(\cdot)$ is non-decreasing, for every $i \in N$. Using these inequalities along with a standard expression of the value of a set as a sum of marginals, we have
		\begin{align*}
			v_i(X)      &= v_i(X_-^k) - v_i(X_-^0) \\[0.5ex]
			&= \sum_{j=1}^{k} \left( v_i(X_-^j) - v_i(X_-^{j-1}) \right) \\[0.5ex]
			&\ge \sum_{j=1}^{k} \left( v_i(X_-^{j-1} \cup Y_+^{j})- v_i (X_-^{j} \cup Y_+^{j+1}) \right) \\[0.5ex]
			&= v_i (X_-^0 \cup Y_+^1) - v_i (X_-^k \cup Y_+^{k+1}) \\[0.5ex] 
			&= v_i(Y) - v_i(X) \,.
		\end{align*}
		Thus, we have $v_i(X) \geq \frac12 \cdot v_i(Y)$, and we conclude that $\bm{\succ}^\mathrm{b}$ is a $\frac{1}{2}$-approximate PNE of Mechanism \ref{alg:MRR}.\medskip
		
		To show that the result is tight, consider an example with two agents and five goods.  
		The valuation function of agent $1$ is additive and defined as follows on the singletons:
		\[	v_1(g_1)= 2 \quad
		v_1(g_2)= 1 \quad
		v_1(g_3)= 1-\varepsilon_1 \quad
		v_1(g_2)= 1-\varepsilon_2 \quad
		v_1(g_5)= 1-\varepsilon_3  \,,
		\]
		where $1 \gg \varepsilon_3>\varepsilon_2>\varepsilon_1 > 0$.

		The valuation function of agent $2$ is OXS\footnote{Roughly speaking, OXS functions generalize unit-demand functions. The set of OXS functions is a strict superset of additive functions and a strict subset of submodular functions. See, \citep{LehmannLN06, Leme17}.} and defined by the maximum matchings in the bipartite graph below, e.g., $v_2(\{g_1, g_2\}) = 2 +1 = 3$ and $v_2(\{g_1, g_4, g_5\}) = 2 + 1 - \varepsilon_2 = 3- \varepsilon_2$. 
		
		\begin{figure}[ht]
			\centering
			\begin{tikzpicture}[y=-0.95cm, minimum size=0.6cm]
				\node [circle, draw] (a) at (0,0) {{\small $g_1$}};
				\node [circle, draw] (b) at (0,1) {{\small $g_2$}};
				\node [circle, draw] (c) at (0,2) {{\small $g_3$}};
				\node [circle, draw] (d) at (0,3) {{\small $g_4$}};
				\node [circle, draw] (e) at (0,4) {{\small $g_5$}};
				
				\node [circle, draw] (a2) at (6,0) {};
				\node [circle, draw] (b2) at (6,1) {};
				\node [circle, draw] (c2) at (6,2) {};
				
				\draw (a) -- node[near start, above] {2} (a2);
				\draw (b) -- node[near start, above] {1} (b2);
				\draw (c) -- node[near start, above] {$1 - \varepsilon_1$} (c2);
				
				\draw (d) to[out=0, in=240] node[pos=0.195, above] {$1 - \varepsilon_2$} (b2);
				\draw (e) to[out=0, in=240] node[pos=0.19, above] {$1 - \varepsilon_3$} (b2);
			\end{tikzpicture}
		\end{figure}

		It is not hard to see that the bluff profile for this instance consists of the following declared ordering by both agents: $g_1>g_2>g_3>g_4>g_5$. The allocation produced by Mechanism \ref{alg:MRR} for the bluff profile is then $A=(A_1,A_2)$, where $A_1= \{g_1,g_3,g_5\}$, and $A_2=\{g_2, g_4\}$. Observe that $v_1(A_1)=4-\varepsilon_1-\varepsilon_3$ and $v_2(A_2) = 1$.
		It is easy to see that there is no profitable deviation for agent $1$, while the maximum value that agent $2$ can attain by deviating is $2-\varepsilon_1-\varepsilon_2$. Agent $2$ achieves this by reporting the preference ranking: $g_3>g_4>g_1>g_2>g_5$ and getting goods $\{g_3, g_4\}$. This implies that for any $\varepsilon >0$ one can chose appropriately small $\varepsilon_1, \varepsilon_2, \varepsilon_3$ so that the bluff profile is not a $\big(\frac{1}{2} + \varepsilon \big)$-approximate PNE.
	\end{proof}

	In Section \ref{sec:b}, we show that every approximate PNE of Mechanism \ref{alg:MRR} results in an approximately \efo allocation. Here, as a warm-up, we start this endeavor with an easy result which holds specifically for the bluff profile (and can be extended to approximate PNE where all agents submit the same preference ranking) but shows a better fairness guarantee than our general Theorem \ref{thm:submod_ef1_n}.

	\begin{theorem}\label{thm:RRguasub}
		When all agents have submodular valuation functions $v_1, \ldots, v_n$, the allocation returned by Round-Robin$(\bm{\succ}^\mathrm{b})$ is $\frac{1}{2}$-\efo  with respect to $v_1, \ldots, v_n$.
		Moreover, this is tight, i.e., for any $\varepsilon >0$, there are instances where this allocation is not $\big(\frac{1}{2} + \varepsilon \big)$-\efo.
	\end{theorem}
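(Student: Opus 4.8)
The plan is to fix two agents $i$ and $j$ and bound $v_i(A_j)$ (or $v_i(A_j \setminus \{g\})$ for a suitable $g$) in terms of $v_i(A_i)$, reusing the telescoping identity from the proof of Theorem \ref{thm:RReq}. Write $A_i = \{x_1, \ldots, x_k\}$ and $A_j = \{z_1, \ldots, z_k\}$, where in each bundle the goods are indexed by the round in which they were allocated, so $x_r = h_{(r-1)n+i}$ and $z_r = h_{(r-1)n+j}$. The crucial observation is that, unlike in Lemma \ref{lem:DecMansub}, no re-ordering as in Algorithm \ref{mech:Devnam} is needed here: since both bundles arise from the \emph{same} run Round-Robin$(\bm{\succ}^\mathrm{b})$, the goods are allocated in the fixed global order $h_1, \ldots, h_m$, and when agent $i$ receives $x_r$ she is handed a good of maximum marginal value with respect to her current bundle $X_-^{r-1} = \{x_1, \ldots, x_{r-1}\}$ among all still-available goods.

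First I would split into two cases according to the relative priority of $i$ and $j$. If $i < j$, then in every round $r$ agent $i$ picks before agent $j$, so $z_r = h_{(r-1)n+j}$ is still available when $x_r = h_{(r-1)n+i}$ is chosen; hence $v_i(x_r \mid X_-^{r-1}) \ge v_i(z_r \mid X_-^{r-1})$ for all $r \in [k]$. If $i > j$, then $z_r$ is already taken when $x_r$ is chosen, but $z_{r+1} = h_{rn+j}$ is still available (as $rn+j > (r-1)n+i$ because $i-j < n$); hence $v_i(x_r \mid X_-^{r-1}) \ge v_i(z_{r+1} \mid X_-^{r-1})$ for all $r \in [k-1]$.

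With these good-wise marginal inequalities in hand, I would run exactly the telescoping chain from the proof of Theorem \ref{thm:RReq}: combining the per-round inequality, submodularity, and monotonicity gives $v_i(X_-^j) - v_i(X_-^{j-1}) \ge v_i(X_-^{j-1} \cup W_+^{j}) - v_i(X_-^{j} \cup W_+^{j+1})$, where $W$ is the relevant comparison bundle ($W = A_j$ with $W_+^{r} = \{z_r, \ldots, z_k\}$ in the case $i < j$, and $W = A_j \setminus \{z_1\}$ with $W_+^{r} = \{z_{r+1}, \ldots, z_k\}$ in the case $i > j$). Summing over $r \in [k]$ telescopes to $v_i(A_i) \ge v_i(W) - v_i(A_i)$, i.e.\ $2\,v_i(A_i) \ge v_i(W)$. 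For $i < j$ this yields the stronger $v_i(A_i) \ge \tfrac12 v_i(A_j)$, and for $i > j$ it yields $v_i(A_i) \ge \tfrac12 v_i(A_j \setminus \{z_1\})$; in both cases the allocation is $\tfrac12$-\efo from $i$'s perspective, establishing the first claim. I expect the only delicate point to be the bookkeeping of the boundary terms in the $i > j$ case, where the comparison bundle has $k-1$ elements and one must check that the $j = k$ summand is dominated trivially via monotonicity.

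For tightness I would reuse the two-agent, five-good OXS instance from the proof of Theorem \ref{thm:RReq}, whose bluff allocation is $A_1 = \{g_1, g_3, g_5\}$ and $A_2 = \{g_2, g_4\}$. Here agent $2$'s envy is the binding one: $v_2(A_2) = 1$, while removing any single good from $A_1$ leaves value at least $v_2(\{g_3, g_5\}) = 2 - \varepsilon_1 - \varepsilon_3$, the smallest such residual being attained by dropping $g_1$. Thus the best ratio agent $2$ can achieve is $1 / (2 - \varepsilon_1 - \varepsilon_3)$, which tends to $\tfrac12$ as $\varepsilon_1, \varepsilon_3 \to 0$; choosing the $\varepsilon_i$ small enough shows the allocation is not $\big(\tfrac12 + \varepsilon\big)$-\efo.
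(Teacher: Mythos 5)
Your proof is correct, but it takes a genuinely different route from the paper's for the main claim. The paper argues by contradiction through stability: assuming $v_i(A_i^\mathrm{b}) < \frac12 v_i(A_j^\mathrm{b}\setminus\{g\})$ for every $g\in A_j^\mathrm{b}$, it lets agent $i$ deviate to a ranking that lists the goods of $A_j^\mathrm{b}$ first, in their allocation order; a round-by-round invariant shows she then receives all of $A_j^\mathrm{b}$ (if $i<j$) or $A_j^\mathrm{b}\setminus\{h_j\}$ plus an extra good (if $j<i$), more than doubling her value and contradicting Theorem \ref{thm:RReq}. You instead prove the bound directly, and your key observation is sound: since Round-Robin$(\bm{\succ}^\mathrm{b})$ reproduces the greedy allocation of Algorithm \ref{alg:Grenam}, when agent $i$ receives $x_r$ the available goods are exactly the suffix $\{h_{(r-1)n+i},\ldots,h_m\}$ and $x_r$ maximizes $v_i(\cdot \,|\, X_-^{r-1})$ over it, which yields the good-wise comparison against $z_r$ (if $i<j$) or $z_{r+1}$ (if $i>j$, using $rn+j>(r-1)n+i$) with no need for the reordering of Algorithm \ref{mech:Devnam}; the telescoping chain from Theorem \ref{thm:RReq} then closes the argument, and your handling of the $r=k$ boundary term in the $i>j$ case (where $W_+^k=W_+^{k+1}=\emptyset$ and the summand is dominated by monotonicity) is exactly right. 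Your route is essentially the specialization to the bluff profile of the paper's later Lemma \ref{lem:RR_ef1_sub}, and it buys two things: the guarantee becomes purely algorithmic, independent of the equilibrium theorem---which is precisely the byproduct the paper notes about using Round-Robin with value oracles to produce $\frac12$-\efo allocations---and it gives the sharper structural fact that agent $i$ is $\frac12$-\ef with \emph{no} good removed toward every lower-priority agent $j>i$, needing to drop only $j$'s first-round good when $j<i$ (the paper's deviation argument implicitly makes the same case distinction via its indicator $\delta$). What the paper's route buys is thematic economy: fairness is derived from stability, reusing Theorem \ref{thm:RReq} as a black box rather than re-running the telescoping. Your tightness argument coincides with the paper's: same two-agent OXS instance, same computation, with $v_2(A_2)=1$ against $v_2(A_1\setminus\{g_1\})=2-\varepsilon_1-\varepsilon_3$ as the binding removal.
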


	\begin{proof}
		In order to obtain a contradiction, suppose that the allocation $(A_1^\mathrm{b}, A_2^\mathrm{b}, \ldots, A_n^\mathrm{b})$ returned by Round-Robin$(\bm{\succ}^\mathrm{b})$ is not $\frac{1}{2}$-\efo. That is, there exist agents $i$ and $j$ such that $v_i(A_i^\mathrm{b}) < 0.5 \cdot v_i(A_j^\mathrm{b} \setminus \{g\})$, for all $g \in A_j^\mathrm{b}$. 
		We are going to show that this allows us to construct a deviation for agent $i$ where she gets value more than $2 v_i(A_i^\mathrm{b})$, contradicting the fact that $\bm{\succ}^\mathrm{b}$ is a $\frac{1}{2}$-approximate PNE. Recall that using the renaming $h_1, h_2, \ldots$ produced by Algorithm \ref{alg:Grenam}, we have  $A_i^\mathrm{b}= \{ h_i, h_{n+i},\dots, h_{(k-1)n+i}\}$ and $A_j^\mathrm{b}= \{ h_j, h_{n+j},\dots, h_{(k-1)n+j}\}$.
		
		Let $\delta$ be the indicator variable of the event $j<i$, i.e., $\delta$ is $1$ if $j<i$ and $0$ otherwise. We will show that it is possible for agent $i$ to get the  set $\{ h_{\delta n+j}, h_{(1+\delta)n+j}, h_{(2 + \delta)n+j}, \dots, h_{(k-1)n+j}\}$, which is either the entire $A_j^\mathrm{b}$ (when $i<j$) or $A_j^\mathrm{b} \setminus \{h_j\}$ (when $j<i$). In particular, let $\succ_i$ be a preference ranking that starts with all goods in $A_j^\mathrm{b}$ in the same order as they were allocated to agent $j$ in Round-Robin$(\bm{\succ}^\mathrm{b})$, followed by everything else, in any order.
		
		Consider the execution of Round-Robin$(\succ_i, \bm{\succ}_{-i}^\mathrm{b})$. The crucial, yet simple, observation (that makes an inductive argument work) is that the first $i-1$ goods $h_1, \ldots, h_{i-1}$ are allocated as before, then good $h_{\delta n+j}$ (rather than $h_i$) is allocated to agent $i$, and after that the $n-1$ top goods for all agents in $N\setminus \{i\}$ according to $\bm{\succ}_{-i}^\mathrm{b}$ are 
		% \[
		$h_i, h_{i+1},\dots, h_{\delta n+j-1}, h_{\delta n+j +1},  \dots, h_{n+i-1}$,
		% \,,\]
		and these are allocated in the next $n-1$ steps of the algorithm. As a result, right before a second good is allocated to agent $i$, the available goods are $h_{n+i}, h_{n+i+1}, \dots, h_m$ exactly as in the execution of Round-Robin$(\bm{\succ}^\mathrm{b})$. 
		
		More generally, right before an $r$-th good is allocated to $i$, her bundle is $\{ h_{\delta n+j}, h_{(1+\delta)n+j}, h_{(2 + \delta)n+j}, \allowbreak \dots, h_{(r-2 + \delta)n+j}\}$, and the available goods are $h_{(r-1)n+i}, h_{(r-1)n+i+1}, \dots, h_m$ (as they were in the execution of Round-Robin$(\bm{\succ}^\mathrm{b})$). Then good $h_{(r-1 + \delta)n+j}$ (rather than $h_{(r-1)n+i}$) is allocated to agent $i$, and after that the $n-1$ top goods for all agents according to $\bm{\succ}_{-i}^\mathrm{b}$ are 
		\[h_{(r-1)n+i}, h_{(r-1)n+i+1},\dots, h_{(r-1 + \delta)n+j - 1}, h_{(r-1 + \delta)n+j + 1},  \dots, h_{rn+i-1} \,,\]
		and they are allocated in the next $n-1$ steps of the algorithm. At the end, agent $i$ gets the entire $A_j^\mathrm{b}$ or $A_j^\mathrm{b} \setminus \{h_j\}$ plus some arbitrary good, depending on whether $i<j$ or $j<i$. In either case, by monotonicity, agent $i$'s value for her bundle is at least $v_i(A_j^\mathrm{b} \setminus \{h_j\}) > 2 v_i(A_i^\mathrm{b})$, where the last inequality follows from our assumption that $(A_1^\mathrm{b}, A_2^\mathrm{b}, \ldots, A_n^\mathrm{b})$ is not $\frac{1}{2}$-\efo. Therefore, by deviating from $\succ^\mathrm{b}$ to $\succ_i$, agent $i$ increases her value by a factor strictly grater than $2$, contradicting Theorem \ref{thm:RReq}.

		\medskip
		
		To show that this factor is tight, we again turn to the example given within the proof of Theorem \ref{thm:RReq}.  
		Recall the allocation produced by Mechanism \ref{alg:MRR} for the bluff profile is $A=(A_1,A_2)$, with $A_1= \{g_1,g_3,g_5\}$ and $A_2=\{g_2, g_4\}$. Observe that agent $1$ is envy-free towards agent $2$ as $v_1(A_1)=4-\varepsilon_1-\varepsilon_3>2-\varepsilon_2=v_1(A_2)$. On the other hand, $v_2(A_2) = 1$, whereas $v_2(A_1) = 4-\varepsilon_1-\varepsilon_3$ and $v_2(A_1 \setminus \{g_1\})=2-\varepsilon_1-\varepsilon_3$. The latter implies that for any $\varepsilon >0$ one can chose appropriately small $\varepsilon_1, \varepsilon_2, \varepsilon_3$ so that the bluff profile does not result in a $\big(\frac{1}{2} + \varepsilon \big)$-\efo allocation with respect to the true valuation functions of the agents. 
	\end{proof}

	%%%%%%%%%%%%%%%%%%%%%%%%%%%%%%%%%%%%
	\section{Fairness properties of PNE}
	\label{sec:b}
	%%%%%%%%%%%%%%%%%%%%%%%%%%%%%%%%%%%%
	
	In Section \ref{subsec:RR}, Proposition \ref{prop:RR_ef1_canc}, we state the fairness guarantees of Round-Robin---viewed as an algorithm---when all agents have cancelable valuation functions. 
	So far, we have not discussed this matter for the submodular case. 
	It is not hard to see, however, that Theorem \ref{thm:RRguasub} and the definition of the bluff profile via Algorithm \ref{alg:Grenam} imply that when we have (value oracles for) the valuation functions, then we can use Round-Robin to algorithmically produce $\frac{1}{2}$-\efo allocations. 
	Using similar arguments, we show next that for any preference profile $\bm{\succ} \,= (\succ_1, \ldots, \succ_n)$ and any  $i\in N$, there is always a response $\succ'_i$ of agent $i$ to $\bm{\succ}_{-i}$, such that the allocation returned by Round-Robin$(\succ'_i, \bm{\succ}_{-i})$ is $\frac{1}{2}$-\efo \emph{from agent $i$'s perspective}. 
	
	Towards this, we first need a variant of Algorithm \ref{alg:Grenam} that considers everyone in $N\setminus \{i\}$ fixed to their report in $\bm{\succ}_{-i}$ and greedily determines a ``good'' response for agent $i$. An intuitive interpretation of what Algorithm \ref{alg:Greedy_for_i} below is doing, can be given if one sees Mechanism \ref{alg:MRR} as a sequential game. Then, given that everyone else stays consistent with $\bm{\succ}_{-i}$, agent $i$ \emph{picks} a good of maximum marginal value every time her turn is up.

	\begin{algorithm}[ht]
		\caption{Greedy response of agent $i$ to $\bm{\succ}_{-i}$ 
			\\ Input: $N$, $M$, $\bm{\succ}_{-i}$, value oracle for $v_i$
		}\label{alg:Greedy_for_i}
		\begin{algorithmic}[1]
			\vspace{2pt}\State $S=M$; $X = \emptyset$
			\For{$j = 1, \dots, m$}
			\State $\ell = (j-1) \!\pmod n + 1$
			\If {$\ell = i$}
			\State $x_{\lceil j/n \rceil} = \displaystyle \argmax_{g \in S} v_i(g \,|\, X)$ \Comment{Ties are broken lexicographically.} 
			\State $X=X\cup \{x_{\lceil j/n \rceil}\}$
			\vspace{1pt}\State $S=S\setminus \{x_{\lceil j/n \rceil}\}$
			\Else
			\State $g = \mathrm{top}(\succ_{\ell}, S)$
			\State $S=S\setminus \{g\}$
			\EndIf
			\EndFor \vspace{-1pt}
			\State \Return $x_1 \succ'_i x_2 \succ'_i \ldots \succ'_i x_k \succ'_i \ldots$ \Comment{Arbitrarily complete $\succ'_i$ with goods in $M\setminus X$.}
		\end{algorithmic}
	\end{algorithm}
	
	Proving the next lemma closely follows the proof of Theorem \ref{thm:RReq} but without the need of an analog of Lemma \ref{lem:DecMansub}, as we get this for free from the way the greedy preference profile $\succ'_i$ is constructed. %\vspace{2pt}
	
	\begin{lemma}\label{lem:RR_ef1_sub}
		Assume that agent $i$ has a submodular valuation function $v_i$. If\, $\succ'_i$ is the ranking returned by Algorithm \ref{alg:Greedy_for_i} when given $N$, $M$, $\bm{\succ}_{-i}$, $v_i$, then the allocation $(A'_1, A'_2, \ldots, A'_n)$ returned by Round-Robin$(\succ'_i, \bm{\succ}_{-i})$ is such that 
		for every $j\in N$, with $A'_j\neq\emptyset$, there exists a good $g\in A'_j$, so that $v_i(A'_i) \geq \frac{1}{2} \cdot v_i(A'_j\setminus \{g\})$.
	\end{lemma}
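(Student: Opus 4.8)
The plan is to mirror the telescoping computation from the proof of Theorem \ref{thm:RReq}, replacing the deviating bundle $Y$ there with the bundle $A'_j$ of a generic agent $j \neq i$ (the case $j=i$ being trivial by monotonicity). Write $A'_i = X = \{x_1, \ldots, x_k\}$, indexed so that $x_r$ is the good agent $i$ receives in round $r$, and write agent $j$'s bundle as $\{z_1, \ldots, z_k\}$, indexed so that $z_r$ is the good $j$ receives in round $r$ (recall $|A'_j|=k$ since $m=kn$). The central point --- and where this proof is easier than that of Theorem \ref{thm:RReq} --- is that the good-wise marginal inequalities come \emph{directly} from the construction of $\succ'_i$ in Algorithm \ref{alg:Greedy_for_i}, bypassing any analog of Lemma \ref{lem:DecMansub}. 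Indeed, running Round-Robin$(\succ'_i, \bm{\succ}_{-i})$ reproduces the greedy run simulated by Algorithm \ref{alg:Greedy_for_i} (the other agents behave identically, and $\succ'_i$ lists $x_1, \ldots, x_k$ first), so $x_r$ is a good of maximum marginal value $v_i(\cdot \,|\, X^{r-1}_-)$ among all goods available at agent $i$'s turn in round $r$.

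First I would split on the relative order of $i$ and $j$. If $i < j$, then in every round $r$ agent $i$ moves before agent $j$, so $z_r$ is still unallocated when agent $i$ picks $x_r$; the greedy choice thus gives $v_i(x_r \,|\, X^{r-1}_-) \ge v_i(z_r \,|\, X^{r-1}_-)$ for all $r \in [k]$. If $j < i$, then agent $j$ has already claimed $z_1, \ldots, z_r$ by the time agent $i$ picks $x_r$ in round $r$, but $z_{r+1}$ (which $j$ only takes in round $r+1$, and which no other agent ever takes) is still available; hence $v_i(x_r \,|\, X^{r-1}_-) \ge v_i(z_{r+1} \,|\, X^{r-1}_-)$ for all $r \in [k-1]$. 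In the first case set $W = A'_j$; in the second set $g = z_1$ and $W = A'_j \setminus \{z_1\}$, relabelled as $\{w_1, \ldots, w_{k-1}\}$ with $w_r = z_{r+1}$. Either way we obtain an indexed set $W = \{w_1, w_2, \ldots\}$ with $v_i(x_r \,|\, X^{r-1}_-) \ge v_i(w_r \,|\, X^{r-1}_-)$ for every valid index $r$, where I write $W^r_+ = \{w_r, w_{r+1}, \ldots\}$ and use the convention that $W^{|W|+1}_+ = \emptyset$.

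With these marginal inequalities in hand, I would run exactly the chain of inequalities from the proof of Theorem \ref{thm:RReq}: for each valid $r$,
\begin{align*}
v_i(X^r_-) - v_i(X^{r-1}_-) &= v_i(x_r \,|\, X^{r-1}_-) \ge v_i(w_r \,|\, X^{r-1}_-) \\
&\ge v_i(w_r \,|\, X^{r-1}_- \cup W^{r+1}_+) \\
&= v_i(X^{r-1}_- \cup W^r_+) - v_i(X^{r-1}_- \cup W^{r+1}_+) \\
&\ge v_i(X^{r-1}_- \cup W^r_+) - v_i(X^r_- \cup W^{r+1}_+),
\end{align*}
using submodularity for the second inequality and monotonicity (via $X^{r-1}_- \subseteq X^r_-$) for the last. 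Summing these over $r$, and noting that in the $j<i$ case the leftover term $v_i(x_k \,|\, X^{k-1}_-) \ge 0$ only helps, the right-hand side telescopes to $v_i(W) - v_i(X^{|W|}_-) \ge v_i(W) - v_i(X)$, so $v_i(X) \ge v_i(W) - v_i(X)$, i.e. $v_i(A'_i) \ge \tfrac{1}{2} v_i(W)$. Since $W$ is either all of $A'_j$ (when $i<j$) or $A'_j$ with the single good $z_1$ removed (when $j<i$), this is precisely the $\tfrac{1}{2}$-\efo guarantee from agent $i$'s perspective. The only genuine obstacle is the availability bookkeeping in the $j<i$ case --- checking that $z_{r+1}$ is unclaimed when agent $i$ makes her round-$r$ pick --- but this is a one-line counting observation rather than the intricate argument behind Lemma \ref{lem:DecMansub}; the telescoping itself is verbatim from Theorem \ref{thm:RReq}.
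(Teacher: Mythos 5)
Your proof is correct and takes essentially the same approach as the paper's: the greedy construction of $\succ'_i$ via Algorithm \ref{alg:Greedy_for_i} directly yields the good-wise marginal inequalities (no analog of Lemma \ref{lem:DecMansub} needed), which are then fed into the telescoping computation of Theorem \ref{thm:RReq}. The only, immaterial, difference is that you case-split on $i<j$ versus $j<i$ and obtain the slightly stronger full $\frac{1}{2}$-\ef bound when $i<j$, whereas the paper handles both cases uniformly by comparing $x_r$ with $y_{r+1}$ and dropping the first allocated good $y_1$ of $A'_j$ regardless of the relative order of $i$ and $j$.
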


	\begin{proof}
		First, it is straightforward to see that $A'_i = X$, as computed in Algorithm \ref{alg:Greedy_for_i}. Indeed, Algorithm \ref{alg:Greedy_for_i} simulates Mechanism \ref{alg:MRR} for all $j \in N\setminus\{i\}$ and iteratively builds $\succ'_i$, so that in every turn of Round-Robin$(\succ'_i, \bm{\succ}_{-i})$ the good allocated to agent $i$ is one of maximum marginal value. As a result, the goods in $A'_i = X =  \{x_1, x_2, \ldots, x_k\}$ are already indexed in the order they are allocated. 
		
		Now consider an arbitrary $j \in N\setminus\{i\}$ and let $A'_j = Y =  \{y_1, y_2, \ldots, y_k\}$, where goods are again indexed in the order they are allocated in Round-Robin$(\succ'_i, \bm{\succ}_{-i})$. 
		Notice that when good $x_r$ is allocated to agent $i$ in round $r$, goods $y_{r+1}, y_{r+2}, \ldots$ are still available and, by construction of $X$, their marginal value with respect to the set $\{x_1, x_2, \ldots, x_{r-1}\}$ is no better than the marginal value of $x_r$. In particular, $v_i(x_r \,|\, \{x_1, \ldots, x_{r-1}\}) \ge v_i(y_{r+1} \,|\, \{x_1, \ldots, x_{r-1}\})$.
		
		Also, recall the use of $X^r_-$, $X^r_+$, $Y^r_-$, $Y^r_+$ notation from the proof of Theorem \ref{thm:RReq}. We will use a similar calculation here as well, but we will omit the first element of $Y$. For any $r\in [k]$, we have
		
		\begin{align*}
			v_i(X_-^r) - v_i(X_-^{r-1})  &= v_i (x_r \,|\, X_-^{r-1}) \\[0.5ex] 
			&\geq v_i (y_{r+1} \,|\, X_-^{r-1}) \\[0.5ex]
			&\geq v_i (y_{r+1} \,|\, X_-^{r-1} \cup Y_+^{r+2}) \\[0.5ex]
			&= v_i (X_-^{r-1} \cup Y_+^{r+2} \cup \{y_{r+1}\})- v_i (X_-^{r-1} \cup Y_+^{r+2})\\[0.5ex]
			&= v_i (X_-^{r-1} \cup Y_+^{r+1})- v_i (X_-^{r-1} \cup Y_+^{r+2})\\[0.5ex]
			&\geq v_i(X_-^{r-1} \cup Y_+^{r+1})- v_i (X_-^{r} \cup Y_+^{r+2}) \,,
		\end{align*}
		where we used the convention that $Y_+^{k+1} =Y_+^{k+2} = \emptyset$.
		The first inequality holds by the construction of $X$ as discussed above, the second inequality follows from submodularity, and the last inequality holds because $v_i(\cdot)$ is non-decreasing. Using these inequalities and a standard expression of the value of a set as a sum of marginals, we have
		\begin{align*}
			v_i(X)      &= v_i(X_-^k) - v_i(X_-^0) \\[0.5ex]
			&= \sum_{r=1}^{k} \left( v_i(X_-^r) - v_i(X_-^{r-1}) \right) \\[0.5ex]
			&\ge \sum_{r=1}^{k} \left( v_i(X_-^{r-1} \cup Y_+^{r+1})- v_i (X_-^{r} \cup Y_+^{r+2}) \right) \\[0.5ex]
			&= v_i (X_-^0 \cup Y_+^2) - v_i (X_-^k \cup Y_+^{k+2}) \\[0.5ex] 
			&= v_i(Y\setminus \{y_1\}) - v_i(X) \,.
		\end{align*}
		Thus, we have $v_i(A'_i) = v_i(X) \geq \frac12 \cdot v_i(Y\setminus \{y_1\}) = \frac12 \cdot v_i(A'_j\setminus \{y_1\})$. 
	\end{proof}

	%%%%%%%%%%%%%%%%%%%%%%%%%%%%%%%%%%%%
	\subsection{The Case of Two Agents}
	\label{subsec:two_agents}
	%%%%%%%%%%%%%%%%%%%%%%%%%%%%%%%%%%%%
	As a warm-up, we begin with the easier case of $n = 2$. Not only the proofs of our main results for submodular and additive functions are much simpler here, but the fairness guarantees are stronger as well.

	\begin{theorem}\label{thm:subeq}
		Let $\alpha \in (0, 1]$. Assume we have a fair division instance with two agents, whose valuation functions $v_1, v_2$ are submodular. Then any allocation that corresponds to a $\alpha$-approximate PNE of the Round-Robin mechanism is $\frac{\alpha}{2}$-\efo with respect to $v_1, v_2$.
	\end{theorem}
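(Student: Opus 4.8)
The plan is to show that for two submodular agents, any $\alpha$-approximate PNE yields a $\frac{\alpha}{2}$-\efo allocation. The key insight is to combine the structural guarantee from Lemma \ref{lem:RR_ef1_sub} with the equilibrium condition. First, I would fix an $\alpha$-approximate PNE $\bm{\succ} = (\succ_1, \succ_2)$ with resulting allocation $(A_1, A_2)$, and argue envy only needs to be bounded in both directions, i.e., from agent $1$ towards agent $2$ and vice versa. By symmetry it suffices to analyze an arbitrary agent $i$ envying the other agent $j$.

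The central step is to invoke Lemma \ref{lem:RR_ef1_sub}: for agent $i$ facing the fixed report $\succ_j$ of the other agent, there exists a greedy best-response-type ranking $\succ'_i$ (produced by Algorithm \ref{alg:Greedy_for_i}) such that the allocation $(A'_i, A'_j)$ returned by $\textrm{Round-Robin}(\succ'_i, \succ_j)$ satisfies $v_i(A'_i) \geq \frac{1}{2} v_i(A'_j \setminus \{g\})$ for some $g \in A'_j$. Crucially, since there are only two agents, the two bundles $A'_i$ and $A'_j$ partition $M$, so $A'_j = M \setminus A'_i$. This means that deviating to $\succ'_i$ gives agent $i$ a bundle $A'_i$ whose value is at least half the value of \emph{the rest of the goods} (minus one item), and that "rest" is exactly what the other agent would get.

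Next I would connect $A'_j$ to the original bundle $A_j$ that agent $j$ receives under the equilibrium profile. Here the two-agent structure is essential: when agent $i$ deviates, agent $j$ still reports $\succ_j$, and whatever goods agent $i$ does \emph{not} take, agent $j$ takes. Since $A'_i$ was constructed greedily to maximize $i$'s marginal gains, and since $A_j$ (agent $j$'s equilibrium bundle) is some partition complement of $A_i$, I would argue that $v_i(A'_i) \geq \frac{1}{2} v_i(A_j \setminus \{g\})$ — the greedy response does at least as well for agent $i$ as is needed relative to what $j$ actually holds. Then applying the $\alpha$-approximate equilibrium condition, $v_i(A_i) \geq \alpha \cdot v_i(A'_i) \geq \frac{\alpha}{2} v_i(A_j \setminus \{g\})$, which is precisely the $\frac{\alpha}{2}$-\efo guarantee from agent $i$'s perspective.

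The main obstacle I anticipate is the middle step: carefully relating $v_i(A'_j \setminus \{g\})$ (the complement of the greedy deviation) to $v_i(A_j \setminus \{g'\})$ (the complement of the actual equilibrium bundle of the opponent). These complements need not be equal, since different reports by agent $i$ lead to different splits. I would handle this by exploiting that in the two-agent Round-Robin the bundle an agent can force on the opponent is essentially fixed once the opponent's ranking $\succ_j$ is pinned down; the greedy response maximizes $i$'s own value, so the leftover set $A'_j$ is in a sense the \emph{worst} set $i$ leaves behind, giving $v_i(A'_j) \geq v_i(M \setminus A_i) = v_i(A_j)$ up to the single-good slack, which suffices. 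Establishing this inequality cleanly — possibly via monotonicity and the greedy maximality of $A'_i$ — is where the real care is needed, but the two-agent partition property makes it tractable.
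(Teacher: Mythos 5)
Your overall architecture---deviate to the greedy response $\succ'_i$ of Algorithm \ref{alg:Greedy_for_i}, then chain the $\alpha$-approximate equilibrium condition $v_i(A_i) \ge \alpha \cdot v_i(A'_i)$---is exactly the paper's route. But the middle step you propose is a genuine gap, and in fact the inequality you write goes the wrong way. You claim that because the greedy response maximizes $i$'s value, the leftover $A'_j$ is the ``worst'' set $i$ can leave behind, so $v_i(A'_j) \ge v_i(M\setminus A_i) = v_i(A_j)$ up to one good. The intuition is backwards: precisely because $A'_i$ greedily grabs the goods most valuable to $i$, its complement $A'_j$ tends to be \emph{less} valuable to $i$ than $A_j$, not more. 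Concretely, take $v_i$ additive with two goods of value $10$ and two of value $\varepsilon$, and let $\succ_j$ rank a value-$10$ good first. The greedy deviation yields $A'_i = \{10, \varepsilon\}$ and $A'_j = \{10, \varepsilon\}$, while the profile $\bm{\succ}$ may give $A_i = \{\varepsilon,\varepsilon\}$, $A_j = \{10,10\}$ (this is consistent with an $\alpha$-approximate PNE for small $\alpha$, which is the regime the theorem must cover). Then $v_i(A'_j) = 10+\varepsilon < 20 = v_i(A_j)$, and worse, the good $g$ removed in Lemma \ref{lem:RR_ef1_sub} is $y_1$, the first good $j$ receives in the deviation run, i.e., the value-$10$ good, so $v_i(A'_j\setminus\{g\}) = \varepsilon$ while $v_i(A_j\setminus\{g'\}) = 10$ for every $g'\in A_j$. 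So no single-good slack rescues the comparison between the two complements, and Lemma \ref{lem:RR_ef1_sub} used as a black box cannot close the proof.

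The paper's fix is to never compare $A'_j$ with $A_j$ at all: it reruns the argument underlying Lemma \ref{lem:RR_ef1_sub} with $Y$ taken to be the \emph{equilibrium} bundle $A_j$ rather than the deviation bundle $A'_j$. Writing $A'_i = \{x_1,\ldots,x_k\}$ and $A_j\setminus A'_i = \{y_{t_1},\ldots,y_{t_\ell}\}$, with all goods indexed by their allocation round in Round-Robin$(\succ'_i,\bm{\succ}_{-i})$ (here $n=2$ guarantees every good of $A_j\setminus A'_i$ goes to $j$ in the deviation run, at distinct rounds), each $y_{t_\lambda}$ with $\lambda\ge 2$ was still available when $x_{t_\lambda-1}$ was greedily picked, so $v_i(x_{t_\lambda-1}\,|\,\{x_1,\ldots,x_{t_\lambda-2}\}) \ge v_i(y_{t_\lambda}\,|\,\{x_1,\ldots,x_{t_\lambda-2}\})$. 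Theorem \ref{thm:SM} then gives $v_i(A_j\setminus\{y_{t_1}\}) \le v_i(A'_i) + \sum_{\lambda=2}^{\ell} v_i(y_{t_\lambda}\,|\,A'_i)$, submodularity lets you replace each conditioning set $A'_i$ by the prefix $\{x_1,\ldots,x_{t_\lambda-2}\}$, and the greedy inequalities dominate the sum by the telescoping sum $\sum_{\lambda} v_i(x_\lambda\,|\,\{x_1,\ldots,x_{\lambda-1}\}) = v_i(A'_i)$, since the indices $t_\lambda - 1$ are distinct. This yields $v_i(A_j\setminus\{y_{t_1}\}) \le 2\,v_i(A'_i) \le \tfrac{2}{\alpha}\,v_i(A_i)$ directly, with the removed good being $y_{t_1}$. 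In short: your instinct that ``the real care is needed'' at the complement-comparison step was right, but the statement you hoped to prove there is false, and the repair requires reopening the greedy argument with $A_j$ in place of $A'_j$, not a monotonicity argument about leftovers.
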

	
	\begin{proof}
		Let $\bm{\succ} \,= (\succ_1, \succ_2)$ be a $\alpha$-approximate PNE of Mechanism \ref{alg:MRR} for a given instance, and let $(A_1, A_2)$ be the allocation returned by Round-Robin$(\bm{\succ})$. Consider one of the two agents; we call this agent $i\in [2]$ and the other agent $j$. 
		We are going to show that $v_i(A_i) \geq \frac{\alpha}{2} \cdot v_i(A_j \setminus \{g\})$ for some good $g\in A_j$. 
		
		Suppose that agent $i$ deviates to $\succ'_i$ produced by Algorithm \ref{alg:Greedy_for_i} when given $\bm{\succ}_{-i} \, = (\succ_j)$ and $v_i$, and let $(A'_1, A'_2)$ be the allocation returned by Round-Robin$(\succ'_i, \bm{\succ}_{-i})$. Let $A'_i = \{x_1, x_2, \ldots, x_k\}$ and $A_j \setminus A'_i=  \{y_{t_1}, y_{t_2}, \ldots, y_{t_{\ell}}\}$, where in both sets goods are indexed by the round in which they were allocated in the run of Round-Robin$(\succ'_i, \bm{\succ}_{-i})$. Note that all indices in $A_j \setminus A'_i$ are distinct exactly because $n=2$ and, thus, all these goods are allocated to agent $j$.
		This indexing guarantees that when $x_{t_{\lambda}-1}$ gets allocated, $y_{t_{\lambda}}$ is still available  for $2\le \lambda \le \ell$ and, thus, 
		% \[
		\begin{equation}\label{eq:greedy_subseq}
			v(x_{t_{\lambda}-1}\,|\,  \{x_1, x_2, \ldots, x_{t_{\lambda} - 2}\}) \ge v(y_{t_{\lambda}}\,|\,  \{x_1, x_2, \ldots, x_{t_{\lambda} - 2}\})\,,
		\end{equation}
		% \]
		by the way $\succ'_i$ is constructed (see also the proof of Lemma \ref{lem:RR_ef1_sub}).
		Using Theorem \ref{thm:SM}, we have
		\begin{align*}
			v_i(A_j \setminus \{y_{t_1}\})    &\leq   v_i(A'_i) + \!\!\sum_{g \in (A_j \setminus \{y_{t_1}\}) \setminus A'_i} \!\!\!\!\! v(g\,|\,A'_i)\\[0.5ex] 
			& =   v_i(A'_i) + \sum_{\lambda = 2}^{\ell} v(y_{t_{\lambda}}\,|\,A'_i)\\[0.5ex]
			&\le v_i(A'_i) + \sum_{\lambda = 2}^{\ell} v(y_{t_{\lambda}}\,|\,  \{x_1, x_2, \ldots, x_{t_{\lambda} - 2}\})\\[0.5ex]
			&\le  v_i(A'_i) + \sum_{\lambda = 2}^{\ell} v(x_{t_{\lambda}-1}\,|\,  \{x_1, x_2, \ldots, x_{t_{\lambda} - 2}\})\\[0.5ex]
			&\le  v_i(A'_i) + \sum_{\lambda = 1}^{k} v(x_{\lambda}\,|\,  \{x_1, x_2, \ldots, x_{\lambda - 1}\})\\[0.5ex]
			&=  v_i(A'_i) + v_i(A'_i)\\[0.5ex]
			&\le  \frac{2}{\alpha} \cdot v_i(A_i) \,,
		\end{align*}
		where the first inequality follows directly from Theorem \ref{thm:SM}, the second one follows from submodularity, the third inequality holds because of \eqref{eq:greedy_subseq}, the fourth one follows from the monotonicity of $v_i$, and the last inequality follows from the fact that $\bm{\succ}$ is a $\alpha$-approximate PNE and thus $v_i(A_i)\ge \alpha\cdot v_i(A'_i)$. We conclude that $(A_1, A_2)$ is $\frac{\alpha}{2}$-\efo with respect to the underlying valuation functions. 
	\end{proof}
	
	For additive valuation functions we can get a slightly stronger fairness guarantee, which we show that is also tight for any $\alpha$, with an even easier proof. Note that this reproduces the result of \citet{AmanatidisBFLLR21} for exact PNE in the case of two agents. 
	
	\begin{theorem}\label{thm:additive_fair_eq}
		Let $\alpha \in (0, 1]$. Assume we have a fair division instance with two agents, whose valuation functions $v_1, v_2$ are additive. Then any allocation that corresponds to a $\alpha$-approximate PNE of the Round-Robin mechanism is $\frac{\alpha}{2-\alpha}$-\efo with respect to $v_1, v_2$. This is tight, i.e., for any $\varepsilon> 0$, there are instances where a $\alpha$-approximate PNE does not correspond to a $(\frac{\alpha}{2-\alpha}+ \varepsilon)$-\efo allocation.
	\end{theorem}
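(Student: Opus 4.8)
The plan is to treat the positive claim and the tightness claim separately. For the upper bound, I would fix an $\alpha$-approximate PNE $\bm{\succ}=(\succ_1,\succ_2)$ with allocation $(A_1,A_2)$, single out one agent $i$ (calling the other $j$), and let $i$ deviate to the greedy response $\succ'_i$ produced by Algorithm \ref{alg:Greedy_for_i}, obtaining $A'_i,A'_j$. Since the valuations are additive, the marginal-value inequalities of Lemma \ref{lem:RR_ef1_sub} collapse to inequalities between singleton values, and the argument splits on who moves first. If $i$ moves first, then in each round the good $i$ greedily takes is at least as valuable (for $v_i$) as the good $j$ takes in that round, so $v_i(A'_i)\ge v_i(A'_j)$ and hence $2v_i(A'_i)\ge v_i(M)$. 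If $j$ moves first, comparing $i$'s round-$r$ pick with $j$'s round-$(r{+}1)$ pick gives $2v_i(A'_i)\ge v_i(M)-v_i(w_1)$, where $w_1$ is $j$'s first pick.

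The observation that sharpens Theorem \ref{thm:subeq} to the additive bound is that $w_1$ can be taken inside $j$'s \emph{equilibrium} bundle $A_j$: when $j$ moves first its round-$1$ choice is $\mathrm{top}(\succ_j,M)$ in both the equilibrium run and the deviation run (all goods are available and $\succ_j$ is unchanged), so $w_1\in A_j$. Thus in either case there is a good $g_0\in A_j$ with $2v_i(A'_i)\ge v_i(M)-v_i(g_0)$. Combining this with the equilibrium inequality $v_i(A_i)\ge \alpha\,v_i(A'_i)$ and additivity $v_i(M)=v_i(A_i)+v_i(A_j)$ yields $v_i(A_i)\ge \tfrac{\alpha}{2}\big(v_i(A_i)+v_i(A_j\setminus\{g_0\})\big)$, and solving for $v_i(A_i)$ gives $v_i(A_i)\ge \tfrac{\alpha}{2-\alpha}\,v_i(A_j\setminus\{g_0\})$, i.e.\ $\tfrac{\alpha}{2-\alpha}$-\efo from $i$'s perspective. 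As $i$ was arbitrary, the allocation is $\tfrac{\alpha}{2-\alpha}$-\efo. Note this same computation shows the first mover is actually $\tfrac{\alpha}{2-\alpha}$-envy-free, which already signals that any tight example must exploit the second mover.

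For tightness I would build a family of instances, indexed by $k$, in which the envious agent is the second mover. Let agent $1$ be indifferent, $v_1(g)=1$ for every good, so that every report is an exact best response for agent $1$ and its incentive constraint is vacuous; this is precisely what makes the construction go through for all $\alpha$. Let there be $k$ ``high'' goods with $v_2=1$ and $k$ ``low'' goods with $v_2=\varepsilon_0$ for tiny $\varepsilon_0$. Agent $1$ reports all highs before all lows (hence contests the highs), while agent $2$ reports a fixed set of $j=\lceil \alpha k/2\rceil$ highs first, then all lows; this makes agent $2$ receive exactly those $j$ highs and agent $1$ the remaining $k-j$ highs. Because agent $1$ grabs a high every round while any remain, agent $2$'s best response can secure at most $\lfloor k/2\rfloor$ highs, so $v_2(A_2)\approx j\ge \alpha\lfloor k/2\rfloor\approx \alpha\cdot\mathrm{bestdev}_2$ and the profile is a genuine $\alpha$-approximate PNE. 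The resulting envy ratio is $\tfrac{j+(k-j)\varepsilon_0}{(k-j-1)+j\varepsilon_0}$, which tends to $\tfrac{\alpha/2}{1-\alpha/2}=\tfrac{\alpha}{2-\alpha}$ as $k\to\infty$ and $\varepsilon_0\to 0$, so for large $k$ the allocation is not $\big(\tfrac{\alpha}{2-\alpha}+\varepsilon\big)$-\efo.

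The main obstacle is the tightness construction, not the upper bound. The subtle point is that a single deviation always lets the second mover grab one of the opponent's most valuable goods, so with few goods the allocation is automatically better than $\tfrac{\alpha}{2-\alpha}$-\efo (with two goods per agent one can only force $\alpha$-\efo); the bound is therefore approached only in the limit of many goods, and one must \emph{simultaneously} (i) keep agent $2$'s equilibrium value above $\alpha\cdot\mathrm{bestdev}_2$ and (ii) keep agent $1$ in equilibrium for every $\alpha$. Making agent $1$ indifferent discharges (ii) uniformly, after which the remaining work is the counting of how many highs each agent obtains in equilibrium versus in agent $2$'s best response, together with verifying that the best good to remove for \efo is a high good.
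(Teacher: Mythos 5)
Your upper-bound argument is essentially the paper's, modulo presentation: the paper also has agent $i$ deviate to a strict version of her true ranking (which, for additive valuations, yields the same bundle as your greedy response from Algorithm \ref{alg:Greedy_for_i}), extracts $2v_i(A'_i)\ge v_i(M)-v_i(g)$ from the \efo guarantee of the deviation run (Proposition \ref{prop:RR_ef1_canc}), and closes with the identical algebra combining $v_i(M)=v_i(A_i)+v_i(A_j)$ with $v_i(A_i)\ge\alpha\, v_i(A'_i)$. You do add one genuine refinement: the paper removes the $v_i$-maximal good $g$ of $A'_j$ and then writes $v_i(A_j\setminus\{g\})=v_i(M\setminus\{g\})-v_i(A_i)$, an equality that implicitly needs $g\in A_j$; your choice of $w_1=\mathrm{top}(\succ_j,M)$, which provably lies in both $A'_j$ and $A_j$ since $j$'s first pick is unaffected by $i$'s deviation, makes that step airtight, and your first-mover/second-mover case split is a correct direct substitute for invoking Proposition \ref{prop:RR_ef1_canc}.

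Where you genuinely diverge is the tightness construction. The paper uses a fixed five-good instance with one agent's values scaled by a parameter $\beta$ (which tunes $\alpha$) and a gap parameter $\delta\to 0$, so the bound is approached with constantly many goods; you instead use an indifferent first mover and a high/low family with $k\to\infty$, approaching the bound through the counting $j=\lceil\alpha k/2\rceil$ versus the best-response cap $\lfloor k/2\rfloor$. Both routes work; yours handles all $\alpha$ uniformly and makes the source of tightness transparent (the second mover can secure at most half the highs, and the equilibrium condition pins her to an $\alpha$ fraction of that), whereas the paper's five-good example shows your side remark---that the bound can only be approached in the many-goods limit---is an artifact of your equal-values family, not a necessity. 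Two details you should pin down: agent 1's report must list the $k-j$ non-designated highs before agent 2's designated ones (otherwise agent 1 takes the designated highs in early rounds and agent 2 falls through to lows, breaking the claimed allocation), and for $\alpha$ near $1$ with $k$ odd the requirement $j\le k-j$ fails and needs a rounding adjustment. Both are cosmetic, since agent 1's indifference gives you full freedom over her report and the argument is asymptotic anyway.
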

	
	\begin{proof}
		Let $\bm{\succ} \,= (\succ_1, \succ_2)$, $A_1$, $A_2$ be as in the proof of Theorem \ref{thm:subeq}, but now consider the deviation of agent $i$ to $\succ'_i$ which is a strict version of her true preference ranking $\succcurlyeq^*_i$. Again, let $(A'_1, A'_2)$ be the allocation returned by Round-Robin$(\succ'_i, \bm{\succ}_{-i})$.
		
		Let $g$ be  good of maximum value in $A'_j$ according to $v_i$. Since $\succ'_i$ is a true preference ranking of agent $i$, according to Proposition \ref{prop:RR_ef1_canc} $(A'_1, A'_2)$ is \efo from the point of view of agent $i$. That is, we have $v_i(A'_i)\ge v_i(A'_j \setminus \{g\})$ and, thus, $v_i(A'_i)\ge \frac{1}{2}\cdot v_i(M \setminus \{g\})$. Therefore, 
		\begin{align*}
			v_i(A_j \setminus \{g\})    &=   v_i(M\setminus \{g\}) - v_i(A_i)\\[0.5ex]
			& \le   2\cdot v_i(A'_i) - v_i(A_i)\\[0.5ex]
			&\le  \frac{2}{\alpha} \cdot v_i(A_i) - v_i(A_i)\\[0.5ex]
			&=  \frac{2 - \alpha}{\alpha} \cdot v_i(A_i) \,,
		\end{align*}
		where the second inequality follows from the fact that $\bm{\succ}$ is a $\alpha$-approximate PNE and thus $v_i(A_i)\ge \alpha\cdot v_i(A'_i)$. We conclude that $(A_1, A_2)$ is $\frac{\alpha}{2  - \alpha}$-\efo with respect to $v_1, v_2$. 
		\medskip

		To see that this guarantee is tight, consider an instance with two agents, and a set of five goods $\{g_1, g_2, \ldots, g_5\}$. In addition, let the valuation functions of the agents to be additive and defined by:%\smallskip
		
		\begin{minipage}{0.5\textwidth}
			\[
			v_1(g_j)=  
			\begin{cases}
				6, & \text{if $j=1$} \\
				3 + \delta, & \text{if $j=2$}  \\
				3, & \text{if $j=3$} \\
				0.5 + \delta, & \text{if $j=4$} \\
				0.5, & \text{if $j=5$}
			\end{cases} 
			\]
		\end{minipage}
		\begin{minipage}{0.31\textwidth}
			\[
			v_2(g_j)=  
			\begin{cases}
				6  \beta, & \text{if $j=1$} \\
				3 \beta + \delta, & \text{if $j=2$}  \\
				3 \beta, & \text{if $j=3$} \\
				0.5 + \delta, & \text{if $j=4$} \\
				0.5, & \text{if $j=5$}
			\end{cases} 
			\]
		\end{minipage}\smallskip
		\\
		where $0.5\gg \delta$, and $\beta>\frac{1}{6}+ \delta$. Now suppose that the agents bid as follows: Agent $1$ bids truthfully (i.e., an ordering $\succ_1$ that is consistent with her true valuation function), while agent $2$ bids $g_5 \succ_2 g_4 \succ_2 g_1 \succ_2 g_2 \succ_2 g_3$. It is easy to confirm that the produced allocation is $A=(A_1, A_2)=(\{g_1, g_2, g_3\}, \{g_4, g_5\})$. Regarding agent 1, she takes her three most desirable goods in this allocation so there is no profitable deviation for her. For the same reason, she is envy-free towards agent 2.
		
		Moving to agent 2, by observing her valuation function, we immediately derive that she is  $\frac{1 +\delta}{6 \beta + \delta}$-\efo towards agent 1. The only thing that remains, is to check how much agent $2$ can improve her utility through deviating. Initially notice that agent $2$ cannot get good $g_1$ regardless of her bid as this good is taken by agent $1$ in round 1. At the same time, it is easy to verify that she cannot get both goods $g_2$ and $g_3$ due to the declared ordering of agent 1. Thus, the best bundle of goods that she can acquire is $\{g_2, g_4\}$ by deviating to the bid: $g_2 \succ'_2 g_4 \succ'_2 g_1 \succ'_2 g_3 \succ'_2 g_5$ and attain a value of $3 \beta + 0.5 +2 \delta$. 
		
		By setting $\alpha = \frac{1+\delta}{3 \beta + 0.5 +2 \delta}$ we trivially have that $(\succ_1, \succ_2)$ is a $\alpha$-approximate PNE. On the other hand, for a given $\varepsilon > 0$, we have $\frac{\alpha}{2-\alpha}+ \varepsilon = \frac{1+\delta}{6 \beta + 3 \delta} + \varepsilon$ which is strictly larger than $\frac{1 +\delta}{6 \beta + \delta}$ for sufficiently small $\delta$. That is, there is a choice of $\delta$ so that the $\alpha$-approximate PNE $(\succ_1, \succ_2)$ is not $\frac{\alpha}{2-\alpha}+ \varepsilon$-\efo.
	\end{proof}

	%%%%%%%%%%%%%%%%%%%%%%%%%%%%%%%%%%%%
	\subsection{The Case of \texorpdfstring{$n$}{n} Agents}
	\label{subsec:n_agents}
	%%%%%%%%%%%%%%%%%%%%%%%%%%%%%%%%%%%%
	
	Looking back at the proofs of Theorems \ref{thm:subeq} and \ref{thm:additive_fair_eq}, the obvious fact that everything not in $A_i$ or $A'_i$ was allocated to agent $j$ played a key role in proving our sharp bounds. Moving to the general case of $n$ agents, there is no reason to expect that we have some control on how the goods are redistributed between agents in $N\setminus \{i\}$ when agent $i$ deviates from an (approximate) equilibrium. Surprisingly, we show that this redistribution does not favor any agent too much from $i$'s perspective when the valuation functions are submodular  or subadditive cancelable (Lemmata \ref{lem:cardinality} and \ref{lem:twoj}).
	Consequently, the main results of this section have similar flavor not only with respect to their statements, but with respect to their proofs as well. 
	
	\begin{theorem}\label{thm:submod_ef1_n}
		Let $\alpha \in (0, 1]$. For instances with submodular valuation functions $\{v_i\}_{i\in N}$, any $\alpha$-approximate PNE of the Round-Robin mechanism is $\frac{\alpha}{3}$-\efo with respect to $\{v_i\}_{i\in N}$.
	\end{theorem}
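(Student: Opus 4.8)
The plan is to reuse the architecture of the two-agent argument (Theorem~\ref{thm:subeq}), losing one extra additive unit because of the uncontrolled redistribution of goods among the agents in $N\setminus\{i\}$. Fix the $\alpha$-approximate PNE $\bm{\succ}$ with allocation $(A_1,\dots,A_n)$ and fix two agents $i,j$. I would let agent $i$ deviate to the greedy response $\succ'_i$ produced by Algorithm~\ref{alg:Greedy_for_i} on input $\bm{\succ}_{-i}$ and $v_i$, and let $(A'_1,\dots,A'_n)$ be the resulting allocation, with $A'_i=\{x_1,\dots,x_k\}$ indexed in the order the goods are picked. Since $\bm{\succ}$ is an $\alpha$-approximate PNE, $v_i(A_i)\ge \alpha\,v_i(A'_i)$, so it suffices to exhibit a good $g\in A_j$ with $v_i(A_j\setminus\{g\})\le 3\,v_i(A'_i)$; this immediately yields $v_i(A_i)\ge \tfrac{\alpha}{3}\,v_i(A_j\setminus\{g\})$, i.e.\ $\tfrac{\alpha}{3}$-\efo from $i$'s perspective, and ranging over all pairs gives the theorem. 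Note that in the two-agent case the analogous target was the sharper $v_i(A_j\setminus\{g\})\le 2\,v_i(A'_i)$, precisely because everything outside $A'_i$ was necessarily held by the single agent $j$.

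For the valuation inequality I would start, exactly as in Theorem~\ref{thm:subeq}, from the Nemhauser characterization (Theorem~\ref{thm:SM}): $v_i(A_j\setminus\{g\})\le v_i(A'_i)+\sum_{h\in (A_j\setminus\{g\})\setminus A'_i} v_i(h\mid A'_i)$. The engine that controls each marginal is the greedy construction of $\succ'_i$: whenever a good $h$ is still available at agent $i$'s turn in round $r$ of Round-Robin$(\succ'_i,\bm{\succ}_{-i})$, the good $x_r$ she actually takes satisfies $v_i(h\mid A'_i)\le v_i(h\mid X^{r-1}_-)\le v_i(x_r\mid X^{r-1}_-)$, the first step by submodularity (since $X^{r-1}_-\subseteq A'_i$) and the second by greediness. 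Hence it is enough to charge the goods of $(A_j\setminus\{g\})\setminus A'_i$ to the rounds of agent $i$ so that each $h$ is charged to a round $r(h)$ at whose $i$-turn $h$ was still available, with every round receiving at most \emph{two} goods. Telescoping $\sum_{r} v_i(x_r\mid X^{r-1}_-)=v_i(A'_i)$ then bounds the sum by $2\,v_i(A'_i)$ and closes the inequality.

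The whole difficulty is thus concentrated in producing this \emph{multiplicity-two} charging, which is exactly what the redistribution Lemmata~\ref{lem:cardinality} and~\ref{lem:twoj} are for. The obstacle, absent when $n=2$, is that once agent $i$ deviates a good of $A_j$ need neither stay with agent $j$ nor go to $i$: the freeing and blocking of goods can push some of $A_j$'s goods onto third parties, so a naive per-round count of blocked goods degrades to a factor $n-1$. I would therefore use the lemmas to establish a Hall-type condition for the desired matching, namely that for every threshold $t$ the number of goods of $A_j$ that have already disappeared by agent $i$'s turn in round $t+1$ of the deviating run is at most $2t+1$; the single excluded good $g$ (which is forced to be agent $j$'s earliest-picked good when $j<i$, mirroring the role of $h_j$ in the proof of Theorem~\ref{thm:RRguasub}) absorbs the surviving off-by-one, after which a greedy left-to-right assignment realizes the multiplicity-two matching. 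Proving this counting bound, i.e.\ that the reshuffling of $A_j$ across $N\setminus\{i\}$ can advance at most two of $j$'s goods per round of $i$ beyond what greediness already captures, is the main technical step; everything else is the telescoping bookkeeping above together with the PNE inequality.
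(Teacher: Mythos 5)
Your proposal is correct and follows essentially the same route as the paper: the same greedy deviation via Algorithm~\ref{alg:Greedy_for_i}, the same Nemhauser-type decomposition via Theorem~\ref{thm:SM}, and the same counting Lemmata~\ref{lem:cardinality} and~\ref{lem:twoj} to bound, for every $t$, the goods of $A_j$ gone by agent $i$'s turn in round $t+1$ by $2t$ plus the one excluded good in $S'_0$, yielding the multiplicity-two charging and the bound $v_i(A_j\setminus\{g\})\le 3\,v_i(A'_i)$. The only difference is presentational: where you invoke a Hall-type condition with round capacities two and an earliest-deadline greedy matching, the paper realizes the identical charging constructively by splitting $A_j\setminus\{y_1\}$ into odd- and even-indexed halves $Y_1,Y_2$, reordering each with Algorithm~\ref{mech:Devnam}, and deriving the per-round marginal comparison by contradiction.
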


	\begin{theorem}\label{thm:cancelable_ef1_n}
		Let $\alpha \in (0, 1]$. For instances with subadditive cancelable valuation functions $\{v_i\}_{i\in N}$, any $\alpha$-approximate PNE of the Round-Robin mechanism is $\frac{\alpha}{2}$-\efo with respect to $\{v_i\}_{i\in N}$.
	\end{theorem}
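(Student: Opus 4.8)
The plan is to fix an $\alpha$-approximate PNE $\bm{\succ}$ with allocation $(A_1,\ldots,A_n)$, fix two agents $i$ and $j$, and exhibit a good $g\in A_j$ with $v_i(A_i)\ge \frac{\alpha}{2}\,v_i(A_j\setminus\{g\})$. As in the two-agent arguments (Theorems~\ref{thm:subeq} and~\ref{thm:additive_fair_eq}), the first move is to let agent $i$ deviate to the greedy response $\succ'_i$ produced by Algorithm~\ref{alg:Greedy_for_i}; for cancelable valuations this coincides with deviating to a strict version of $i$'s truthful ranking, since by Lemma~\ref{lem:cancelable} picking a good of maximum marginal value is the same as picking one of maximum singleton value. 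Writing $(A'_1,\ldots,A'_n)$ for the resulting allocation and $A'_i=\{x_1,\ldots,x_k\}$ for $i$'s bundle indexed by round, the $\alpha$-approximate PNE condition gives $v_i(A_i)\ge \alpha\, v_i(A'_i)$, so it suffices to prove the report-independent inequality $v_i(A_j\setminus\{g\})\le 2\,v_i(A'_i)$.

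Next I would set up a good-wise comparison between $i$'s achievable bundle $A'_i$ and the \emph{original} bundle $A_j=\{a_1,\ldots,a_k\}$, indexed by the round in which $j$ received each good in the run of $\bm{\succ}$. The greedy construction of $\succ'_i$ guarantees that whenever $i$ takes $x_r$ at her turn in round $r$, every good still available is $v_i$-dominated by $x_r$ as a singleton. The heart of the matter is therefore a counting argument showing that, despite $i$'s deviation redistributing the goods among $N\setminus\{i\}$ uncontrollably, sufficiently many goods of $A_j$ survive until $i$'s turns: this is exactly the content of Lemmata~\ref{lem:cardinality} and~\ref{lem:twoj}. Using that the number of goods allocated before $i$'s turn in any round is the same $(r-1)n+i-1$ regardless of the reports, one argues that at $i$'s turn in round $r$ some good of $A_j$ with bounded index offset is still unallocated, yielding $v_i(x_r)\ge v_i(a_{\sigma(r)})$ for an appropriate assignment $\sigma$.

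Because agents other than $j$ may grab some goods of $A_j$ once $i$ deviates, a single monotone chain from $\{x_1,\ldots,x_k\}$ cannot dominate all of $A_j$; instead the comparison splits $A_j\setminus\{g\}$, for a suitable $g$, into two subsequences $P_1$ and $P_2$, each of which is good-wise dominated by a sub-multiset of $A'_i$. Applying Lemma~\ref{lem:cans_ord} to each chain gives $v_i(P_1)\le v_i(A'_i)$ and $v_i(P_2)\le v_i(A'_i)$, and then \emph{subadditivity} combines them into $v_i(A_j\setminus\{g\})\le v_i(P_1)+v_i(P_2)\le 2\,v_i(A'_i)$. This is precisely where the ``subadditive'' hypothesis enters, and it explains why the cancelable bound $\alpha/2$ improves on the submodular bound $\alpha/3$ of Theorem~\ref{thm:submod_ef1_n}: cancelability turns each good-wise domination into a bundle inequality with no loss, so only the factor $2$ coming from the two chains survives. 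Combining with $v_i(A_i)\ge \alpha\,v_i(A'_i)$ then finishes the argument.

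I expect the genuine obstacle to be the second step, namely the redistribution control encapsulated in Lemmata~\ref{lem:cardinality} and~\ref{lem:twoj}. Unlike the $n=2$ case, where everything outside $A_i$ is forced into a single bundle, here an arbitrary deviation by $i$ can scatter the goods of $A_j$ across all other agents; pinning down how many of them remain reachable at $i$'s successive turns, and organizing the survivors into just two dominated chains, is the crux. Everything surrounding it---the $\alpha$-PNE reduction, the greedy-domination property of $\succ'_i$, the passage from good-wise comparisons to bundle values via Lemma~\ref{lem:cans_ord}, and the final subadditive combination---should be routine once that combinatorial bookkeeping is in place.
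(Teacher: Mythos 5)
Your proposal is correct and follows the paper's proof essentially step for step: deviating agent $i$ to (a strict version of) her truthful singleton ranking, reducing via the $\alpha$-approximate PNE condition to showing $v_i(A_j\setminus\{g\})\le 2\,v_i(A'_i)$, invoking the counting Lemmata~\ref{lem:cardinality} and~\ref{lem:twoj}, splitting $A_j\setminus\{g\}$ (with $g=y_1$) into two good-wise dominated chains, applying Lemma~\ref{lem:cans_ord} to each via cancelability, and combining with subadditivity---which is exactly why the factor is $2$ rather than the submodular $3$. The one detail to fix when writing it out: the paper indexes the goods of $A_j$ by the order in which they are allocated in the \emph{deviated} run Round-Robin$(\succ'_i,\bm{\succ}_{-i})$, not by $j$'s rounds in the original run, since it is that ordering that lets the odd/even split inherit the bound $|Y_\ell\cap(S'_{r-1}\setminus S'_0)|\le r-1$ from Lemma~\ref{lem:twoj}.
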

	
	As the proofs of both theorems have the same general structure and share Lemmata \ref{lem:cardinality} and \ref{lem:twoj}, we begin with some common wording and notation, consistent with our proofs for two agents.
	Given any instance, we use $\bm{\succ} \,= (\succ_1, \ldots, \succ_n)$ for an arbitrary $\alpha$-approximate PNE of Mechanism \ref{alg:MRR}. We then consider the deviation of some agent $i$ to a preference ranking $\succ'_i$; in the submodular case $\succ'_i$ is the output of Algorithm \ref{alg:Greedy_for_i} when given $\bm{\succ}_{-i}$ and $v_i$, whereas in the cancelable case $\succ'_i$ is a strict version of $i$'s true preference ranking $\succcurlyeq^*_i$. We use $(A_1, \ldots, A_n)$ and $(A'_1, \ldots, A'_n)$ to denote the allocations returned by Round-Robin$(\bm{\succ})$ and Round-Robin$(\succ'_i, \bm{\succ}_{-i})$, respectively.

	In order to show that $(A_1, \ldots, A_n)$ as $\frac{\alpha}{\kappa}$-\efo from agent $i$'s perspective (where $\kappa$ is $3$ for submodular and $2$ for cancelable functions), we use the stronger \efo guarantees that $(A'_1, \ldots, A'_n)$ has from her perspective. To this end, we use $h_r^{\ell}$ to denote the good that was allocated to an agent $\ell\in N$ in round $r$ of Round-Robin$(\succ'_i, \bm{\succ}_{-i})$. In particular, $A'_i = \{h^i_1, h^i_2, \ldots, h^i_k\}$; recall that $k=m/n$. Further, given that we have fixed agent $i$, we use $S_r$ and $S'_r$, for $0\le r\le k-1$, to denote the set of goods that had been allocated up to right before a good was allocated to $i$ in round $r+1$ of Round-Robin$(\bm{\succ})$ and Round-Robin$(\succ'_i, \bm{\succ}_{-i})$, respectively. 
	That is, for $0\le r\le k-1$, $S_r$ and $S'_r$ contain the goods allocated in steps $1$ through  $rn+i-1$ of Round-Robin$(\bm{\succ})$ and Round-Robin$(\succ'_i, \bm{\succ}_{-i})$, respectively.

	For the next technical lemma we assume that the valuation functions are either submodular or cancelable and, in each case, we use the corresponding $\succ'_i$ as described above.
	
	\begin{lemma}\label{lem:cardinality}
		For any $r\in [k]$, right before an $r$-th good is allocated to agent $i$ in Round-Robin$(\bm{\succ})$, there are at most $r-1$ goods from $S'_{r-1}$ that are still unallocated, i.e.,  $\left|S'_{r-1}\setminus S_{r-1}\right| \leq r-1$.
	\end{lemma}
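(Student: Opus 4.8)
The plan is to prove the stronger, purely combinatorial fact that the set of goods on which the two runs disagree only grows at the steps where agent $i$ herself acts. Write $P_t$ and $P'_t$ for the sets of goods allocated during the first $t$ steps of Round-Robin$(\bm{\succ})$ and Round-Robin$(\succ'_i,\bm{\succ}_{-i})$, respectively; both have size $t$, and with $T=(r-1)n+i-1$ we have $S_{r-1}=P_T$ and $S'_{r-1}=P'_T$. Since $|P_T|=|P'_T|$, the symmetric difference is balanced, so $|S'_{r-1}\setminus S_{r-1}|=|P'_T\setminus P_T|=|P_T\setminus P'_T|=|\Delta_T|$, where $\Delta_t:=P_t\setminus P'_t$, and it suffices to bound $|\Delta_T|$. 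Both runs execute exactly the same schedule of (round, agent) pairs and differ only in the good picked whenever it is agent $i$'s turn; among the first $T$ steps agent $i$ acts exactly $r-1$ times (in rounds $1,\ldots,r-1$). I would show by induction on $t$ that $|\Delta_t|$ is at most the number of $i$-steps among the first $t$ steps, which for $t=T$ yields $|\Delta_T|\le r-1$.

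For the inductive step, let $g_t$ and $g'_t$ be the goods picked at step $t$ in the two runs. Since membership in $P_t$ (resp.\ $P'_t$) changes only for $g_t$ (resp.\ $g'_t$), and $g_t\notin\Delta_{t-1}$ because $g_t$ is freshly allocated in the first run, we get the update $\Delta_t=(\Delta_{t-1}\setminus\{g'_t\})\cup(\{g_t\}\text{ if }g_t\notin P'_t)$, hence $|\Delta_t|\le|\Delta_{t-1}|-\mathbf{1}[g'_t\in\Delta_{t-1}]+\mathbf{1}[g_t\notin P'_t]$. On a step where agent $i$ acts this gives the trivial bound $|\Delta_t|\le|\Delta_{t-1}|+1$, which accounts for the at most $r-1$ unit increases.

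The crux, and the step I expect to be the main obstacle, is to show that $|\Delta_t|\le|\Delta_{t-1}|$ on every step $t$ at which some agent $\ell\ne i$ acts, i.e.\ that such steps never increase the disagreement. Here both runs use the \emph{same} order $\succ_\ell$, which I would exploit as follows: suppose $g_t\notin P'_t$, so the newly added $g_t$ enters $\Delta_t$. Then $g_t$ was still available in the second run at step $t$ (as $g_t\notin P'_{t-1}$), yet agent $\ell$ chose $g'_t\ne g_t$ there, so $g'_t\succ_\ell g_t$. But $g_t$ is the $\succ_\ell$-maximal available good in the first run, so $g'_t$ cannot have been available there, i.e.\ $g'_t\in P_{t-1}$; and $g'_t\notin P'_{t-1}$ since it is freshly picked in the second run. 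Therefore $g'_t\in P_{t-1}\setminus P'_{t-1}=\Delta_{t-1}$, so the addition of $g_t$ is compensated by the removal of $g'_t$ and $|\Delta_t|\le|\Delta_{t-1}|$. Combining the two cases, $|\Delta_t|$ increases only on the $r-1$ steps of agent $i$ and starts from $|\Delta_0|=0$, giving $|\Delta_T|\le r-1$ and hence the lemma. Finally, note that this argument uses nothing about the valuations beyond the mechanics of Round-Robin, so it is valid for an \emph{arbitrary} deviation $\succ'_i$; the specific choices of $\succ'_i$ (greedy in the submodular case, truthful in the cancelable case) play no role in this lemma.
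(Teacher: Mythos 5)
Your proof is correct, and it rests on the same key insight as the paper's: a non-deviating agent $\ell$ uses the identical ranking $\succ_\ell$ in both runs, so her steps can only transfer a discrepancy, never create one. Your exchange step (if $g_t$ newly enters $\Delta_t$, then $g'_t \succ_\ell g_t$ forces $g'_t \in P_{t-1}\setminus P'_{t-1} = \Delta_{t-1}$) is the mirror image of the paper's observation that any good $g = h^j_{\bar{r}}$ left unallocated in Round-Robin$(\bm{\succ})$ must have been displaced by agent $j$ taking some already-counted good instead. The packaging differs, though: the paper inducts on rounds, bounding $|S'_r \setminus S_r|$ directly by arguing that among the $n$ goods in $S'_r\setminus S'_{r-1}$ only $h^i_r$ can enlarge the deficit, whereas you induct on single steps, track the oppositely oriented difference $\Delta_t = P_t\setminus P'_t$, and recover the lemma via $|P_T| = |P'_T|$, which equates the two halves of the symmetric difference --- a conversion the paper never needs since it works with $S'_{r-1}\setminus S_{r-1}$ itself. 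Your finer granularity buys a cleaner invariant (``$|\Delta_t|$ is at most the number of $i$-steps among the first $t$ steps''), makes the base case $S_0 = S'_0$ fall out of the induction rather than be argued separately, and makes explicit that the lemma holds for an \emph{arbitrary} deviation $\succ'_i$ --- which is also true of the paper's argument, but only implicitly, as its proof likewise never uses the greedy or truthful structure of $\succ'_i$. The paper's round-level formulation, in exchange, keeps the notation ($S_r$, $S'_r$, $h^j_r$) aligned with what Lemma \ref{lem:twoj} and the main theorems consume downstream.
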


	\begin{proof}
		We will prove the statement using induction on $r$.
		For $r = 1$, it is straightforward that $S_0 = S'_0$, as the preference  rankings of agents $1$ through $i-1$ are the same in the two runs of the mechanism and, thus, the first goods allocated to them are exactly the same.
		
		Now suppose that the statement is true for every round up to round $r$; we will show that it is true for round $r+1$ as well. 
		Initially, observe that if the number of unallocated goods from $S'_{r-1}$ is $r-1$ right before a good is allocated to agent $i$ in round $r$, it will trivially be at most $r-1$ right before a good is allocated to agent $i$ in round $r+1$ (as the number of unallocated goods from any set cannot increase as the allocation progresses). That is, $\left|S'_{r-1}\setminus S_{r}\right| \leq r-1$. 
		
		Notice that the goods that might cause $S'_{r}\setminus S_{r}$ to increase are the elements of 
		\[S'_{r}\setminus S'_{r-1} = \{h^i_r, h^{i+1}_r, \ldots, h^{n}_r, h^1_{r+1}, h^2_{r+1}, \ldots, h^{i-1}_{r+1}\} \,,\]
		and suppose that there are $\lambda$ goods therein which are still unallocated right before a good is allocated to agent $i$ in round $r+1$ of Round-Robin$(\bm{\succ})$. Clearly, if $\lambda \le 1$, we are done. 
		So, assume that $\lambda \ge 2$. This means that there are $\lambda - 1\ge 1$ unallocated goods in $(S'_{r}\setminus S'_{r-1}) \setminus\{h^i_r\}$. 
		Let $g$ be one of these goods and let $j$ be the agent to  whom $g$ was given, i.e., $g = h^{j}_{\bar{r}}$, where $\bar{r} = r$, if $j>i$, and  $\bar{r} = r+1$, if $j<i$. 
		In either case, notice that according to $\succ_j$ the good $g$ is better than any good in $M\setminus S'_r$ or else it would not have been allocated to $j$ at round $\bar{r}$ of Round-Robin$(\succ'_i, \bm{\succ}_{-i})$ when everything in $M\setminus S'_r$ is still available. 
		We claim that $g$ does not increase the number of elements in $S'_{r}\setminus S_{r}$. Indeed, given that $g$ was available during step $(\bar{r}-1)n+j$ of Round-Robin$(\bm{\succ})$ and that $j$'s declared preference ranking is still $\succ_j$, the only possibility is that during that step one of the unallocated goods from $S'_{r-1} \cup \{h^i_r, h^{i+1}_r, \ldots, h^{j-1}_{\bar{r}}\}$ was allocated to $j$ instead.
		
		Therefore, the only good out of the $\lambda$ candidate goods of $S'_{r}\setminus S'_{r-1}$ which might count towards the number of elements in $S'_{r}\setminus S_{r}$ is $h^i_r$. We conclude that $S'_{r}\setminus S_{r} \le (r-1) +1 = r$.
	\end{proof}

	Lemma \ref{lem:cardinality} is global, illustrating that the sets $S_r$ and $S'_r$ cannot differ in more than a $1/n$-th of their elements.  The next lemma shows that no agent 
	can accumulate too many goods from $S'_r$, for any $0\le r\le k-1$.
	Again, we assume that the valuation functions are either submodular or cancelable and, in each case, the appropriate $\succ'_i$ is used as discussed after the statements of Theorems \ref{thm:subeq} and \ref{thm:additive_fair_eq}. Note that $S'_0$ in the lemma's statement contains exactly these goods which we will exclude when showing the \efo guarantee for our two theorems.

	\begin{lemma}\label{lem:twoj}
		For any $r\in [k]$ and any $j \in N$, agent $j$ gets at most $2 (r-1)$ goods from $S'_{r-1} \setminus S'_0$ in the allocation $(A_1, \ldots, A_n)$ returned by Round-Robin$(\bm{\succ})$, i.e., $|A_j \cap (S'_{r-1}\setminus S'_0)| \le 2 (r-1)$.
	\end{lemma}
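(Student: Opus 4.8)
The plan is to bound $|A_j \cap (S'_{r-1}\setminus S'_0)|$ by splitting the goods of this set according to whether, in the run Round-Robin$(\bm{\succ})$, they were allocated \emph{before} or \emph{after} the moment agent $i$ receives her $r$-th good---that is, according to membership in $S_{r-1}$. Formally, I would write
\[
A_j \cap (S'_{r-1}\setminus S'_0) \;=\; \underbrace{\big(A_j \cap (S'_{r-1}\setminus S'_0)\cap S_{r-1}\big)}_{\text{``early'' goods}} \;\cup\; \underbrace{\big(A_j \cap (S'_{r-1}\setminus S'_0)\setminus S_{r-1}\big)}_{\text{``late'' goods}}\,,
\]
and bound each of these two disjoint parts by $r-1$, for a total of $2(r-1)$.

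The late part is the easy one and is exactly where Lemma \ref{lem:cardinality} does the work: any good counted here lies in $S'_{r-1}$ but not in $S_{r-1}$, so the whole part is contained in $S'_{r-1}\setminus S_{r-1}$, which has size at most $r-1$ by Lemma \ref{lem:cardinality}. Hence there are at most $r-1$ ``late'' goods, independently of $j$.

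For the early part, the argument is purely combinatorial and amounts to counting how many goods agent $j$ can receive within the prefix $S_{r-1}$ of Round-Robin$(\bm{\succ})$ (steps $1$ through $(r-1)n+i-1$), after discarding $S'_0$. Since in each run agent $j$ receives exactly one good per round, I would distinguish cases by the position of $j$ relative to $i$: if $j>i$ or $j=i$, then $j$'s step in round $r$ falls at or after step $(r-1)n+i$, so $j$ contributes only its round-$1$ through round-$(r-1)$ goods to $S_{r-1}$, i.e.\ at most $r-1$ goods; if $j<i$, then $j$'s round-$r$ step precedes agent $i$'s, so $j$ contributes $r$ goods to $S_{r-1}$. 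In this last case I would invoke the exclusion of $S'_0$: recall that $S'_0=S_0$ (the base case of Lemma \ref{lem:cardinality}, since agents $1,\dots,i-1$ report the same rankings in both runs), and this set consists precisely of the round-$1$ goods of agents $1,\dots,i-1$; thus for $j<i$ exactly one of $j$'s $r$ early goods---its round-$1$ good---lies in $S'_0$ and is removed, again leaving at most $r-1$. Summing the two parts yields $|A_j\cap(S'_{r-1}\setminus S'_0)|\le 2(r-1)$, as claimed.

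I expect the main subtlety to be precisely this off-by-one: a naive count gives $r$ early goods for agents $j<i$ and hence only the weaker bound $2r-1$, so it is the deliberate exclusion of $S'_0$ that trims the round-$1$ good and recovers the tight $2(r-1)$. Everything else is bookkeeping about which rounds' picks land in the prefix $S_{r-1}$. Notably, the valuation functions never enter the argument---the statement is a combinatorial fact relating the two allocation sequences, with the submodular versus subadditive-cancelable distinction only determining which $\succ'_i$ (and hence which instance of Lemma \ref{lem:cardinality}) is in force.
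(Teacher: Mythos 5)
Your proposal is correct and follows essentially the same route as the paper's proof: the same split of $A_j\cap(S'_{r-1}\setminus S'_0)$ into goods allocated before versus after agent $i$'s $r$-th pick in Round-Robin$(\bm{\succ})$, with Lemma \ref{lem:cardinality} bounding the late part by $|S'_{r-1}\setminus S_{r-1}|\le r-1$ and a per-round count (with the exclusion of $S'_0=S_0$ absorbing the round-$1$ good of agents $j<i$) bounding the early part by $r-1$. The paper merely compresses your case analysis on $j$ versus $i$ into the single observation that, ignoring goods allocated before $i$'s first pick, every agent receives exactly $r-1$ goods by the relevant step.
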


	\begin{proof}
		Fix an $r\in [k]$ and a $j\in N$. Consider the end of step $(r-1)n +i -1$ of Round-Robin$(\bm{\succ})$, i.e., right before an $r$-th good is allocated to agent $i$. Ignoring all the goods allocated before $i$ got her first good, agent $j$ has received exactly $r-1$ goods up to this point. As a result, the number of goods allocated to $j$ from $S'_{r-1}\setminus S'_0$ at this point is at most $r-1$. 
		
		At the same time, the number of goods from $S'_{r-1}\setminus S'_0$ that might end up in $A_j$ in any future steps of Round-Robin$(\bm{\succ})$ are at most as many as the goods from $S'_{r-1}$ that are still unallocated at the end of step $(r-1)n +i -1$. The latter, by Lemma \ref{lem:cardinality}, are also at most $r-1$.
		
		From these two observations, we have that the final bundle $A_j$ of agent $j$ may contain at most $2 (r-1)$ goods from $S'_{r-1} \setminus S'_0$.
	\end{proof}
	
	With Lemma \ref{lem:twoj} at hand, we are now ready to prove Theorems \ref{thm:submod_ef1_n} and \ref{thm:cancelable_ef1_n};

	\begin{proof}[Proof of Theorem \ref{thm:submod_ef1_n}]
		We, of course, adopt the notation that has been used throughout this section, focusing on an arbitrary agent $i\in N$ and assuming that her deviation $\succ'_i$ has been the output of Algorithm \ref{alg:Greedy_for_i} with input $\bm{\succ}_{-i}$ and $v_i$. In particular, $(A_1, \ldots, A_n)$ and $(A'_1, \ldots, A'_n)$ are the allocations returned by Round-Robin$(\bm{\succ})$ and Round-Robin$(\succ'_i, \bm{\succ}_{-i})$, respectively.
		
		Consider another agent $j\in N\setminus\{i\}$.
		Let $A'_i = \{x_1, x_2, \ldots, x_k\}$ and $A_j= \{y_{1}, y_{2}, \ldots, y_{k}\}$, where in both sets goods are indexed in the order in which they were allocated in the run of Round-Robin$(\succ'_i, \bm{\succ}_{-i})$. For $A'_i$, this means that $x_{r}$ was allocated in round $r$ for all $r\in [k]$. For $A_j$, this indexing guarantees that 
		for every $0\le \ell < r \le k-1$, the goods in $A_j \cap (S'_{\ell}   \setminus S'_{\ell-1})$ all have smaller indices than the goods in $A_j \cap (S'_{r}   \setminus S'_{r-1})$ (where we use the convention that $S'_{-1} = \emptyset$). We further partition $A_j\setminus \{y_{1}\}$ to
		$Y_1=\{y^1_1,\ldots, y^1_{\tau_1}\}$ and $Y_2= \{y^2_1,\ldots, y^2_{\tau_2}\}$ which contain the goods of $A_j\setminus \{y_{1}\}$  with odd and even indices, respectively, and are both renamed according to Algorithm  \ref{mech:Devnam} with inputs $A'_i$, $Y_{1}$, $v_i$, and $A'_i$, $Y_{2}$, $v_i$, respectively. Clearly, $\tau_1 = \lfloor\frac{k-1}{2}\rfloor$ and $\tau_2 = \lceil\frac{k-1}{2}\rceil$.

		By Lemma \ref{lem:twoj}, we have that $A_j$ contains at most $2 (r-1)$ goods from $S'_{r-1} \setminus S'_0$, for any $r\in [k]$. The original ordering $y_{1}, y_{2}, \ldots$ of the goods in $A_j$ and the way $A_j\setminus \{y_{1}\}$  was partitioned into $Y_{1}$ and $Y_{2}$ imply that 
		$\left| |Y_{1} \cap (S'_{r-1}\setminus S'_0)| - |Y_{2} \cap (S'_{r-1}\setminus S'_0)| \right| \le 1$
		and, thus, each of $Y_{1}$ and $Y_{2}$ contains at most $r-1$ goods from $S'_{r-1} \setminus S'_0$.
		
		We also claim that, for $\ell\in\{1, 2\}$ and $r\in [\tau_{\ell}]$, we have 
		\begin{equation}\label{eq:odd-even_subseq}
			v_i(x_r \,|\, \{x_1, \ldots, x_{r-1}\}) \ge v_i(y^{\ell}_{r} \,|\, \{x_1, \ldots, x_{r-1}\})\,.
		\end{equation}
		Suppose not. That is, there are $\ell\in\{1, 2\}$ and $r\in [\tau_{\ell}]$ so that 
		\eqref{eq:odd-even_subseq} is violated. Note that, by the way Algorithm \ref{mech:Devnam} ordered the elements of $Y_1$ and $Y_2$, this implies
		\[v_i(x_r \,|\, \{x_1, \ldots, x_{r-1}\}) < v_i(y^{\ell}_{r} \,|\, \{x_1, \ldots, x_{r-1}\}) \le v_i(y^{\ell}_{t} \,|\, \{x_1, \ldots, x_{r-1}\})\,,\]
		for all $t \in [r]$. Since $x_{r}$ was the good allocated to agent $i$ at step $(r-1)n +i$ of Round-Robin$(\succ'_i, \bm{\succ}_{-i})$, $x_{r}$ had maximum marginal value for $i$ with respect to $\{x_1, \ldots, x_{r-1}\}$ among the available goods. Thus, none of the goods $y^{\ell}_1, \ldots, y^{\ell}_{r}$ were available at the time, i.e.,  $y^{\ell}_1, \ldots, y^{\ell}_{r} \in S'_{r-1}$. Given that the only good of $A_j$ that could possibly be in $S'_0 = S_0$ was $y_{1}$ which is not in $Y_1 \cup Y_2$. Therefore, $y^{\ell}_1, \ldots, y^{\ell}_{r} \in S'_{r-1}\setminus S'_0$, which contradicts the fact that $|Y_{\ell} \cap (S'_{r-1}\setminus S'_0)| \le r-1$. We conclude that \eqref{eq:odd-even_subseq} holds for all $\ell\in\{1, 2\}$ and $r\in [\tau_{\ell}]$.
		
		We are now ready to apply Theorem \ref{thm:SM} to bound the value of $A_j\setminus \{y_{1}\}$. We have
		\begin{align*}
			v_i(A_j \setminus \{y_{1}\})    &\leq   v_i(A'_i) + \!\!\sum_{g \in (A_j \setminus \{y_{1}\}) \setminus A'_i} \!\!\!\!\! v(g\,|\,A'_i)\\[0.5ex]
			& =   v_i(A'_i) + \!\sum_{g \in Y_1 \setminus A'_i} \!\! v(g\,|\,A'_i) + \!\sum_{g \in Y_2 \setminus A'_i}   \!\! v(g\,|\,A'_i)\\[0.5ex]
			& =  v_i(A'_i) + \sum_{\ell = 1}^{\tau_1} v(y^1_{{\ell}}\,|\,A'_i) + \sum_{\ell = 1}^{\tau_2} v(y^2_{{\ell}}\,|\,A'_i)\\[0.5ex]
			&\le  v_i(A'_i) + \sum_{\ell = 1}^{\tau_1} v(y^1_{{\ell}}\,|\,  \{x_1, \ldots, x_{\ell - 1}\}) + \sum_{\ell = 1}^{\tau_2} v(y^2_{{\ell}}\,|\,  \{x_1, \ldots, x_{\ell - 1}\}) \\[0.5ex]
			&\le  v_i(A'_i) + \sum_{\ell = 1}^{\tau_1} v(x_{{\ell}}\,|\,  \{x_1, \ldots, x_{\ell - 1}\}) + \sum_{\ell = 1}^{\tau_2} v(x_{{\ell}}\,|\,  \{x_1, \ldots, x_{\ell - 1}\}) \\[0.5ex]
			&\le  v_i(A'_i) + 2 \cdot \!\sum_{\ell = 1}^{k} v(x_{\ell}\,|\,  \{x_1, x_2, \ldots, x_{\ell - 1}\})\\[0.5ex]
			&=  v_i(A'_i) + 2 \cdot v_i(A'_i)\\[0.5ex]
			&\le  \frac{3}{\alpha} \cdot v_i(A_i) \,,
		\end{align*}
		where the first inequality follows directly from Theorem \ref{thm:SM}, the second one follows from submodularity, the third inequality holds because of \eqref{eq:odd-even_subseq}, the fourth one follows from the monotonicity of $v_i$, and the last inequality follows from the fact that $\bm{\succ}$ is a $\alpha$-approximate PNE and thus $v_i(A_i)\ge \alpha\cdot v_i(A'_i)$. We conclude that $(A_1, A_2, \ldots, A_n)$ is $\frac{\alpha}{3}$-\efo with respect to the underlying valuation functions. 
	\end{proof}
	
	\begin{proof}[Proof of Theorem \ref{thm:cancelable_ef1_n}] 
		Note that in the proof of Theorem \ref{thm:subeq}, the submodularity of $v_i$ is not used until the final bounding of $A_j\setminus \{y_{1}\}$. Up to that point, the proof here is essentially identical (the only difference being that now $\succ'_i$ is a strict version of $i$'s true preference ranking $\succcurlyeq^*_i$ but this does not change any of the arguments). In particular, for $A'_i = \{x_1, x_2, \ldots, x_k\}$, $A_j= \{y_{1}, y_{2}, \ldots, y_{k}\}$,  
		$Y_1=\{y^1_1,\ldots, y^1_{\tau_1}\}$, and $Y_2= \{y^2_1,\ldots, y^2_{\tau_2}\}$, like in the proof of Theorem \ref{thm:subeq}, we still have \eqref{eq:odd-even_subseq}, for any $\ell\in\{1, 2\}$ and $r\in [\tau_{\ell}]$, i.e., 
		$v_i(x_r \,|\, \{x_1, \ldots, x_{r-1}\}) \ge v_i(y^{\ell}_{r} \,|\, \{x_1, \ldots, x_{r-1}\})$.
		
		Notice that \eqref{eq:odd-even_subseq} can be rewritten as $v_i(\{x_1, \ldots, x_{r-1}, x_r\}) \ge v_i(\{x_1, \ldots, x_{r-1}, y^{\ell}_{r}\})$.
		Since $v_1$ is cancelable, the latter implies that $v_i(x_r) \ge v_i(y^{\ell}_{r})$, for $\ell\in\{1, 2\}$ and $r\in [\tau_{\ell}]$. Now we apply Lemma \ref{lem:cans_ord} to get $v_i(\{x_1, x_2, \ldots, x_{\tau_{\ell}}\}) \ge v_i(Y_{\ell})$, for $\ell\in\{1, 2\}$.
		At this point, we can easily bound the value of $A_j\setminus \{y_{1}\}$. We have
		\begin{align*}
			v_i(A_j \setminus \{y_{1}\})    & =    v_i(Y_1 \cup Y_2)\\[0.5ex]
			& \le    v_i(Y_1) + v_i(Y_2) \\[0.5ex]
			& \le v_i(\{x_1, x_2, \ldots, x_{\tau_{1}}\}) + v_i(\{x_1, x_2, \ldots, x_{\tau_{2}}\})\\[0.5ex]
			&\le  v_i(A'_i) + v_i(A'_i) \\[0.5ex]
			&\le  \frac{2}{\alpha} \cdot v_i(A_i) \,,
		\end{align*}
		where the first inequality follows from subadditivity, the third one follows from the monotonicity of $v_i$, and the last inequality follows from the fact that $\bm{\succ}$ is a $\alpha$-approximate PNE. We conclude that $(A_1, \ldots, A_n)$ is $\frac{\alpha}{2}$-\efo with respect to the underlying valuation functions. 
	\end{proof}

	The ${\alpha}/({2-\alpha})$ upper bound of Theorem \ref{thm:additive_fair_eq} for the additive case applies to both submodular and subadditive cancelable valuation functions, leaving a very small gap for the latter. For the submodular case, we improve this upper bound to ${\alpha}/{2}$.

	\begin{proposition}\label{thm:oxs_lower_bound}
		Let $\alpha, \varepsilon \in (0, 1]$. For instances with submodular valuation functions $\{v_i\}_{i\in N}$, a $\alpha$-approximate PNE of the Round-Robin mechanism may not be $(\frac{\alpha}{2} +\varepsilon)$-\efo with respect to $\{v_i\}_{i\in N}$.
	\end{proposition}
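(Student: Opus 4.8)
The plan is to exhibit, for each target pair $(\alpha,\varepsilon)$, an explicit instance with submodular (in fact matching-based, OXS) valuations together with a reported profile $\bm{\succ}$ that is an $\alpha$-approximate PNE whose allocation is not $(\frac\alpha2+\varepsilon)$-\efo. Since Theorem \ref{thm:additive_fair_eq} already yields the weaker bound $\frac{\alpha}{2-\alpha}$, and $\frac{\alpha}{2-\alpha}>\frac\alpha2$, the whole point is to force \emph{more} envy than any additive instance can, which is exactly what non-additivity should buy us. Concretely, I would fix a deviating agent $i$ and an envied agent $j$ and aim for an allocation $(A_1,\dots,A_n)$ in which $v_i(A_i)=\alpha\cdot v_i(A'_i)$ for agent $i$'s best response $A'_i$, while $\min_{g\in A_j} v_i(A_j\setminus\{g\})=2\,v_i(A'_i)$; these two identities immediately give an \efo factor of $\alpha/2$ from $i$'s perspective.

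The valuation of the critical agent $i$ would be defined through a bipartite ``goods-to-slots'' graph, so that $v_i(S)$ is the value of a maximum-weight matching of $S$ into the slots; submodularity is then automatic (alternatively it can be checked via Theorem \ref{thm:SM}). The design goals for this graph are threefold: (a) agent $j$'s bundle $A_j$ should match into many distinct slots, so that $v_i(A_j\setminus\{g\})$ stays essentially at its full value after deleting any single good (taking the number of prize goods large, or doubling up slots, makes the \efo relaxation almost useless); (b) agent $i$'s own bundle $A_i$ should consist of goods that are \emph{redundant} in the presence of the goods she can actually grab, so that it contributes no ``bonus'' to any deviation, yet has standalone value $\alpha\,v_i(A'_i)>0$; and (c) agent $i$'s best response should recover only \emph{half} of $v_i(A_j\setminus\{g\})$.

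The essential difficulty, and the reason the construction cannot use only two agents, is reconciling (b) and (c). With $n=2$ the deviating agent picks second in every round and therefore has complete freedom to choose \emph{which} half of the prize slots she secures; she can always avoid the slots occupied by her own redundant goods and thereby reactivate them, which inflates the best-response value and pushes the attainable factor back toward the additive benchmark $\frac{\alpha}{2-\alpha}$. To defeat this I would use $n\ge 3$ agents and route the prize goods that $i$ fails to secure to a third party rather than back to $j$: this is precisely the tight regime of Lemma \ref{lem:twoj}, where agent $j$ hoards $2(r-1)$ of the goods of $S'_{r-1}$ (the $r-1$ she collects before $i$'s turn plus the $r-1$ still-unallocated ones guaranteed by Lemma \ref{lem:cardinality}), i.e.\ twice what $i$ can obtain. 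Making that counting bound tight is what produces the factor $2$ relating $v_i(A_j)$ to $v_i(A'_i)$, and simultaneously leaves no free slot for $i$'s own goods to exploit when she deviates.

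Having fixed the instance, the verification would proceed in the expected order: first confirm submodularity of every $v_\ell$; then check that $\bm{\succ}$ is an $\alpha$-approximate PNE, the delicate part being to show that \emph{no} report of agent $i$ yields value exceeding $\tfrac1\alpha v_i(A_i)$ (equivalently, that her best response recovers only half of $v_i(A_j\setminus\{g\})$), together with the easy observation that the remaining agents are already best responding; and finally compute $v_i(A_i)/\min_g v_i(A_j\setminus\{g\})$ and let the auxiliary parameters (e.g.\ the number of prize goods and a vanishing weight gap $\delta$) tend to their limits to drive this ratio to $\alpha/2$, so that for small enough $\delta$ the allocation fails to be $(\frac\alpha2+\varepsilon)$-\efo. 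I expect the equilibrium certification---pinning down the exact best-deviation value over \emph{all} reports, uniformly for every $\alpha\in(0,1]$---to be the main obstacle; the fairness computation itself should be routine once the matching structure is in place.
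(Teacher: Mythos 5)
Your blueprint points in the same direction as the paper's actual construction---an OXS deviator whose valuation is a maximum-weight matching into slots, several additive ``blocker'' agents bidding truthfully so that the deviator can secure only half of the envied prize goods, and a limit over vanishing perturbations that drives the ratio to $\alpha/2$ while $\alpha$ itself is set \emph{a posteriori} as the realized best-response ratio. But as it stands the proposal has a genuine gap: it never exhibits the instance, and it explicitly defers the equilibrium certification, which \emph{is} the proof. Your design goals (a)--(c) are desiderata, not constructions, and reconciling (b) with (c)---a bundle for agent $i$ that is redundant under every deviation yet has positive standalone value, while no report recovers more than half of $v_i(A_j\setminus\{g\})$---is exactly where all the content lies. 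The paper resolves this concretely with $n=4$ agents and nine goods: agent $4$'s OXS graph makes her allocated goods ($g_6$, $g_7$) share slots with the two prize goods ($g_4$, $g_5$, worth $2\beta$ and $2\beta-\varepsilon_4$) held by the envied agent, and the three truthful additive agents' rankings guarantee that in \emph{every} deviation at most one prize good survives to agent $4$'s second turn. The certification is then a finite, six-case analysis according to which good agent $4$ grabs in round $1$, yielding a best-deviation value of exactly $2\beta+\varepsilon_1$ against $v_4(A_4)=1+\varepsilon_1$ and $v_4(A_1\setminus\{g_1\})=4\beta-\varepsilon_4$; setting $\alpha=\frac{1+\varepsilon_1}{2\beta+\varepsilon_1}$ and letting the $\varepsilon_i$ vanish finishes the argument. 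Without such an explicit instance and deviation analysis, your claimed identities $v_i(A_i)=\alpha\, v_i(A'_i)$ and $\min_g v_i(A_j\setminus\{g\})=2\,v_i(A'_i)$ are targets, not established facts.

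Two smaller points. First, the uniformity-over-$\alpha$ obstacle you flag as the main difficulty largely dissolves once the construction is in place: a single weight parameter ($\beta$ in the paper) sweeps the achievable $\alpha$ through $(0,1)$, and the PNE claim is trivial by \emph{defining} $\alpha$ as the realized ratio---no separate argument uniform in $\alpha$ is needed. Second, your assertion that two agents cannot work (because the deviator could reactivate her redundant goods and be pushed back toward $\frac{\alpha}{2-\alpha}$) is plausible heuristics consistent with the paper's choice of $n=4$, but it is not proved, and nothing in the paper establishes it either; you should present it as motivation rather than as a lemma. Note also that your ``many prize goods'' idea in (a) is unnecessary: the paper removes the single \emph{most} valuable good ($v_4(g_1)=5\beta$) and still retains $4\beta-\varepsilon_4$ in the remainder, so two prize goods suffice.
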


	\begin{proof}
		We construct an instance with four agents and nine goods, i.e., $N=[4]$ and $M=\{g_1, g_2, \ldots, g_9 \}$. Let $1\gg  \varepsilon_1>\varepsilon_2>\varepsilon_3>\varepsilon_4>\varepsilon_5>\varepsilon_6$ and $\beta >({1+\varepsilon_4})/{2}$. The first three agents have additive valuation functions, defined as follows:
		
		\begin{minipage}{0.31\textwidth}
			\[
			v_1(g_j)=  
			\begin{cases}
				5, & \text{if $j=1$} \\
				\varepsilon_5, & \text{if $j=2$}  \\
				\varepsilon_6, & \text{if $j=3$} \\
				1, & \text{if $j=4$} \\
				2, & \text{if $j=5$} \\
				\varepsilon_1, & \text{if $j=6$} \\
				\varepsilon_2, & \text{if $j=7$}  \\
				\varepsilon_3, & \text{if $j=8$} \\
				\varepsilon_4, & \text{if $j=9$} 
			\end{cases} 
			\]
		\end{minipage}
		\begin{minipage}{0.31\textwidth}
			\[
			v_2(g_j)=  
			\begin{cases}
				\varepsilon_5, & \text{if $j=1$} \\
				5, & \text{if $j=2$}  \\
				\varepsilon_6, & \text{if $j=3$} \\
				1, & \text{if $j=4$} \\
				\varepsilon_1, & \text{if $j=5$} \\
				\varepsilon_2, & \text{if $j=6$} \\
				2, & \text{if $j=7$}  \\
				\varepsilon_3, & \text{if $j=8$} \\
				\varepsilon_4, & \text{if $j=9$} 
			\end{cases} 
			\]
		\end{minipage}
		\begin{minipage}{0.31\textwidth}
			\[
			v_3(g_j)=  
			\begin{cases}
				\varepsilon_5, & \text{if $j=1$} \\
				\varepsilon_6, & \text{if $j=2$}  \\
				5, & \text{if $j=3$} \\
				\varepsilon_1, & \text{if $j=4$} \\
				\varepsilon_2, & \text{if $j=5$} \\
				2, & \text{if $j=6$} \\
				\varepsilon_3, & \text{if $j=7$}  \\
				\varepsilon_4, & \text{if $j=8$} \\
				1, & \text{if $j=9$}. 
			\end{cases} 
			\]
		\end{minipage}
		\medskip
		
		Agent $4$ has an OXS (and, thus, submodular) valuation function that is defined by the maximum weight matchings in the bipartite graph below.

		\begin{figure}[ht]
			\centering
			\begin{tikzpicture}[y=-0.95cm, minimum size=0.6cm]
				\node [circle, draw] (a) at (0,0) {\small {$g_1$}};
				\node [circle, draw] (b) at (0,1) {\small {$g_2$}};
				\node [circle, draw] (c) at (0,2) {\small {$g_3$}};
				\node [circle, draw] (d) at (0,3) {\small {$g_4$}};
				\node [circle, draw] (e) at (0,4) {\small {$g_5$}};
				\node [circle, draw] (f) at (0,5.4) {\small {$g_6$}};
				\node [circle, draw] (g) at (0,6.4) {\small {$g_7$}};
				\node [circle, draw] (h) at (0,7.4) {\small {$g_8$}};
				\node [circle, draw] (i) at (0,8.4) {\small {$g_9$}};
				
				\node [circle, draw] (a2) at (6,0) {};
				\node [circle, draw] (b2) at (6,1) {};
				\node [circle, draw] (c2) at (6,2) {};
				\node [circle, draw] (d2) at (6,3) {};
				\node [circle, draw] (e2) at (6,4) {};
				\node [circle, draw] (f2) at (6,7.4) {};
				\node [circle, draw] (g2) at (6,8.4) {};
				
				\draw (a) -- node[near start, above] {$5 \cdot \beta$} (a2);
				\draw (b) -- node[near start, above] {$4 \cdot \beta$} (b2);
				\draw (c) -- node[near start, above] {$3 \cdot \beta$} (c2);
				\draw (d) -- node[near start, above] {$2 \cdot \beta$} (d2);
				\draw (e) -- node[near start, above] {$2 \cdot \beta - \varepsilon_4$} (e2);
				
				\draw (f) -- node[near start, above] {1} (d2);
				\draw (g) -- node[near start, above] {$1 - \varepsilon_3$} (e2);
				
				\draw (h) -- node[near start, above] {$\varepsilon_1$} (f2);
				\draw (i) -- node[near start, above] {$\varepsilon_2$} (g2);
			\end{tikzpicture}
		\end{figure}
		
		Now consider a bidding profile where the first three agents bid truthfully (i.e., they bid the strict preference rankings $\succ^*_1, \succ^*_2, \succ^*_3$ which are consistent with $v_, v_2, v_3$), while the fourth agent bids the preference ranking $\succ_4$:  $g_3 \succ_4 g_6 \succ_4 g_8 \succ_4 g_1 \succ_4 g_2 \succ_4 g_4 \succ_4 g_5 \succ_4 g_7 \succ_4 g_9$. It is easy to confirm that the produced allocation is $(A_1, A_2, A_3, A_4)=\{\{g_1,g_4,g_5\},\{g_2,g_7\},\{g_3, g_9\}, \{g_6,g_8\}\}$.

		We first examine the first three agents. Agents $1$ and $2$ get their most valuable goods in this allocation something that implies that there is no profitable deviation for them. For the same reason they are also envy-free towards the other agents. Regarding agent $3$, the only bundle that improves her utility is
		$\{g_3, g_6\}$. However, there is no bid that she can report and get these two goods. The reason for this is that if she does not get good $g_3$ in round 1 of Mechanism \ref{alg:MRR} (by not declaring it as her best good among the available ones), then $g_3$ is lost to agent $4$. If, on the other hand, she gets good $g_3$ in round 1 (by declaring it as her best good among the available ones), then good $g_6$ is lost to agent $4$. Therefore, there is no profitable deviation for her. Finally, it is easy to see that she is also envy-free towards the other agents.
		
		Moving to agent $4$, we have that
		\[	v_4(A_i)=  
		\begin{cases}
			v_4(g_1) + 4 \beta -\varepsilon_4, & \text{if $i=1$} \\
			v_4(g_2) + 1 -\varepsilon_3, & \text{if $i=2$}  \\
			v_4(g_3) + \varepsilon_2, & \text{if $j=3$} \\
			1 + \varepsilon_1, & \text{if $j=4$}, 
		\end{cases} 
		\]
		where $g_1, g_2, g_3$ are the most valuable goods from sets $A_1, A_2, A_3$, respectively, according to agent $4$. Therefore, $v_4(A_1\setminus \{g_1\})>v_4(A_2\setminus \{g_2\})>v_4(A_3\setminus \{g_3\})$, and by comparing $v_4(A_4)$ with $v_4(A_1\setminus\{g_1\})$ we get that agent $4$ is  $\frac{1 + \varepsilon_1}{4 \beta -\varepsilon_4}$-\efo towards agent 1. The only thing that remains is to explore the possible deviations of agent 4. Initially, notice that regardless of what agent $4$ declares, she cannot get goods $g_1, g_2, g_3$ as these are taken in round 1 by the agents that precede her. With that in mind, we will examine what is the best attainable value through deviating, based on what she gets in round 1. Take note that she can get any goods from $\{g_4, g_5, \ldots, g_9\}$ in round 1 as they are available when her turn comes:
		
		\begin{itemize}
			\item\textbf{ Agent $4$ gets good $g_4$ in round 1.} Based on the reported preferences $\succ^*_1, \succ^*_2, \succ^*_3$ of the other agents, in round 2 we have the following: Good $g_5$ is lost to agent 1, good $g_7$ is lost to agent 2, and good $g_6$ to agent 3. Therefore, only goods $g_8$ and $g_9$ remain available for agent 4, and she can get only one of them. Thus, the maximum attainable value for her is $2 \beta +\varepsilon_1$.
			\item \textbf{Agent $4$ gets good $g_5$ in round 1.} In that case, based on the declaration of the rest of the agents, in round 2 we have the following: Good $g_4$ is lost to agent 1, good $g_7$ is lost to agent 2, and good $g_6$ to agent 3. Therefore, only goods $g_8$ and $g_9$ remain available for agent 4, and once more she can get only one of them. Thus, the maximum attainable value for her is $2 \beta-\varepsilon_4 +\varepsilon_1$.
			\item \textbf{Agent $4$ gets good $g_6$ in round 1.} Based on the reported preferences $\succ^*_1, \succ^*_2, \succ^*_3$ of the other agents, in round 2 we have the following: Good $g_5$ is lost to agent 1, good $g_7$ is lost to agent 2, and good $g_9$ to agent 3. Therefore, only goods $g_4$ and $g_9$ remain available for agent 4. Now observe that $v_4(g_4, g_6)= 2  \beta$ (as this is the value of the maximum matching), while $v_4(g_9, g_6)= 1+ \varepsilon_2$. Thus, the maximum attainable value for her is $2 \beta$.
			\item \textbf{Agent $4$ gets good $g_7$ in round 1.} Based on the reported preferences $\succ^*_1, \succ^*_2, \succ^*_3$ of the other agents, in round 2 we have the following: Good $g_5$ is lost to agent 1, good $g_4$ is lost to agent 2, and good $g_6$ to agent 3. Therefore, only goods $g_8$ and $g_9$ remain available for agent 4, and once more she can get only one of them. Thus, the maximum attainable value for her is $1-\varepsilon_3 +\varepsilon_1$.
			\item \textbf{Agent $4$ gets good $g_8$ in round 1.} Based on the reported preferences $\succ^*_1, \succ^*_2, \succ^*_3$ of the other agents, in round 2 we have the following: Good $g_5$ is lost to agent 1, good $g_7$ is lost to agent 2, and good $g_6$ to agent 3. Therefore, only goods $g_4$ and $g_9$ remain available for agent 4, and once more she can get only one of them. Thus, the maximum attainable value for her is $2  \beta + \varepsilon_1$.
			\item \textbf{Agent $4$ gets good $g_9$ in round 1.} In that case, based on the declaration of the rest of the agents, in round 2 we have the following: Good $g_5$ is lost to agent 1, good $g_7$ is lost to agent 2, and good $g_6$ to agent 3. Therefore, only goods $g_4$ and $g_8$ remain available for agent 4, and once more she can get only one of them. Thus, the maximum attainable value for her is $2  \beta +\varepsilon_2$.
			
		\end{itemize}
		
		From the above discussion we get that the maximum value that agent $4$ can attain through a deviation is $2 \cdot \beta + \varepsilon_1$. At the same time $v_4 (A_4)= 1+\varepsilon_1$. 
		By setting $\alpha = \frac{1 +\varepsilon_1}{2\cdot \beta +\varepsilon_1}$ we trivially have that $(\succ_1, \succ_2)$ is a $\alpha$-approximate PNE.
		On the other hand, for a given $\varepsilon > 0$, we have that $\frac{1 +\varepsilon_1}{2\cdot \beta +\varepsilon_1} + \varepsilon$ is strictly larger than $\frac{1 + \varepsilon_1}{4 \beta -\varepsilon_4}$ for sufficiently small $\varepsilon_1$. That is, there is a choice of $\varepsilon_1, \ldots, \varepsilon_6$ so that the $\alpha$-approximate PNE $(\succ^*_1, \succ^*_2, \succ^*_3, \succ_4)$ is not $\frac{\alpha}{2}+ \varepsilon$-\efo.
	\end{proof}

	%%%%%%%%%%%%%%%%%%%%%%%%%%%%%%%%%%%%
	\section{Discussion and Future Directions}
	\label{sec:discussion}
	%%%%%%%%%%%%%%%%%%%%%%%%%%%%%%%%%%%%
	
	In this work we studied the existence and fairness guarantees of the approximate pure Nash equilibria of the Round-Robin mechanism for agents with cancelable and submodular valuation functions. In both cases, 
	we generalized the surprising connection between the stable states of the mechanism and its fairness properties, a connection that was only known for exact equilibria and additive valuation functions. 
	For the function classes considered, we provide tight or almost tight bounds, thus giving a complete picture of the strengths and the limitations of the Round-Robin mechanism for these scenarios. 
	There are several interesting related directions, some of which we discuss below.
	
	An obvious first direction is
	to explore function classes beyond the ones studied here, with XOS or subadditive functions being prominent candidates. Since our results heavily rely on the properties of cancelable and submodular functions, it is likely that different approaches are needed for this endeavour.
	As we mention in the introduction, a second interesting direction, related to this one, is the study of the stability and fairness properties of variants of the Round-Robin mechanism that allow the agents to be more expressive. 
	Analyzing mechanisms that take as an input value oracles seems to be highly non-trivial, and although some of our results might transfer in this setting, we suspect that, in general, strong impossibility results hold regarding the fairness guarantees of approximate PNE.
	
	Finally, although here we focused on Round-Robin and \efo, most fair division algorithms  have not been considered in the strategic setting. One promising such algorithm, which is both fundamental in a number of  variants of the problem and simple enough, is the Envy-Cycle-Elimination algorithm of \citet{LMMS04} which is known to compute \efo allocations for general non-decreasing valuation functions. An appealing alternative here is studying the existence of equilibria of approximation algorithms for \mms allocations. An impoertant advantage in this case is that once the existence of an approximate PNE is shown, the corresponding \mms guarantee comes for free (see also the related discussion in Remark 2.9 of \citet{AmanatidisBFLLR21}).

	%\newpage
	
	%%%%%%%%%%%%%%%%%%%%%%%%%%%%%%%%%%%%%%%%%%%%%%%%%%%%%%%%%%%%%%%%%%%%%%%%
	% Bibliography
	\bibliographystyle{my-plainnat}
	\bibliography{fairdivrefs}
	%%%%%%%%%%%%%%%%%%%%%%%%%%%%%%%%%%%%%%%%%%%%%%%%%%%%%%%%%%%%%%%%%%%%%%%%

\end{document}